\newtheorem{theorem}{Theorem}[section]
\newtheorem{lemma}[theorem]{Lemma}
\newtheorem{definition}[theorem]{Definition}
\newtheorem{conjecture}[theorem]{Conjecture}
\newcommand{\wt}{\widetilde}
\newcommand{\eps}{\epsilon}
\newcommand{\R}{\mathbb{R}}
\renewcommand{\varepsilon}{\epsilon}
\renewcommand{\tilde}{\wt}
\renewcommand{\eps}{\epsilon}
\newcommand{\Hc}{\mathcal{H}_{\mathsf{cv}}}
\newcommand{\He}{\mathcal{H}_{\mathsf{est}}}
\newcommand{\sn}{\mathsf{n}}
\newcommand{\sm}{\mathsf{m}}
\DeclareMathOperator*{\E}{{\mathbb{E}}}
\DeclareMathOperator{\OPT}{OPT}
\DeclareMathOperator{\AVG}{AVG}
\DeclareMathOperator{\poly}{poly}
\newcommand*{\RN}[1]{\expandafter\@slowromancap\romannumeral #1@}
\title{Dynamic influence maximization}
\author{
  Binghui Peng\\
  Columbia University\\
  \texttt{bp2601@columbia.edu} \\
}
\begin{document}

\maketitle

\begin{abstract}

We initiate a systematic study on {\em dynamic influence maximization} (DIM).
In the DIM problem, one maintains a seed set $S$ of at most $k$ nodes in a dynamically involving social network, with the goal of maximizing the expected influence spread while minimizing the amortized updating cost. We consider two evolution models. In the {\em incremental model}, the social network gets enlarged over time and one only introduces new users and establishes new social links,
we design an algorithm that achieves $(1-1/e-\eps)$-approximation to the optimal solution and has $k \cdot\poly(\log n, \eps^{-1})$ amortized running time, which matches the state-of-art offline algorithm with only poly-logarithmic overhead. In the fully dynamic model, users join in and leave, influence propagation gets strengthened or weakened in real time, we prove that under the Strong Exponential Time Hypothesis (SETH), no algorithm can achieve $2^{-(\log n)^{1-o(1)}}$-approximation unless the amortized running time is $n^{1-o(1)}$.  On the technical side, we exploit novel adaptive sampling approaches that reduce DIM to the dynamic MAX-k coverage problem, and design an efficient $(1-1/e-\eps)$-approximation algorithm for it. Our lower bound leverages the recent developed distributed PCP framework.
\end{abstract}

\section{Introduction}
\label{sec:intro}

Influence maximization (IM) is the algorithmic task of given a social network and a stochastic diffusion model, finding a seed set of at most $k$ nodes with the largest expected influence spread over the network \cite{kempe2003maximizing}.
Influence maximization and its variants have been extensively studied in the literature over past decades~\cite{kempe2003maximizing,borgs2014maximizing,golovin2011adaptive,peng2019adaptive,seeman2013adaptive,leskovec2007cost,alon2012optimizing, chen2016combinatorial,balkanski2017importance}, and it has applications in viral market, rumor control, advertising, etc.

Social influence can be highly dynamic and the propagation tendencies between users can alter dramatically over time.
For example, in a Twitter network, new users join in and existing users drop out in real time, pop stars arise instantly for breaking news and trending topics; in a DBLP network, scientific co-authorship is built up and expands gradually over time.
The classic IM algorithms make crucial assumptions on a stationary social network and they fail to capture the elastic nature of social networks. 
As a consequence, their seeding set could become outdated rapidly in a constantly involving social network.
To mitigate the issue, one designs a dynamic influence maximization algorithm, which maintains a feasible seed set with high influence impact over time, and at the same time, saturates low average computation cost per update of the social network.

In this paper, we initiate a systematic study on the dynamic influence maximization (DIM) problem .
Two types of evolution models are of interests to us. 
In an {\em incremental} model\footnote{These terms are standard notions in the dynamic algorithm literature, e.g., see \cite{henzinger2016dynamic,abboud2019dynamic}}, the social networks keep growing: new users join in and new social relationship is built up or strengthened over time. The motivating example is the DBLP network, where co-authorship gets expanded over time. Another justification is the preferential attachment involving model of social network~\cite{barabasi2002evolution,liben2007link}.
In a {\em fully dynamic} model, the social network involves over time, users can join in and leave out, social links emerge and disappear, influence impacts get strengthened or weakened. The motivating examples are social media networks like Twitter or Facebook, and the advertising market.

There are some pioneer works on the topic of dynamic influence maximization and various {\em heuristic} algorithms have been proposed in the literature~\cite{wang2017tan,ohsaka2016dynamic,chen2015influential}. These algorithms have different kinds of approximation guarantee on the seed set, but the crucial drawback is that the amortized running time per update is no better than $\Omega(n)$, here $n$ is the total number of nodes in a social network.
This is extremely unsatisfactory from a theoretical view, as it indicates these algorithm do no better than re-run the state-of-art IM algorithm upon every update. Hence, the central question over the field is that, 
{\em can we achieve the optimal approximation guarantee with no significant overhead on amortized running time?}

\vspace{+2mm}
{\noindent \bf Our contribution \ \ } We address the above questions and provide clear resolutions.
In the incremental model, we prove it is possible to maintain a seed set with $(1-1/e-\eps)$-approximation ratio in amortized running time of $k\cdot \poly(\log n, \eps^{-1})$, which matches the state-of-art offline algorithm up to poly-logarithmic factors (Theorem~\ref{thm:increment}).
While in the fully dynamic algorithm, assuming the Strong Exponential Time Hypothesis (SETH), we prove no algorithm can achieve $2^{-(\log n)^{1 - o(1)}}$ approximation unless the amortized running time is $n^{1 - o(1)}$ (Theorem~\ref{thm:hard-ic} and Theorem~\ref{thm:hard-lt}). 
This computational barrier draws a sharp separation between these two dynamic models and delivers the following surprising message: There is no hope to achieve any meaningful approximation guarantee for DIM, even one only aims for a slightly reduced running time than the naive approach of re-instantiating an IM algorithm upon every update.

On the technical side, our DIM algorithm in the incremental model uses novel adaptive sampling strategies and reduces the DIM problem to the dynamic MAX-k coverage problem. We provide an efficient dynamic algorithm to the later problem. Both the adaptive sampling strategy and the combinatorial algorithm could be of independent interests to the influence maximization community and the submodular maximization community.
For the computaional hardness result, we exploit a novel application of the recent developed distribution PCP framework for fine grained complexity.

 \vspace{+2mm}
{\noindent \bf Related work \ \ } Influence maximization as a discrete optimization task is first proposed in the seminal work of Kempe et al. \cite{kempe2003maximizing}, who propose the Independent Cascade (IC) model and the Linear Threshold (LT) model, prove the submodularity property and study the performance of greedy approximation algorithm.
Upon then, influence maximization and its variant has been extensively studied in the literature, including scalable algorithms~\cite{chen2009efficient, borgs2014maximizing,tang2014influence, tang2015influence}, adaptive algorithms~\cite{golovin2011adaptive, fujii2019beyond, peng2019adaptive, chen2019adaptivity, seeman2013adaptive, badanidiyuru2016locally}, learning algorithms~\cite{chen2016combinatorial, balkanski2017importance, li2020online,kalimeris2019robust}, etc.
For detailed coverage over the area, we refer interested readers to the survey of \cite{chen2013information, li2018influence}.

Efficient and scalable influence maximization has been the central focus of the research.
Influence maximization is known to be NP-hard to approximate within $(1-1/e)$ under the Independent Cascade model \cite{kempe2003maximizing}, and APX-hard under the Linear Threshold model \cite{schoenebeck2020influence}, while at the same, the simple greedy algorithm achieves $(1-1/e)$ approximation under both diffusion models, but with running time $\Omega(mnk)$.
The breakthrough comes from Borgs et al.~\cite{borgs2014maximizing}, who propose the Reverse influence sampling (RIS) approach and their algorithm has running time $O((m+n)k\eps^{-3}\log^2 n )$.
The running time has been improved to $O((m+n)k\eps^{-2}\log n )$ and made practical in \cite{tang2014influence, tang2015influence}.

There are some initial efforts on the problem of dynamic influence maximization~\cite{wang2017tan,ohsaka2016dynamic,chen2015influential, aggarwal2012influential, liu2017shoulders}. All of them provide heuristics and none of them has rigorous theoretical guarantee on both the approximation ratio and the amortized running time.
Chen et al.~\cite{chen2015influential} propose Upper Bound Interchange (UBI) method with $1/2$-approximation ratio.
Ohsaka et al.~\cite{ohsaka2016dynamic} design a RIS based algorithm that achieves $(1-1/e)$-approximation.
Inspired by recent advance on streaming submodular maximization, Wang et al.~\cite{wang2017tan} give a practical efficient algorithm that maintains a constant approximated solution.
However, none of the above mentioned algorithm has rigorous guarantees on the amortized running time, and they can be as worse as $\Omega(n)$.
The failure on their theoretical guarantees are not accidental, as our hardness result indicates that $\Omega(n^{1 - \eps})$ amortized running time is essential to achieve any meaningful approximation.
Meanwhile, our algorithm for incremental model substantially deviates from these approaches and provides rigorous guarantee on amortized running time.

Influence maximization is closely related to submodular maximization~\cite{buchbinder2018submodular, badanidiyuru2014streaming, kazemi2019submodular, balkanski2018adaptive, balkanski2019optimal, balkanski2018non, breuer2020fast, balkanski2019exponential,ene2019submodular, fahrbach2019submodular, fahrbach2019non, chekuri2019parallelizing,ene2019submodular, kazemi2018scalable,monemizadeh2020dynamic, Lattanzi2020dynamic,chen2021complexity}, a central area of discrete optimization.
Ours is especially related to the dynamic submodular maximization problem, which has been recently studied in \cite{monemizadeh2020dynamic, Lattanzi2020dynamic, chen2021complexity}.
The work of \cite{monemizadeh2020dynamic} and \cite{Lattanzi2020dynamic} gives $(1/2-\eps)$-approximation to the fully dynamic submodular under cardinality constraints, with amortized query complexity $O(k^2 \log^2 n \cdot \eps^{-3})$ and $O(\log^8 n\cdot \eps^{-6})$.
The approximation ratio of $1/2$ is known to be tight, as Chen and Peng \cite{chen2021complexity} prove that any dynamic algorithm achieves $\left(\frac{1}{2} +\eps\right)$ approximation must have amortized query complexity of at least $n^{\tilde{\Omega}(\eps)}/k^3$.
The dynamic submodular maximization is studied under the query model and measured in query complexity, while we consider time complexity in dynamic influence maximization. 
These two are generally incomparable.

\section{Preliminary}
\label{sec:pre}

We consider the well-studied Independent Cascade (IC) model and the Linear Threshold (LT) model.

\vspace{+2mm}
{\noindent \bf Independent Cascade model  \ \ }
In the IC model, the social network is described by a directed influence graph $G=(V, E, p)$, where $V$ ($|V| = n$) is the set of nodes and $E \subseteq V\times V$ ($|E| = m$) describes the set of directed edges. There is a probability $p_{u, v}$ associated with each directed edge $(u, v)\in E$.
In the information diffusion process, each {\em activated} node $u$ has one chance to activate its out-going neighbor $v$, with independent success probability $p_{u,v}$.
The {\em live-edge} graph $L=(V, L(E))$ is a random subgraph of $G$, where each edge $(u, v)\in E$ is included in $L(E)$ with an independent probability $p_{u,v}$. 
The diffusion process can also be seen as follow. At time $\tau = 0$, a live-edge graph $L$ is sampled and nodes in seed set $S\subseteq V$ are activated.
At every discrete time $\tau=1,2, \ldots$, if a node $u$ was activated at time $\tau- 1$, then its out-going neighbors in $L$ are activated at time $\tau$. The propagation continues until no more activated nodes appears at a time step.

\vspace{+2mm}
{\noindent \bf Linear Threshold model \ \ }
In the LT model, the social network is a directed graph $G = (V, E, w)$, with $V$ denotes the set of nodes and $E$ denotes the set of edge. There is a weight $w_{u,v} > 0$ associate with each edge $(u, v) \in E$ and the weight satisfies $\sum_{u \in N_{\mathsf{in}}(v)}w_{u, v} \leq 1$ for every node $v$, where $N_{\mathsf{in}}(v)$ contains all incoming neighbors of node $v$. In the information diffusion process, a threshold $t_v$ is sampled uniformly and independently from $[0, 1]$ for each node $v$, and a node $v$ becomes active if its active in-coming neighbors have their weights exceed the threshold $t_{v}$.
In the LT model, the live-edge graph $L=(V, L(E))$ can be obtained as follow. 
Each node $v$ samples an incoming neighbor from $N_{\mathsf{in}}(v)$, where the node $u$ is sampled with probability $w_{u, v}$ and the edge $(u,v)$ is included in $L(E)$. With probability $1 - \sum_{u \in N_{\mathsf{in}}(v)}w_{u,v}$, no edge is added.
The diffusion process can also be described by the live-edge graph.
That is, a live-edge graph $L$ is sampled at the beginning and the influence is spread along the graph.

\vspace{+2mm}
{\noindent \bf Influence maximization \ \ }
Given a seed set $S$, the influence spread of $S$, denoted as $\sigma(S)$, is the expected number of nodes activated from seed set $S$, i.e., $\sigma(S) = \E_{L\sim G}[|\Gamma(S, L)|]$, and $\Gamma(S, L)$ is the set of nodes reachable from $S$ in graph $L$. In the influence maximization problem, our goal is to find a seed set $S$ of size at most $k$ that maximizes the expected influence, i.e., finding $S^{\star} \in \arg\max_{S\subseteq V, |S|\leq k}\sigma(S)$.

The reverse reachable set~\cite{borgs2014maximizing} has been the key concept for all near-linear time IM algorithms~\cite{borgs2014maximizing, tang2014influence, tang2015influence}.
\begin{definition}[Reverse Reachable Set]
Under the IC model and the LT model, a reverse reachable (RR) set with a root node $v$, denoted $R_v$, is the random set of nodes that node $v$ reaches in one reverse propagation. Concretely, $R_v$ can be derived by randomly sampling a live-edge graph $L = (V, L(E))$ and include all nodes in $V$ that can reach $v$ under the live-edge graph $L$.
\end{definition}

\vspace{+2mm}
{\noindent \bf Dynamics of network \ \ }
Social networks are subject to changes and involve over time.
We consider the following dynamic model of a social network. 
At each time step $t$, one of the following four types of changes could happen.
(1) Introduction of a new user. This corresponds to insert a new node to influence graph; 
(2) Establishment of a new relationship. It is equivalent to insert a new directed edge to the graph\footnote{In order to adding a new directed edge $(u, v)$, one also specifies the probability $p_{u, v}$ under IC model or the weight $w_{u, v}$ under LT model. };
(3) Diminishing of an old relationship. It is equivalent to remove an existing edge of the graph;
(4) Leave of an old user. This means to remove an existing node in the influence graph.

In the {\em incremental} model, we only allow the first two types of changes. That is, we assume the social network gets enlarged over time and we only consider the introduction of new users and new relations. 
In the {\em fully dynamic} model, we allow all four types of changes.


The theoretical results developed in this paper also adapts to other forms of change on the social network, including (a) strengthen of influence, which means the increase of the propagation probability $p_e$ for some edge under IC model or the increase of the weight $w_{e}$ under LT model; (b) weaken the influence. which means the decrease of propagation probability $p_{e}$ under IC model or the decrease of the weight $w_{e}$ under LT model. 

\vspace{+2mm}
{\noindent \bf Dynamic influence maximization \ \ } 
Let $G_t = (V_t, S_t)$ be the social network at time $t$. Define the expected influence to be $\sigma_t(S) = \E_{L\sim G_t}[|\Gamma(S, L)|]$.
The goal of the dynamic influence maximization problem is to maintain a seed set $S_t$ of size at most $k$ that maximizes the expected influence at {\em every} time step $t$. That is to say, we aim to find $S^{\star}_t \in \arg\max_{S_t\subseteq V_t, |S_t|\leq k}\sigma(S_t)$ for all $t$.
The (expected) amortized running time of an algorithm is defined as the (expected) average running time per update.

\vspace{+2mm}
{\noindent\bf Dynamic MAX-k coverage \ \ } 
The (dynamic) influence maximization problem is closely related to the (dynamic) MAX-k coverage problem.  In a MAX-k coverage problem, there is a collection of $n$ sets $\mathcal{A}$ defined on the ground element $[m]$. 
The goal is to find $k$ sets $A_{1}, \cdots, A_{k}$ such that their coverage is maximized, i.e. finding $\arg\max_{A_1, \ldots, A_k \in \mathcal{A}}|\cup_{i \in [k]}A_i|$. The problem can also be formulated in terms of a bipartite graph $G = (V_L, V_R, E)$, named coverage graph, where $V_{L}$ ($|V_L| = n$) corresponds to set in $\mathcal{A}$ and $V_R$ ($|V_R| = m$) corresponds to element in $[m]$, for any $i\in [n], j\in [m]$, there is an edge between $V_{L,i}$ and $V_{R, j}$ if and only if $j \in A_i$, here $V_{L, i}$ is the $i$-th node in $V_L$ and $V_{R, j}$ is the $j$-th node in $V_R$. 
In the dynamic MAX-k coverage problem, nodes and edges arrive or leave one after another. 
Let $V\subseteq V_{L}$. At any time step $t$,  we use $f_t(S)$ to denote the number of neighbors of nodes in $S$.
Our goal is to maintain a set of node $S^{*}_t\subseteq V_L$ that maximizes the coverage for every time step $t$. That is, finding $S^{*}_{t} =\arg\max_{S_t\subseteq V_L, |S_t|\leq k}f_t(S_t)$.


\vspace{+2mm}
{\noindent \bf Submodular functions \ \ }
Let $V$ be a finite ground set and $f:2^V\rightarrow \mathbb{R}$ be a set function. Given two sets $X, Y \subseteq V$, the marginal gain of $Y$ with respect to $X$ is defined as
$f_{X}(Y):= f(X\cup Y) - f(X)$
The function $f$ is {\em monotone}, if for any element $e\in V$ and any set $X \subseteq V$, it holds that $f_{X}(e)\geq 0$.
The function $f$ is {\em submodular}, if for any two sets $X, Y$ satisfy $X\subseteq Y \subseteq V$ and any element $e \in V\backslash Y$, one has
$f_{X}(e) \geq f_{Y}(e)$.
The influence spread function $\sigma$ is proved to be monotone and submodular under both the IC model and the LT model in the seminar work of \cite{kempe2003maximizing}.

\section{Dynamic influence maximization on a growing social network}
\label{sec:upper}

We study the dynamic influence maximization problem in the incremental model. 
Our main result is to show that it is possible to maintain an $(1-1/e-\eps)$-approximate solution in $k\cdot \poly(\eps^{-1}, \log n)$ amortized time.
The amortized running time is near optimal and matches the state-of-art offline algorithm up to poly-logarithmic factors.

\begin{theorem}
\label{thm:increment}
In the incremental model, there is a randomized algorithm for the dynamic influence maximization under both Independent Cascade model and the Linear Threshold model. With probability $1-\delta$, the algorithm maintains an $(1-1/e-\eps)$-approximately optimal solution in all time steps, with amortized running time $O(k\eps^{-3}\log^4 (n/\delta))$.
\end{theorem}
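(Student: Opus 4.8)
The plan is to dynamize the reverse influence sampling (RIS) framework and couple it with a dynamic algorithm for MAX-$k$ coverage. Recall that offline RIS draws $\theta$ independent RR sets $R^{(1)},\dots,R^{(\theta)}$ with roots chosen uniformly at random from $V$, and then solves MAX-$k$ coverage on the coverage graph whose left side is $V$, whose right side is $\{R^{(1)},\dots,R^{(\theta)}\}$, with $v$ incident to $R^{(i)}$ iff $v\in R^{(i)}$; since $\Pr[S\cap R\neq\emptyset]=\sigma(S)/n$, a Chernoff bound shows that once $\theta=\Theta(k\eps^{-2}\log(n/\delta)\cdot n/\OPT)$ the (scaled) coverage value is within $(1\pm\eps)$ of $\sigma(\cdot)/n$ simultaneously for the optimal seed set and for any set returned by a $(1-1/e)$-approximate coverage algorithm, which yields the $(1-1/e-\eps)$ guarantee. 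I would maintain exactly this reduction over time: a dynamically changing family of RR sets together with a dynamic MAX-$k$ coverage structure run on the induced coverage graph, invoking the latter as a black box from the companion combinatorial result.

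First I would describe how to maintain the RR-set family under the two incremental operations. When a new node $u$ arrives, $u$ is inserted on the left side of the coverage graph, and to keep roots uniform over the current vertex set $V_t$ I additionally regenerate the small number of RR sets whose root index now collides with $u$. When a new edge $(u,v)$ with probability $p_{u,v}$ (resp.\ weight $w_{u,v}$) arrives, I perform \emph{lazy expansion}: an RR set $R^{(i)}$ can change only if $v\in R^{(i)}$, in which case I flip an independent $p_{u,v}$-coin (under LT, re-resolve $v$'s sampled in-edge consistently) and, if it comes up live, run a reverse BFS/DFS from $u$ inside the current graph, appending the newly reachable nodes to $R^{(i)}$ and adding the corresponding edges to the coverage graph. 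To locate the affected RR sets quickly I keep, for every node $v$, an inverted list of the RR sets currently containing $v$. Finally, since $\OPT_t$ is unknown and grows over time, I maintain a $2$-approximation $\wh{\OPT}_t$ read off from the current MAX-$k$ coverage value, set the target count $\theta_t=\Theta(k\eps^{-2}\log(n_t/\delta)\cdot n_t/\wh{\OPT}_t)$, and work in epochs delimited by doublings of $\wh{\OPT}_t$; within an epoch $\wh{\OPT}_t$ is fixed and $n_t$ only grows, so $\theta_t$ only grows and we only ever \emph{add} RR sets, whereas at an epoch boundary $\theta_t$ drops and we discard the surplus.

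For correctness I would argue that, conditioned on the current graph $G_t$, the maintained family is distributed exactly as $\theta_t$ i.i.d.\ freshly drawn RR sets on $G_t$: the lazy coin flips reconstruct precisely the live-edge graph restricted to the current edge set, and the re-rooting keeps the root distribution uniform on $V_t$. Given this, the offline concentration bound applies verbatim at each fixed time step; a union bound over the $\poly(n)$ time steps (absorbed into the $\log(n/\delta)$ term in $\theta_t$) shows that with probability $1-\delta$ the coverage value approximates $\sigma_t$ to within $(1\pm\eps)$ at every step, so the seed set returned by the dynamic MAX-$k$ coverage subroutine is $(1-1/e-\eps)$-optimal for $\sigma_t$ throughout.

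For the running time the dominant term is RR-set construction and maintenance. By the lazy design, the total work spent on any single RR set over its lifetime is proportional to its final size, since each node and incident edge it contains is discovered once; the standard width estimate bounds the total final size of $\theta$ RR sets by $O((m+n)\cdot k\eps^{-2}\log(n/\delta))$ after the $n/\OPT$ normalization, and since an RR set is re-created only when a new epoch demands it and there are $O(\log n)$ epochs, the total RR-set work over the whole update sequence is $O((m+n)\cdot k\eps^{-3}\poly(\log(n/\delta)))$, i.e.\ $O(k\eps^{-3}\poly(\log(n/\delta)))$ amortized per edge/node update; adding the amortized cost of the dynamic MAX-$k$ coverage subroutine on the coverage graph (another $\poly\log$ factor) gives the claimed $O(k\eps^{-3}\log^4(n/\delta))$. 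The main obstacle I anticipate is precisely this amortized accounting: one must show the inverted-index lookups triggered by edge insertions do not blow up (the expected number of RR sets containing a fixed $v$ is only $\theta\cdot\sigma_t(\{v\})/n$, and these charges telescope against the width bound), one must ensure epoch restarts are rare enough, and one must verify that the dynamic MAX-$k$ coverage structure tolerates deletions of right-vertices — handling the unknown, growing $\OPT$ cleanly is the crux.
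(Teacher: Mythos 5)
Your high-level plan (dynamize RIS, reduce to dynamic MAX-$k$ coverage, charge the work against the width bound, pull a factor $O(\log n)$ from epoch restarts) is the right skeleton, and your lazy expansion of RR sets under edge insertions via independent coin flips is essentially how the paper augments RR sets as well. But the two places you flag as ``obstacles'' are precisely where the paper introduces its substantive technical content, and your sketch does not close them.

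First, you try to maintain exactly $\theta_t$ RR sets with roots \emph{uniform} over the growing $V_t$, patching up the distribution by ``regenerating the small number of RR sets whose root index now collides with $u$.'' The paper deliberately abandons this: it observes that maintaining a uniform root distribution under insertions is awkward, and instead \emph{Poissonizes} the sample by attaching an RR set to each node independently with a fixed probability $p$. This removes the coupling between $\theta_t$ and $|V_t|$ entirely, at the price of having to estimate and re-calibrate $p$. Your re-rooting fix may be salvageable for node arrivals (a freshly inserted node has an empty reverse neighborhood, so re-rooting there is cheap), but you never justify that the resulting family remains a fresh i.i.d.\ draw when combined with the \emph{conditional} growth of old RR sets and the \emph{discarding of surplus} RR sets at epoch boundaries — the latter selects a subfamily in a way that is correlated with what you observed.

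Second, and more seriously, you set the sample budget from $\widehat{\OPT}_t$, which you ``read off from the current MAX-$k$ coverage value,'' and then appeal to ``the offline concentration bound\ldots verbatim.'' This is circular and breaks independence: the stopping rule (when to add/drop RR sets, when to start a new epoch) is a function of the very coverage graph whose concentration you are invoking. The paper identifies this adaptivity as the main obstacle and resolves it with two ideas your proposal lacks. (i) It calibrates $p$ not from an $\OPT$ estimate but from a time-budgeted \textsc{Estimate} subroutine (sample uniform RR sets until $Rm_0$ edge-checks have been spent, return the count $K$, set $p = K/n_0$), which bounds $p$ implicitly via the identity $\mathrm{AVG} \le \OPT/n$ without ever reading $\OPT$ off the coverage graph. (ii) It maintains \emph{two} coverage graphs, $\He$ and $\Hc$, sampled with independent randomness: $\He$ is used only to decide when to restart (via a hard cap of $16Rm_0$ edge-checks), while $\Hc$ is what feeds the MAX-$k$ coverage algorithm. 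This decouples the stopping time from the estimator, so the paper can run a martingale/Azuma argument on $\Hc$ conditioned on the $\He$-driven stopping event. Your single-graph, $\widehat{\OPT}_t$-driven design has the stopping time and the estimator entangled, and the concentration claim does not follow without a new argument. You do explicitly concede at the end that ``handling the unknown, growing $\OPT$ cleanly is the crux,'' which is exactly right — but it means the proof as written has a genuine gap rather than a routine bookkeeping step.

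A smaller point: you also gesture at the dynamic MAX-$k$ coverage structure tolerating ``deletions of right-vertices'' at epoch boundaries. In the paper this never arises: at a restart the entire coverage graph is rebuilt from scratch (and Lemma~\ref{lem:finite} caps restarts at $O(\log n)$ per phase), so the MAX-$k$ coverage structure only ever sees insertions, which is what Theorem~\ref{thm:max-k} is proved for. If your design genuinely requires deleting RR sets mid-epoch, you would need a strictly stronger dynamic coverage subroutine than the one the theorem supplies.
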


{\noindent \bf Technique overview \ \ } We provide a streamlined technique overview over our approach.
All existing near-linear time IM algorithms \cite{borgs2014maximizing,tang2014influence, tang2015influence} make use of the reverse influence sampling (RIS) technique and adopt the following two-stage framework.
In the first stage, the algorithm {\em uniformly} samples a few RR sets, with the guarantee that the total time steps is below some threshold.
In the second stage, the algorithm solves a MAX-k coverage problem on the coverage graph induced by the RR sets.
Both steps are challenging to be applied in the dynamic setting and require new ideas.

 Following the RIS framework, we first reduce the dynamic influence maximization problem to a dynamic MAX-k coverage problem (see Section~\ref{sec:reduce}).
The first obstacle in dynamic setting is that it is hard to dynamically maintain an uniform distribution over nodes, and hence, one can not uniformly sample the RR sets.
We circumvent it with a simple yet elegant idea, instead of uniformly sampling RR sets, we include and sample each node's RR set independently with a uniformly fixed probability $p$.
This step is easy to implement in a dynamic stream, but to make the idea work, it needs some extra technique machinery.
First of all, it is not obvious how to set the sampling probability $p$, as if $p$ is too large, it brings large overhead on the running time; and if it is too small, the sampled coverage graph won't yield a good approximation to the influence spread function.
We use a simple trick. We first uniformly sample RR sets within some suitable amount of steps (the {\sc Estimate} procedure), and by doing so, we have an accurate estimate on the sampling probability (Lemma~\ref{lem:sample_size}). 
Note that we only execute this step at beginning and it is an one-time cost.
Independently sampling is friendly to dynamic algorithm, but we still need to control the total number of steps to execute it.
More importantly, the sampling probability should not be a fixed one throughout the dynamic stream, since the average cost to sample a RR set could go up and down, and if the cost is too large, we need to decrease the sample probability.
A simple rescue is to restart the above procedure once the coverage graph induced by independently sampling is too large, but this brings along two additional issues.
First, the stopping time is highly correlated with the realization of RR sets, and this is especially problematic for independent sampling, as each incremental update on RR sets does not follow identical distribution.
Second, we certainly don't want to restart the procedure for too many times. After all, if we need to restart upon every few updates, we gain nothing than the naive approach.
We circumvent the first issue by maintaining two pieces of coverage graph $\He$ and $\Hc$.
Especially, we use $\He$ to estimate the average cost for sampling a RR set, and use $\Hc$ to approximate the influence spread function.
This decouples the correlation. 
By Martingale concentration, we can prove that (1) $\He$ yields good estimation on the cost of sampling RR sets (Lemma~\ref{lem:sample-step1}), (2) $\Hc$ gives good approximation on the influence spread function (Lemma~\ref{lem:approximation}), and at the same time, (3) conditioning on the first event, the computation cost for $\Hc$ won't be too large (Lemma~\ref{lem:sample-step2}).
We circumvent the second issue by setting suitable threshold for constructing $\He$ (see Line~\ref{line:threshold} in {\sc Insert-edge}), and we periodically restart the algorithm each time the number of nodes or edges get doubled.
The later step could lead to an extra $O(\log n)$ overhead, but it guarantees that each time the construction time of $\He$ goes above the threshold, the average steps for sampling a RR set increase by a factor of $2$ (see Lemma~\ref{lem:finite}).
Hence, we restart the entire procedure no more than $O(\log n)$ times.


We next give an efficient dynamic algorithm for dynamic MAX-k coverage (Section~\ref{sec:max-k}).
The offline problem can be solved in linear time via greedy algorithm, but it requires several new ideas to make the dynamic variant have low amortized cost.
The major issue with offline greedy algorithm is that we don't know which element has the maximum marginal contribution until we see all elements, and in the dynamic setting, we can't afford to wait till the end.
We get around it with two adaptations. 
First, we divide the algorithm into $O(\eps^{-1}\log n)$ threads\footnote{We remark this approach has been used for submodular maximization under various setting~\cite{badanidiyuru2014streaming, balkanski2019optimal, Lattanzi2020dynamic}.} and for each thread, we make a guess on the optimal value (the $i$-th guess is $(1+\eps)^{i-1}$).
Second, instead of selecting element with the maximum margin, we maintain a threshold $(\OPT_i - f(S_i))/k$ and add any element whose marginal lies above the threshold, where $S_i$ is the current solution set maintained in the $i$-th thread.
We prove this still gives $(1-1/e)$-approximation (Lemma~\ref{lem:max-k-coverage1}).
The key challenge is that once we add a new element to the solution set, the threshold changes and we need to go over all existing element again.
We carefully maintain a dynamic data structure and prove the amortized running time for each thread is $O(1)$ by a potential function based argument (Lemma~\ref{lem:max-k-coverage2}).

\subsection{Reducing DIM to dynamic MAX-k coverage}
\label{sec:reduce}


\begin{lemma}
\label{lem:reduction}
In the incremental model, suppose there is an algorithm that maintains an $\alpha$-approximate solution to the dynamic MAX-k coverage, with amortized running time of $\mathcal{T}_{\mathsf{mc}}$. 
Then there is an algorithm, with probability at least $1 -\delta$, maintains an $(\alpha - 2\eps)$-approximate solution for dynamic influence maximization and has amortized running time at most $O(\mathcal{T}_{\mathsf{mc}}k\eps^{-2}\log^3(n/\delta))$.
\end{lemma}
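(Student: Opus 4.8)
The plan is to run the reverse influence sampling (RIS) framework in the incremental setting: maintain the coverage graph induced by a growing collection $\mathcal{R}$ of reverse reachable sets and feed its edge updates into the assumed dynamic MAX-$k$ coverage algorithm as a black box. All of the work lies in maintaining $\mathcal{R}$ cheaply while preserving the RIS accuracy guarantee.

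\emph{Step 1: the static reduction.} Given $\mathcal{R}$, form the coverage graph whose left vertices are the network nodes $V$ (the ``sets'') and whose right vertices index the members of $\mathcal{R}$ (the ``elements''), with an edge $(v,R)$ iff $v\in R$; write $f(S)$ for the coverage of $S$. Under both the IC and LT models, if the roots of the sets in $\mathcal{R}$ are uniform in $V$ then $\tfrac{n}{|\mathcal{R}|}f(S)$ is unbiased for $\sigma(S)$, and the standard Chernoff-plus-union-bound analysis of RIS shows that once $|\mathcal{R}|$ passes a threshold $\theta^\star=\Theta\!\big(\tfrac{n}{\OPT}\cdot k\eps^{-2}\log(n/\delta)\big)$ (equivalently, once $\sum_{R\in\mathcal{R}}|R|$ reaches the associated mass threshold), one has $\big|\tfrac{n}{|\mathcal{R}|}f(S)-\sigma(S)\big|\le \eps\,\OPT$ simultaneously for all $|S|\le k$ with probability $1-\delta$. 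Hence if $\hat S$ is $\alpha$-approximate for $f$ then $\sigma(\hat S)\ge \alpha\,\OPT-(\alpha+1)\eps\,\OPT\ge(\alpha-2\eps)\OPT$, which is the claimed approximation. So it suffices to maintain $\mathcal{R}$ and its coverage graph dynamically.

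\emph{Step 2: the dynamic implementation.} Since a uniform distribution over $V$ is hard to maintain incrementally, I would instead fix a probability $p$ and place each node's RR set $R_v$ into $\mathcal{R}$ independently with probability $p$, rescaling the estimator to $\tfrac1p f(S)$; a new node costs one coin flip, and because inserting an edge only \emph{enlarges} each RR set, an edge insertion is handled by a bounded backward exploration of the affected RR sets whose cumulative cost telescopes to the final total size $\sum_{v\in\mathcal{R}}|R_v^{\text{final}}|$. To set $p$, run a one-time {\sc Estimate} phase that uniformly samples RR sets up to the mass threshold (Lemma~\ref{lem:sample_size}-style), yielding a constant-factor estimate of $\OPT$ and hence the scale $p\approx\theta^\star/n$. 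Because the average cost of an RR set grows as the graph grows, $p$ must be lowered over time, so the procedure is restarted whenever the coverage graph becomes too large and, for bookkeeping, whenever $n$ or $m$ doubles.

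\emph{Step 3: decoupling and accounting --- the crux.} The delicate point is that the restart time is a data-dependent stopping time strongly correlated with the realized RR sets; under independent (non-i.i.d.) sampling this would break the concentration of $f$. I would break the correlation by maintaining two separate coverage graphs: $\He$, used only to gauge the per-RR-set cost and decide when to restart, and $\Hc$, used only to approximate $\sigma$. A martingale (Azuma) argument then gives, in order, that $\He$ tracks the true sampling cost (Lemma~\ref{lem:sample-step1}-style), that conditioned on this the cost of building $\Hc$ up to its target mass is as expected (Lemma~\ref{lem:sample-step2}-style), and that $\Hc$ furnishes the $\eps\,\OPT$-accurate approximation needed in Step~1 (Lemma~\ref{lem:approximation}-style); choosing the $\He$-construction threshold (Line~\ref{line:threshold}) so that every restart forces the average RR-set cost to at least double, together with the $\poly(n)$ bound on that cost, caps the number of restarts at $O(\log n)$ (Lemma~\ref{lem:finite}-style). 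For the running time: each network update triggers $O(1)$ coverage-graph edge updates per affected RR set, amortized (against the $\ge m+n$ updates of an incremental stream) to $O(k\eps^{-2}\log(n/\delta))$ such updates, inflated by $O(\log n)$ for restarts and a further $O(\log(n/\delta))$ for the high-probability bookkeeping; multiplying by the $\mathcal{T}_{\mathsf{mc}}$ cost of each coverage-graph update yields the claimed amortized bound $O(\mathcal{T}_{\mathsf{mc}}k\eps^{-2}\log^3(n/\delta))$. I expect the main obstacle to be precisely this Step~3 decoupling: certifying via $\He$ that sampling stays cheap without that certification leaking information that biases the estimate maintained in $\Hc$, all while the partially-built RR sets change distribution with every edge insertion.
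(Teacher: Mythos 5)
Your proposal is correct and follows essentially the same approach as the paper: independent sampling with $p$ set by a one-time \textsc{Estimate} phase, decoupling the stopping time from the estimator via the two coverage graphs $\He$ and $\Hc$, Azuma-type martingale concentration, a restart threshold that forces the average RR-set cost to double so that there are $O(\log n)$ iterations per phase, and a phase structure on doublings of $n$ and $m$. The only minor imprecision is in the running-time accounting, where your third logarithmic factor should be attributed to the $O(\log n)$ phases (doublings of $n$ or $m$) rather than to ``high-probability bookkeeping''; the arithmetic and final bound are nonetheless correct.
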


The reduction is described in Algorithm~\ref{algo:init-reduce}-\ref{algo:edge}, we provide some explanations here.
From a high level, the execution of the algorithm is divided into {\em phases}. 
It restarts whenever the number of edges or the number of nodes gets doubled from the previous phase. 
Since the total number of steps is at most $(m+n)$, there are at most $O(\log n)$ phases.
Within each phase, the algorithm is divided into {\em iterations}. At the beginning of each iteration, the algorithm runs the \textsc{Estimate} procedure in order to estimate the total number of RR set it needs to sample. It then fixes a sampling probability $p$ and samples two graphs, $\Hc$ and $\He$, dynamically.
As explained before, $\He$ is used to measure the number of steps one takes for sampling RR sets, and $\Hc$ is used to approximate the influence spread function.
We stop sampling $\He$ and $\Hc$ and start a new iteration whenever the number of step for constructing $\He$ exceeds $16Rm_0$.

Due to space limits, we defer detailed proof to the full version of this paper and outline a sketch below.

\begin{figure*}[!ht]
 \begin{minipage}{.48\linewidth}
    \begin{algorithm}[H]
    \caption{\textsc{Initialize}}
    \label{algo:init-reduce}
    \begin{algorithmic}[1]
        \State $m \leftarrow 0, m_0 \leftarrow 0$, $n \leftarrow 0, n_0 \leftarrow 0$.
        \State $V \leftarrow \emptyset, E \leftarrow \emptyset$, $\Hc \leftarrow \emptyset, \He \leftarrow \emptyset$ .
        \State $R \leftarrow 0$, $p \leftarrow 0$.
    \end{algorithmic}
    \end{algorithm}%
    \begin{algorithm}[H]
    \caption{\textsc{Estimate}}
    \label{algo:estimate}
    \begin{algorithmic}[1]
    	\State $K \leftarrow 0$.\label{line:count}
    	\Repeat 
    	\State Sample a RR set uniformly over $V$.\label{line:count1}
    	\State $K \leftarrow K + 1$.
    	\Until{$Rm_0$ steps have been taken}
    	\State\Return{$K$}
    \end{algorithmic}
    \end{algorithm}%
    \begin{algorithm}[H]
    \caption{\textsc{Buildgraph}}
    \label{algo:build}
    \begin{algorithmic}[1]
      	\State  $\Hc \leftarrow (V, \emptyset, \emptyset)$, $\He \leftarrow (V, \emptyset, \emptyset)$,
      	\State $R \leftarrow c\eps^{-2}k\log (n_0/\delta)$ .
    	\State $K \leftarrow \textsc{Estimate}()$. 
    	\State $p \leftarrow \frac{K}{n_0}$.
    	\For{node $v \in V$}
    	\State Sample the RR set $R_{v, \mathsf{est}}$ with prob $p$.\label{line:build1}
    	\State Sample the RR set $R_{v, \mathsf{cv}}$ with prob $p$.\label{line:build2}
    	\For{node $u\in R_{v, \mathsf{cv}}$}
    	\State \textsc{Insert-edge-cov}($u, v$) to $\Hc$.
    	\EndFor    	
    	\EndFor
    \end{algorithmic}
    \end{algorithm}%
  \end{minipage}%
  \hfill\hfill
  \begin{minipage}{.48\linewidth}
      \begin{algorithm}[H]
    \caption{\textsc{Insert-node}($u$)}
    \label{algo:node}
    \begin{algorithmic}[1]
    	\State $V \leftarrow V\cup \{u\}$, $n \leftarrow n + 1$.
    	\If{$n \geq 2n_0$}\label{line:double1}
    	\State $m_0 \leftarrow  m$, $n_0 \leftarrow n$.
    	\State \textsc{Buildgraph}().
    	\Else
    	\State $V_{L,\mathsf{est}} \leftarrow V_{L,\mathsf{est}} \cup \{u\}$ .
    	\State $V_{L,\mathsf{cv}} \leftarrow V_{L,\mathsf{cv}} \cup \{u\}$.
    	\State $V_{R,\mathsf{est}} \leftarrow V_{R,\mathsf{est}} \cup \{u\}$ with prob $p$. \label{line:independent1}
    	\State $V_{R,\mathsf{cv}}\leftarrow V_{R,\mathsf{cv}} \cup \{u\}$ with prob $p$. \label{line:independent2}
    	\EndIf
    \end{algorithmic}
    \end{algorithm}%
        \vspace{-6mm}
      \begin{algorithm}[H]
    \caption{\textsc{Insert-edge}($u, v$)}
    \label{algo:edge}
    \begin{algorithmic}[1]
    	\State $E \leftarrow E\cup \{(u, v)\}$, $m \leftarrow m + 1$
    	\If{$m \geq 2m_0$}\label{line:double2}
    	\State $m_0 \leftarrow  m$, $n_0 \leftarrow n$.
    	\State \textsc{Buildgraph}().
    	\Else
    	\For{node $v' \in V_{R, \mathsf{est}}$}
    	\State Augment the RR set $R_{v',\mathsf{est}}$ to $R_{v', \mathsf{est}}^{\mathsf{new}}$.
    	\State $R_{v', \mathsf{est}} \leftarrow R_{v', \mathsf{est}}^{\mathsf{new}}$.
    	\If{the total steps of building $\He$
    	
    	\hskip\algorithmicindent exceed $16Rm_0$} \label{line:threshold}
    	\State $m_0 \leftarrow m$, $n_0 \leftarrow n$.
    	\State Restart and \textsc{Buildgraph}(). \label{line:restart}
    	\EndIf
    	\EndFor
    	\For{node $v' \in V_{R, \mathsf{cv}}$}
    	\State Augment the RR set $R_{v', \mathsf{cv}}$ to $R_{v', \mathsf{cv}}^{\mathsf{new}}$.
    	\State \textsc{Insert-edge-cov}($u', v'$) to $\Hc$
    	
    	\hskip\algorithmicindent for all $u' \in R_{v', \mathsf{cv}}^{\mathsf{new}}\backslash R_{v', \mathsf{cv}}$. 
    	\State $R_{v', \mathsf{cv}} \leftarrow R_{v', \mathsf{cv}}^{\mathsf{new}}$.
    	\EndFor
    	\EndIf
    \end{algorithmic}
    \end{algorithm}%
\end{minipage}%
\end{figure*}



We first focus on one single iteration of the algorithm. Recall that in each iteration, the number of nodes satisfies $n\in [n_0, 2n_0)$ and the number of edge satisfies $m \in [m_0, 2m_0)$, and the number of steps for sampling $\He$ never exceeds $16Rm_0$. 

{\noindent \bf Bounds at initialization \ \ } 
We prove the \textsc{Estimate} procedure makes a good estimate on the sampling probability at the beginning of each iteration.
Let $\AVG\cdot m_0$ be the expected number of steps taken of sampling a uniformly random RR set, and recall $K$ is the number of RR sets \textsc{Estimate} samples.
\begin{lemma}
\label{lem:sample_size}
With probability at least $1 - \frac{2\delta}{n^{ck/8}}$, the number of RR sets \textsc{Estimate} samples satisfies
$
(1-\eps)\cdot \frac{R}{\AVG} \leq K \leq (1+\eps) \cdot \frac{R}{\AVG}.
$
\end{lemma}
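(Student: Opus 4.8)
The plan is to view the \textsc{Estimate} procedure as repeatedly drawing i.i.d.\ samples $X_1, X_2, \ldots$, where $X_i$ is the number of steps taken to build the $i$-th uniformly random RR set, and stopping at the first index $K$ such that $\sum_{i=1}^{K} X_i \geq R m_0$. By definition $\E[X_i] = \AVG$ (after normalizing, since $\AVG \cdot m_0$ is the expected step count and we are counting in units consistent with the threshold $R m_0$; I will be careful to make the units match). Thus $K$ is a stopping time for the partial sums, and the target inequality $(1-\eps) R/\AVG \leq K \leq (1+\eps) R/\AVG$ says that the random stopping time concentrates around its Wald-type value $R m_0 / (\AVG m_0) = R/\AVG$.

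First I would establish the two directions separately via a deviation bound on the partial sums $T_j := \sum_{i=1}^{j} X_i$. For the upper bound on $K$: if $K > (1+\eps) R/\AVG$, then at index $j_+ := \lceil (1+\eps) R/\AVG \rceil$ we must have $T_{j_+} < R m_0$, i.e.\ the sum of roughly $(1+\eps)R/\AVG$ i.i.d.\ copies of $X_i$ falls below its mean $(1+\eps) R m_0$ by a multiplicative $\eps/(1+\eps)$ factor. For the lower bound: if $K < (1-\eps) R/\AVG$, then at index $j_- := \lfloor (1-\eps) R/\AVG \rfloor$ we already have $T_{j_-} \geq R m_0$, i.e.\ a sum of $\approx (1-\eps) R/\AVG$ copies exceeds its mean by a multiplicative $\eps/(1-\eps)$ factor. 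Each of these is a standard large-deviation event for a sum of independent nonnegative random variables, and since $R = c \eps^{-2} k \log(n_0/\delta)$, the number of summands is $\Theta(\eps^{-2} k \log(n_0/\delta) / \AVG)$; plugging into a Chernoff/Bernstein bound gives failure probability $\exp(-\Omega(\eps^2 \cdot (\text{number of summands})))$, which should come out to $\delta / n^{\Omega(ck)}$ as claimed, for a suitable absolute constant in the exponent (the statement writes $2\delta/n^{ck/8}$, the factor $2$ being the union bound over the two directions).

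The main obstacle is that the summands $X_i$ — the step counts of a BFS/DFS building an RR set — are not bounded by a small quantity: in the worst case a single RR set can touch $\Theta(m_0)$ edges, so $X_i$ can be as large as $\Theta(m_0)$ while $\E[X_i] = \AVG$ could be small. A naive Hoeffding bound with range $m_0$ is useless. The fix is to use a Bernstein-type inequality controlled by the variance rather than the range, together with the observation that $X_i \leq m_0$ deterministically and $\E[X_i^2] \leq m_0 \cdot \E[X_i] = m_0 \AVG$, so $\Var(X_i) \leq m_0 \AVG$; Bernstein's inequality with this variance proxy and range $m_0$ yields, for the relevant number $N \asymp R/\AVG$ of summands and deviation $t \asymp \eps N \AVG = \eps R m_0$, a bound of the form $\exp\!\left(-\Omega\!\left(\min\{\eps^2 R m_0 / (m_0 \AVG) \cdot N^{-1} \cdot \ldots\}\right)\right)$ — I would carry the arithmetic through to confirm both the variance term and the range term give exponent $\Omega(\eps^2 R) = \Omega(c k \log(n_0/\delta))$. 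An alternative, cleaner route is to phrase everything as a martingale: $M_j := T_j - j \AVG$ is a martingale with bounded increments $|M_j - M_{j-1}| \leq m_0$ and predictable quadratic variation $\leq m_0 \AVG$ per step, and then invoke a Freedman-type martingale concentration inequality (the same tool the paper advertises using for $\He$ and $\Hc$ in Lemmas~\ref{lem:sample-step1}--\ref{lem:sample-step2}), stopped at $j_\pm$. Either way, the only real content beyond bookkeeping is getting the variance bound $\Var(X_i) \leq m_0 \AVG$ and checking that the resulting exponent absorbs the $\log(n_0/\delta)$ with room to spare so the final probability is $n^{-\Omega(ck)} \delta$.
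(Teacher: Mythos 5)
Your setup is exactly the paper's: view $K$ as the stopping time of the partial sums of the per-RR-set step counts, and reduce the two-sided claim to the two tail events $\sum_{t\leq K_1} X_t < Rm_0$ and $\sum_{t\leq K_2} X_t > Rm_0$ with $K_1,K_2 = (1\mp\eps)R/\AVG$. The difference is in how the concentration is packaged. You treat the un-normalized step counts, notice they can be as large as $m_0$, and therefore reach for Bernstein/Freedman with the variance proxy $\E[X_i^2]\leq m_0\E[X_i]$. The paper sidesteps this entirely by defining $X_t$ to be the \emph{fraction} of edges checked (i.e.\ it divides by $m_0$ up front), so $X_t\in[0,1]$, $\E[X_t]=\AVG$, and the off-the-shelf multiplicative Chernoff bound (Lemma~\ref{lem:chernoff}) immediately gives $\exp(-\Omega(\eps^2\cdot K_1\AVG))=\exp(-\Omega(\eps^2 R))$, which is what you want. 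This is the same variance-aware reasoning in disguise — for $[0,1]$-valued variables the multiplicative Chernoff bound \emph{is} Bernstein, since $\Var(X)\leq\E[X]$ — but the normalization makes the argument three lines instead of requiring an explicit second-moment bookkeeping. One further simplification you should note: the samples in \textsc{Estimate} are drawn i.i.d.\ (each RR set is sampled fresh, uniformly over $V$, on the fixed graph at the start of the iteration), so the martingale/Freedman machinery you mention as an alternative is not needed here; it \emph{is} needed later for Lemmas~\ref{lem:sample-step1} and~\ref{lem:sample-step2}, where the RR sets are augmented adaptively as edges arrive, and that is the distinction the paper draws between this lemma and those.
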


The initial steps to built $\He$ and $\Hc$ are within $[(1-\eps)^2 Rm_0, (1+\eps)Rm_0]$ with high probability.
\begin{lemma}
\label{lem:init-sample}
Conditioning on the event of Lemma~\ref{lem:sample_size}, with probability at least  $1 - \frac{4\delta}{n^{ck/8}}$, the total number of initial steps for building $\He$ and $\He$ are within 
$[(1-\eps)^2R, (1+\eps)^2R]$, .
\end{lemma}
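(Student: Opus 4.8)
The plan is to condition on the outcome of the \textsc{Estimate} call --- in particular on the value of $K$, hence on $p = K/n_0$ --- and then establish concentration of the number of RR-set sampling steps performed inside \textsc{Buildgraph} by a Bernstein-type inequality. Throughout one works within a single iteration, so at the moment \textsc{Buildgraph} runs one has $n = n_0$, $m = m_0$, and the event of Lemma~\ref{lem:sample_size} supplies $K \in \big[(1-\eps)R/\AVG,\ (1+\eps)R/\AVG\big]$. Since the cost of generating one RR set is at most $m_0$ steps we have $\AVG \le 1$ and hence $K \ge (1-\eps)R \ge 1$, so in particular $p > 0$; if ever $p > 1$ (which happens only when RR sets are tiny) one truncates $p$ to $1$, which merely pushes the random sum below toward its deterministic upper value and makes the target bound easier, so I assume $p \le 1$.

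For a root $v$ write $\mathrm{step}(v)$ for the random number of steps a reverse BFS from $v$ uses; by the definition of $\AVG$ we have $\frac{1}{n_0}\sum_{v}\E[\mathrm{step}(v)] = \AVG\cdot m_0$, and since a reverse BFS examines each in-edge of each reached node at most once, $\mathrm{step}(v)\le m_0$ deterministically (here $m = m_0$). Let $X_v\in\{0,1\}$ be the independent bias-$p$ coins of line~\ref{line:build1} and set $T_{\He} := \sum_v X_v\,\mathrm{step}(v)$, the number of steps spent building $\He$. Conditioned on $K$, the coins and the RR-set randomness are fresh and mutually independent, so $\E[T_{\He}\mid K] = p\sum_v\E[\mathrm{step}(v)] = K\cdot\AVG\cdot m_0 \in \big[(1-\eps)Rm_0,\ (1+\eps)Rm_0\big]$; this identity is exactly what transfers the \textsc{Estimate} guarantee into a guarantee on \textsc{Buildgraph}'s cost.

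The substantive step is concentration of $T_{\He}$ about this mean. The summands $X_v\mathrm{step}(v)$ are independent, lie in $[0,m_0]$, and obey $\sum_v\E[(X_v\mathrm{step}(v))^2\mid K] \le m_0\sum_v\E[X_v\mathrm{step}(v)\mid K] = m_0\,\E[T_{\He}\mid K]$. Writing $\mu := \E[T_{\He}\mid K]\ge(1-\eps)Rm_0$, Bernstein's inequality gives $\Pr\big[|T_{\He}-\mu|>\eps\mu \,\big|\, K\big] \le 2\exp\!\big(-\Omega(\eps^2\mu/m_0)\big) \le 2\exp\!\big(-\Omega(\eps^2 R)\big)$; substituting $R = c\eps^{-2}k\log(n_0/\delta)$ makes this at most $2(\delta/n_0)^{\Omega(ck)}$, which is $\le 2\delta/n^{ck/8}$ once $c$ is a large enough absolute constant (using $\eps,\delta\le 1/2$ and $n_0\ge n/2$). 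On the complementary event $T_{\He}\in[(1-\eps)\mu,(1+\eps)\mu]\subseteq[(1-\eps)^2Rm_0,(1+\eps)^2Rm_0]$, and since this holds for each admissible value of $K$ it persists after conditioning on the event of Lemma~\ref{lem:sample_size}. The identical argument applied to $T_{\Hc} := \sum_v Y_v\,\mathrm{step}(v)$ --- the steps generating the sets $R_{v,\mathsf{cv}}$ in line~\ref{line:build2}, with independent bias-$p$ coins $Y_v$ --- together with a union bound over the two failure events yields the claimed $1-4\delta/n^{ck/8}$.

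The only real obstacle is that the summands $\mathrm{step}(v)$ are heavy-tailed: they can be as large as $m_0$ while their per-node mean $\AVG\,m_0$ may be orders of magnitude smaller, so an additive-range Chernoff/Hoeffding bound would destroy the savings entirely. This is precisely why one needs the Bernstein form with the coarse second-moment estimate $\E[(X_v\mathrm{step}(v))^2]\le m_0\,\E[X_v\mathrm{step}(v)]$, which is what forces the exponent to scale like $\eps^2R\asymp k\log(n/\delta)$ rather than being only inverse-polynomial; the $p>1$ truncation and the deterministic bound $\mathrm{step}(v)\le m_0$ are the only other points, and both are routine.
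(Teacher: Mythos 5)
Your proof is correct and follows essentially the same route as the paper's. The paper normalizes each summand to $Y_i := X_i\cdot\mathrm{step}(v_i)/m_0 \in [0,1]$ (the ``fraction of edges checked'') and applies the multiplicative Chernoff bound, whereas you keep the unnormalized summands $X_v\,\mathrm{step}(v) \in [0,m_0]$ and apply Bernstein with the second-moment bound $\E[(X_v\mathrm{step}(v))^2]\le m_0\,\E[X_v\mathrm{step}(v)]$; these two devices are interchangeable here --- multiplicative Chernoff for $[0,1]$-bounded sums is precisely a Bernstein-type inequality in disguise --- so your ``obstacle'' about heavy-tailed $\mathrm{step}(v)$ is real but is already dissolved by the paper's normalization just as it is by your explicit second-moment estimate. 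The rest matches: condition on $K$ from Lemma~\ref{lem:sample_size} so that $\E[T_{\He}\mid K] = K\cdot\AVG\cdot m_0 \in [(1-\eps)Rm_0,(1+\eps)Rm_0]$, concentrate within a $(1\pm\eps)$ multiplicative window, and union-bound over $\He$ and $\Hc$ to reach $1-4\delta/n^{ck/8}$. The $p\le 1$ truncation you mention is a reasonable (if unstated in the paper) housekeeping point.
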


{\noindent \bf Bounds on adaptive sampling \ \ }
We dynamically sample and augment RR sets after we initiate $\He$ and $\Hc$.
Consider the $t$-th edge to arrive, let $Z_{t, i}\cdot 2m_0$ be the number of edges checked by {\sc Insert-edge} when it augments the RR set of the $i$-th node. Note if {\sc Insert-node} and {\sc Buildgraph} did not sample the $i$-th node and its RR set, then we simply take $Z_{t, i} = 0$.
Slightly abuse of notation, we write $Z_{j}$ to denote the $j$-th random variable in the sequence $Z_{0,1}, \ldots, Z_{0, n_0}, Z_{1,1}, \ldots$, and we know $\{Z_j\}_{j\geq 1}$ forms a martingale.
Let $J$ be the number of steps executed by {\sc Insert-edge} for $\He$, our next Lemma asserts that the total number of steps is around its expectation.

\begin{lemma}
\label{lem:sample-step1}
Let $J_1$ be the smallest number such that $\sum_{i=1}^{J_1}\E[ Z_i] \geq (1-\eps)8R$, and $J_2$ be the largest number that $\sum_{i=1}^{J_2}\E[Z_i] \leq (1+\eps)8R$.
With probability at least $1 - \frac{2\delta}{n^{ck/3}}$, we have
$
J_1 \leq J \leq J_2
$.
\end{lemma}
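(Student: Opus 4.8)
The plan is to apply a martingale concentration inequality (Freedman-type or Azuma-type, after appropriate truncation) to the sequence $\{Z_j\}_{j\ge 1}$, combined with an optional-stopping argument to handle the fact that $J$ is itself a random stopping time correlated with the $Z_j$'s. Recall that $Z_{t,i}\cdot 2m_0$ counts the edges touched by \textsc{Insert-edge} when it augments the RR set of the $i$-th node, so each $Z_j\in[0,1]$ (an augmentation never scans more than $2m_0$ edges, since $m<2m_0$ in this iteration), and $\{Z_j\}$ is a martingale difference sequence with respect to the natural filtration generated by the arriving edges and the independent sampling coins. The quantity $J$ is defined as the number of steps (i.e., the index up to which we sum $Z_j\cdot 2m_0$, equivalently the number of $Z_j$'s consumed) executed by \textsc{Insert-edge} for $\He$ before the iteration terminates; termination happens when the accumulated construction cost $\sum_{j\le J} Z_j \cdot 2m_0$ first exceeds $16Rm_0$, i.e. when $\sum_{j\le J} Z_j$ first exceeds $8R$ (plus the one-time initialization cost already controlled by Lemma~\ref{lem:init-sample}, which I would fold in as a lower-order additive $O(\eps R)$ slack).

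First I would set up the two deterministic indices $J_1,J_2$ from their defining inequalities on the prefix sums of $\E[Z_i]$; note these depend only on the laws of the $Z_i$, not on the sample path. The goal $J_1\le J\le J_2$ then splits into two tail events: $\{J<J_1\}$ and $\{J>J_2\}$. For $\{J>J_2\}$: on this event the algorithm has not yet stopped at step $J_2$, so $\sum_{j\le J_2} Z_j < 8R$; but $\sum_{j\le J_2}\E[Z_j]\ge (1+\eps)\cdot 8R$ by definition of $J_2$ roughly (modulo the $\pm 1$ boundary term, which contributes at most $1\ll \eps R$ since $R=c\eps^{-2}k\log(n_0/\delta)$ is large), so the partial sum falls below its mean by at least $\eps\cdot 8R$. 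For $\{J<J_1\}$: on this event stopping has occurred by step $J_1-1$, so $\sum_{j\le J_1-1} Z_j\ge 8R$ while $\sum_{j\le J_1-1}\E[Z_j]\le(1-\eps)\cdot 8R$, so the partial sum exceeds its mean by at least $\eps\cdot 8R$. In both cases the bad event is contained in a deviation event for a fixed-length prefix sum of the martingale $\{Z_j\}$, of length $J_1$ or $J_2$ respectively; and crucially $J_2\le \poly(n)$ (there are at most $2m_0$ edges and $2n_0$ RR sets per edge, so at most $O(m_0 n_0)$ indices), so a union bound over the two prefixes costs only a $\poly(n)$ factor that is absorbed by the choice of $c$.

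The concentration step itself: since each $Z_j\in[0,1]$ and there are at most $O(m_0 n_0)=\poly(n)$ of them, a Azuma--Hoeffding bound gives deviation probability $\exp(-\Omega(\eps^2 R^2 / J_i))$, which is too weak because $J_i$ can be as large as $\poly(n)$ while $R$ is only polylogarithmic. So I would instead use a Bernstein/Freedman-type bound controlling the predictable quadratic variation $\sum_j \E[Z_j^2\mid \mathcal{F}_{j-1}]\le \sum_j \E[Z_j\mid\mathcal F_{j-1}]$ (using $Z_j\le 1$), which on the relevant prefix is $O(R)$; this yields deviation probability $\exp(-\Omega(\eps^2 R))=\exp(-\Omega(c k\log(n/\delta)))=(\delta/n)^{\Omega(ck)}$, matching the claimed $1-2\delta/n^{ck/3}$ after choosing $c$ appropriately and union-bounding. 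The main obstacle I anticipate is precisely this interplay between the stopping time $J$ and the martingale: one must be careful that the event $\{J>J_2\}$ (resp. $\{J<J_1\}$) is genuinely $\mathcal F_{J_2}$-measurable (resp. $\mathcal F_{J_1-1}$-measurable) so that the fixed-length concentration bound applies, rather than needing a uniform-over-all-stopping-times (maximal) inequality — though a maximal version of Freedman's inequality would also work and might be cleaner. A secondary subtlety is that the $Z_j$ across different arriving edges are not identically distributed (the incremental updates change the RR-set-augmentation cost), which is exactly why the martingale framing — rather than an i.i.d. Chernoff bound — is the right tool here; the predictable-variation bound does not care about non-identical distributions.
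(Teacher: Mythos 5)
Your proposal is correct and follows essentially the same route as the paper: observe $Z_j\in[0,1]$ forms a martingale, reduce $\{J<J_1\}$ and $\{J>J_2\}$ to prefix-sum deviation events at the deterministic indices $J_1,J_2$, and apply a variance-sensitive martingale concentration inequality (the paper invokes a ``multiplicative-form Azuma bound,'' which is exactly the Freedman/Bernstein-type bound you describe, giving exponent $\Omega(\eps^2\sum_j\E[Z_j])=\Omega(\eps^2 R)$ independently of the number of terms). Your explicit remark that the plain additive Azuma--Hoeffding bound is too weak because $J_i$ can be $\poly(n)$ while $R$ is only polylogarithmic is accurate and makes explicit a point the paper leaves implicit in its choice of concentration lemma.
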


With high probability, the total number of edge checked for $\Hc$ is bounded by $16(1+\eps)^2Rm_0$. 
\begin{lemma}
\label{lem:sample-step2}
Conditioning on the event of Lemma~\ref{lem:sample-step1}, with probability at least $1 - \frac{\delta}{n^{ck}}$, the total number of edges checked by \textsc{Insert-edge} for $\Hc$ is at most $16(1+\eps)^2 Rm_0$.
\end{lemma}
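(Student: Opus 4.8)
The plan is to run exactly the same martingale machinery used in Lemma~\ref{lem:sample-step1}, but now tracking the edges inserted into $\Hc$ rather than into $\He$. First I would set up the bookkeeping: for the $t$-th arriving edge, let $Y_{t,i}\cdot 2m_0$ denote the number of edges that {\sc Insert-edge} checks when it augments the covering RR set $R_{v,\mathsf{cv}}$ of the $i$-th node (with $Y_{t,i}=0$ if that node's covering RR set was not sampled), and relabel the sequence $Y_{0,1},\dots,Y_{0,n_0},Y_{1,1},\dots$ as $Y_1,Y_2,\dots$ just as was done for $Z_j$. Because the covering RR sets are sampled with the same probability $p$ and are augmented by the same reverse-BFS procedure as the estimation RR sets, the variables $Y_j$ and $Z_j$ have the \emph{same conditional distribution} given the history of arriving edges; in particular $\E[Y_j \mid \text{history}] = \E[Z_j \mid \text{history}]$ and $\sum_{i=1}^{J}\E[Y_i] = \sum_{i=1}^{J}\E[Z_i]$ for the stopping index $J$ defined in Lemma~\ref{lem:sample-step1}, where $J$ is driven only by the $\He$-side thresholding in Line~\ref{line:threshold}.

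The key steps, in order, are: (i) condition on the event of Lemma~\ref{lem:sample-step1}, so that the number of augmentation steps $J$ on the $\He$ side satisfies $J \le J_2$, and hence $\sum_{i=1}^{J}\E[Z_i] \le (1+\eps)8R$ by definition of $J_2$; (ii) transfer this to the $\Hc$ side: $\sum_{i=1}^{J}\E[Y_i] \le (1+\eps)8R$, so the \emph{expected} number of edges checked for $\Hc$ during adaptive augmentation is at most $(1+\eps)8R\cdot 2m_0 = 16(1+\eps)Rm_0$; (iii) add the initial build cost from Lemma~\ref{lem:init-sample}, which contributes at most $(1+\eps)^2 R m_0$ edges to $\Hc$; (iv) apply a martingale/Freedman-type concentration inequality to $\sum_{i=1}^{J}(Y_i - \E[Y_i])$ — using that each $Y_i \in [0,1]$ and that the number of summands $J$ is itself controlled (on the conditioned event $J \le J_2 = O(R)$) — to conclude that with probability $1 - \delta/n^{ck}$ the realized sum does not exceed its mean by more than an $\eps$ factor; (v) combine (ii)–(iv): the total is at most $(1+\eps)^2 R m_0 + (1+\eps)\cdot 16(1+\eps)Rm_0 \le 16(1+\eps)^2 Rm_0$ after absorbing the lower-order initial term (here the constant $16$ has enough slack, since $(1+\eps)^2 + 16(1+\eps)^2 \cdot \tfrac{1+\eps}{16}$-type bookkeeping, or more simply a slightly larger constant $c$ in $R$, closes the gap), giving the claimed bound.

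The main obstacle I anticipate is step (iv): the stopping index $J$ is a random variable correlated with the $Y_i$'s (both depend on the edge-arrival history and on the sampled estimation RR sets), so one cannot naively apply a concentration bound "at a fixed time." The clean way around this is to note that conditioning on the Lemma~\ref{lem:sample-step1} event pins $J$ into the deterministic range $[J_1, J_2]$, and then to apply a maximal martingale inequality uniformly over all stopping times up to $J_2$ (equivalently, run Freedman's inequality on the martingale $\sum_{i\le t}(Y_i - \E[Y_i])$ with predictable quadratic variation bounded by $\sum_{i\le J_2}\E[Y_i^2] \le \sum_{i\le J_2}\E[Y_i] = O(R)$, and take a union bound / optional stopping). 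Since $R = c\eps^{-2}k\log(n_0/\delta)$, the variance proxy is $\Theta(R)$ and the Bernstein-type deviation $\eps R$ gives failure probability $\exp(-\Omega(\eps^2 R)) = \exp(-\Omega(ck\log(n/\delta))) \le \delta/n^{ck}$ for a suitable absolute constant, which is exactly the probability claimed. Everything else is routine once the distributional identity between the $\He$-side and $\Hc$-side augmentation costs is observed.
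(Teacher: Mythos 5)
Your core approach — set up a martingale for the $\Hc$-side costs, transfer the bounds from Lemma~\ref{lem:sample-step1} via the distributional identity between the $\He$-side and $\Hc$-side costs, and apply a concentration inequality — is the same as the paper's, and the distributional-identity observation in your setup is exactly right. But there are two places where the plan drifts.

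First, your step (iv) identifies the wrong obstacle, and as a result reaches for heavier machinery than needed. You worry that the stopping index $J$ is ``correlated with the $Y_i$'s,'' and propose a maximal martingale inequality or Freedman's inequality with optional stopping to handle this. The paper's observation is much cleaner: $\He$ and $\Hc$ are sampled by \emph{independent} coin flips (the two probability-$p$ Bernoulli draws in Lines~\ref{line:build1}--\ref{line:build2} of \textsc{Buildgraph} and Lines~\ref{line:independent1}--\ref{line:independent2} of \textsc{Insert-node} are independent). The stopping time $J$ is determined entirely by the $\He$ side, so it is independent of the realizations $W_j$ (your $Y_j$). After conditioning on the Lemma~\ref{lem:sample-step1} event, $J$ lies in the deterministic window $[J_1,J_2]$ and the $W_j$'s are still a martingale with the same conditional means, so a single fixed-time multiplicative Azuma bound at $J$ gives the claim directly. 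No uniform-over-stopping-times argument is required, and invoking one is not wrong but obscures what is actually making the argument go through.

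Second, and more substantively, your step (iii) brings in the initial build cost from \textsc{Buildgraph}, which the lemma does not count: the statement bounds only the edges checked by \textsc{Insert-edge} for $\Hc$. The initial build cost is controlled separately in Lemma~\ref{lem:init-sample}. If you do fold it in, as you note yourself, the arithmetic $(1+\eps)^2Rm_0 + 16(1+\eps)^2Rm_0 = 17(1+\eps)^2Rm_0$ does \emph{not} fit inside $16(1+\eps)^2Rm_0$, and adjusting the constant $c$ in $R$ does not help (it rescales $R$ on both sides). Dropping the initial-cost term, as the lemma intends, the transfer from $\sum_{j\le J}\E[W_j] \le (1+\eps)8R$ plus a one-sided Azuma bound with multiplicative slack $(1+\eps)$ lands exactly on $\sum_{j\le J}W_j\cdot 2m_0 \le 16(1+\eps)^2Rm_0$ with failure probability $\exp(-\eps^2\cdot 8(1-\eps)c\eps^{-2}k\log(n/\delta)/3) \le \delta/n^{ck}$, which is the paper's computation.
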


{\noindent \bf Approximation guarantee \ \ } We prove that the coverage function induced on $\Hc$ yields a good approximation on the influence spread function.
At any time step $t$, let $V_t$ be the set of node at time step $t$. For any node $v \in V_t$ and node set $S\subseteq V_t$, let $x_{t, v, S} = 1$ if $S\cap R_{v, \mathsf{cv}} \neq \emptyset$, and $x_{t, v, S} = 0$ otherwise. That is, $x_{t, v, S} = 1$ iff $S$ has non-zero intersection with the RR set of node $v$ at time step $t$. We note that if a node $v$ was not sampled by {\sc Insert-node} and {\sc Buildgraph}, and hence does not appear in $V_{R, \mathsf{cv}}$, we set $x_{t, v, S} = 0$.

\begin{definition}[Normalized coverage function]
Define $f_{\mathsf{cv}, t}: 2^{V_t}\rightarrow \R^{+}$ to be the {\em normalized} coverage function.
$f_{\mathsf{cv}, t}(S) = \frac{n_0}{K}\sum_{v\in V_t}x_{t, v, S}.$
\end{definition}

The normalized coverage function $f_{\mathsf{cv}, t}$ is an unbiased estimator on the influence spread function and it achieves good approximation with high probability. 
\begin{lemma}
\label{lem:unbias}
At any time step $t$ and $S\subseteq V_t$, we have
$
\E[f_{\mathsf{cv}, t}(S)] = \sigma_t(S),
$
where $\sigma_t$ is the influence spread function at time $t$.
\end{lemma}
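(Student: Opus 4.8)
The plan is to show that the estimator is unbiased by a two-step averaging argument: first condition on the realization of all live-edge graphs (equivalently, on the RR sets $R_{v,\mathsf{cv}}$), then take expectation over the independent sampling that decides which nodes enter $V_{R,\mathsf{cv}}$. The starting point is the classical identity from \cite{borgs2014maximizing}: for a uniformly random root $v$ and its RR set $R_v$, $\Pr[S \cap R_v \neq \emptyset] = \sigma(S)/n$, or equivalently $\sigma_t(S) = \sum_{v \in V_t} \Pr[S \cap R_v \neq \emptyset]$, where $R_v$ is the RR set rooted at $v$ under the current graph $G_t$. I would first recall (or re-derive in one line) why this holds: $\sigma_t(S) = \E_{L}[|\Gamma(S,L)|] = \E_L[\sum_{v} \mathbf{1}[v \in \Gamma(S,L)]] = \sum_v \Pr_L[v \text{ reachable from } S] = \sum_v \Pr[S \cap R_v \neq \emptyset]$, using that $v$ is reachable from $S$ in $L$ iff $S$ intersects the set of nodes that reach $v$ in $L$.

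Next I would unfold the definition of $f_{\mathsf{cv},t}(S) = \frac{n_0}{K}\sum_{v \in V_t} x_{t,v,S}$ and take expectations. The key point is that $x_{t,v,S} = 1$ requires two things: that node $v$ was selected into $V_{R,\mathsf{cv}}$ (which happens independently with probability $p = K/n_0$, either in \textsc{Buildgraph} line~\ref{line:build2} or in \textsc{Insert-node} line~\ref{line:independent2}), and that the (augmented) RR set $R_{v,\mathsf{cv}}$ intersects $S$. Crucially, the incremental augmentation of $R_{v,\mathsf{cv}}$ performed by \textsc{Insert-edge} reconstructs exactly the RR set of $v$ under the current graph $G_t$ — this is the correctness-of-augmentation fact that makes the whole reduction work, and I would state it explicitly as the property that conditioned on $v$ being sampled, $R_{v,\mathsf{cv}}$ is distributed as a fresh RR set rooted at $v$ in $G_t$. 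The sampling of $v$ into $V_{R,\mathsf{cv}}$ is independent of the live-edge randomness used to build $R_{v,\mathsf{cv}}$. Therefore $\E[x_{t,v,S}] = \Pr[v \in V_{R,\mathsf{cv}}] \cdot \Pr[S \cap R_v \neq \emptyset] = p \cdot \Pr[S \cap R_v \neq \emptyset]$. Summing over $v \in V_t$ and multiplying by $n_0/K$:
\begin{equation}
\E[f_{\mathsf{cv},t}(S)] = \frac{n_0}{K} \sum_{v \in V_t} p \cdot \Pr[S \cap R_v \neq \emptyset] = \frac{n_0}{K}\cdot\frac{K}{n_0}\sum_{v\in V_t}\Pr[S\cap R_v \neq \emptyset] = \sigma_t(S).
\end{equation}

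The one subtlety I would be careful about — and I expect it to be the main point requiring attention rather than a serious obstacle — is the conditioning on $K$. Since $p = K/n_0$ and $K$ is itself a random quantity returned by \textsc{Estimate}, the cleanest treatment is to observe that \textsc{Estimate} uses a \emph{separate} pool of RR sets (the uniformly sampled ones counted by its loop) that is independent of the subsequent independent per-node sampling and of the live-edge graphs used for $\He$ and $\Hc$; hence we may condition on the value of $K$ throughout, run the argument above verbatim for that fixed $K$, and the identity holds for every realization of $K$, so it holds unconditionally. I would also note the boundary convention already fixed in the text — nodes not sampled into $V_{R,\mathsf{cv}}$ contribute $x_{t,v,S} = 0$ — which is consistent with writing the sum over all $v \in V_t$ and absorbing the selection event into $\E[x_{t,v,S}]$ as above. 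No concentration is needed here; that is deferred to the approximation lemma (Lemma~\ref{lem:approximation}). Altogether this is a short proof: recall the RIS identity, push the expectation through the independent Bernoulli$(p)$ selection, and cancel $n_0/K$ against $p$.
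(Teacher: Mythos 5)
Your proposal is correct and follows essentially the same approach as the paper's proof: express $\sigma_t(S)$ via the RIS identity $\sigma_t(S) = \sum_{v \in V_t}\Pr[S \cap R_v \neq \emptyset]$, then push the expectation through the independent Bernoulli$(p)$ node-selection and cancel $n_0/K$ against $p=K/n_0$. The paper's version is considerably terser and does not spell out the conditioning on $K$ or the independence of selection from live-edge randomness, both of which you correctly identify and handle.
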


\begin{lemma}
\label{lem:approximation}
After initializing $\He$ and $\Hc$, at any time step $t$. Let $S\subseteq V_t$, $|S| \leq k$. Assume the condition in Lemma~\ref{lem:sample_size} holds, then we have
\begin{align}
    \Pr\left[ f_{\mathsf{cv}, t}(S) > \E[f_{\mathsf{cv}, t}(S)] + \eps \OPT_t  \right]  \leq &~ \frac{\delta}{n^{ck/6}}, \text{ and},\label{eq:approximation4}\\
    \Pr\left[ f_{\mathsf{cv}, t}(S) < \E[f_{\mathsf{cv}, t}(S)] - \eps \OPT_t   \right]  \leq &~ \frac{\delta}{n^{ck}}.\label{eq:approximation5}
\end{align}
The expectation is taken over the randomness of the construction of $\Hc$, and $\OPT_t$ is defined as the maximum (expected) influence spread of a set of size $k$ at time $t$.
\end{lemma}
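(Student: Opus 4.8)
The plan is to apply a Bernstein/Freedman-type martingale concentration inequality to the random variable $f_{\mathsf{cv}, t}(S) = \frac{n_0}{K}\sum_{v \in V_t} x_{t,v,S}$, exploiting that it is a sum of bounded, essentially independent indicators scaled by $n_0/K$. First I would condition on the event in Lemma~\ref{lem:sample_size}, which pins down $K \in [(1-\eps)\tfrac{R}{\AVG}, (1+\eps)\tfrac{R}{\AVG}]$, and recall from Lemma~\ref{lem:unbias} that $\E[f_{\mathsf{cv}, t}(S)] = \sigma_t(S) \le \OPT_t$. The key structural observation is that whether node $v$ is sampled into $V_{R,\mathsf{cv}}$ is an independent coin of bias $p = K/n_0$ (from Line~\ref{line:build2} of \textsc{Buildgraph} and Line~\ref{line:independent2} of \textsc{Insert-node}), and conditionally on being sampled the RR-set realization is independent across $v$; so $\sum_{v \in V_t} x_{t,v,S}$ is a sum of $|V_t|$ independent Bernoulli variables, each with success probability $p \cdot \Pr[S \cap R_{v,\mathsf{cv}} \ne \emptyset]$.

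The main steps, in order: (1) write $f_{\mathsf{cv},t}(S) - \E[f_{\mathsf{cv},t}(S)] = \frac{n_0}{K}\sum_{v \in V_t}(x_{t,v,S} - \E x_{t,v,S})$, treating $K$ as fixed by the conditioning; (2) bound the per-term range by $\frac{n_0}{K} \le \frac{1}{(1-\eps)}\cdot\frac{\AVG}{R} \le \frac{2\AVG}{R}$ and bound the total conditional variance by $\sum_v \frac{n_0^2}{K^2}\Var(x_{t,v,S}) \le \frac{n_0}{K}\cdot\E[f_{\mathsf{cv},t}(S)] \le \frac{n_0}{K}\OPT_t$, using $\Var(x_{t,v,S}) \le \E x_{t,v,S}$ and $\sum_v \E x_{t,v,S} = \frac{K}{n_0}\sigma_t(S)$; (3) plug these into Bernstein's inequality with deviation $\eps\OPT_t$, so that the exponent is of order $\min\bigl\{\frac{(\eps\OPT_t)^2}{n_0 \OPT_t / K},\ \frac{\eps\OPT_t \cdot K}{n_0}\bigr\} \gtrsim \eps^2 \cdot \frac{K}{n_0}\OPT_t$; (4) observe that $\frac{K}{n_0}\OPT_t = p\cdot\OPT_t$ is, up to constants, the expected number of sampled RR-sets hit by the optimal set, which by the choice $R = c\eps^{-2}k\log(n_0/\delta)$ and a lower bound $p \cdot \OPT_t \gtrsim \frac{R}{\AVG \cdot \OPT_t}\cdot\OPT_t$... more carefully, since $p \OPT_t \ge p \cdot \frac{1}{k}\sum \text{(contributions)}$ one gets $\eps^2 p \OPT_t \gtrsim \eps^2 \cdot \frac{K}{n_0}\OPT_t \ge \Omega(k\log(n/\delta))$, giving failure probability $\le \delta/n^{ck/6}$ for the upper tail and $\le \delta/n^{ck}$ for the lower tail (the lower tail is sharper because Bernstein's lower-tail bound has no range term, so one can afford a smaller constant in the exponent—this explains the asymmetry in \eqref{eq:approximation4} vs.\ \eqref{eq:approximation5}).

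The hard part will be step~(4): making rigorous the claim that $\frac{K}{n_0}\OPT_t$ is large enough to absorb the $\log(n/\delta)$ factor with room to spare for the $n^{-ck/6}$ bound. This requires relating $\OPT_t$ at an arbitrary later time $t$ back to the sampling probability $p$ fixed at the start of the iteration via $\AVG$ and $n_0$; the cleanest route is to note $p\cdot\sigma_t(S) = \E[\tfrac{K}{n_0}\sum_v x_{t,v,S}]$ counts expected hits, and that a standard RIS fact (cf.\ Borgs et al.) gives $\sigma_t(S) = n_t \cdot \Pr_v[S \cap R_v \ne \emptyset]$, so the expected hit count is $\approx K \cdot \frac{\sigma_t(S)}{n_t} \ge \frac{K}{2}\cdot\frac{\OPT_t}{\OPT_t}\cdot(\ldots)$; pushing this through and checking that $K = \Theta(R/\AVG)$ combined with $R = c\eps^{-2}k\log(n_0/\delta)$ yields the stated exponents is the technical crux. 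A secondary subtlety is that the conditioning on Lemma~\ref{lem:sample_size}'s event slightly perturbs the distribution of the $x_{t,v,S}$; I would handle this by using the conditioning only to fix the value of $K$ (which is measurable w.r.t.\ the \textsc{Estimate} phase, independent of the $R_{v,\mathsf{cv}}$'s) and then applying the concentration bound to the fresh independent randomness of $\Hc$'s construction, so no further union-bound loss is incurred.
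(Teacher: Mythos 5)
Your overall plan is sound and closely parallels the paper's argument: condition on the event of Lemma~\ref{lem:sample_size} to fix $K$ (you correctly note this conditioning uses only \textsc{Estimate} randomness and does not contaminate the fresh randomness of $\Hc$), exploit the fact that $\sum_v x_{t,v,S}$ is a sum of independent Bernoullis, and apply a standard concentration inequality; the paper uses multiplicative Chernoff with an explicit case split on $\lambda = \E[f_{\mathsf{cv},t}(S)]/\OPT_t$ versus $\eps$, whereas your additive Bernstein route handles both regimes in one formula, which is arguably cleaner. Your variance bound $\sum_v (n_0/K)^2\Var(x_{t,v,S}) \le (n_0/K)\E[f_{\mathsf{cv},t}(S)]$ and the resulting exponent $\Theta(\eps^2\tfrac{K}{n_0}\OPT_t)$ are exactly what the paper's calculation reduces to.

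Where the proposal is incomplete is precisely what you flagged as the technical crux, step~(4), and the RIS fact you reach for there is the wrong one. The identity $\sigma_t(S) = n_t\cdot\Pr_v[S\cap R_v\neq\emptyset]$ is not what closes the loop. The inequality you need is $\AVG \leq \OPT_t/n_0$, which is Claim~3.3 of Borgs et al.~(reproduced as Lemma~\ref{lem:avg-app} in the appendix: the expected work $\AVG\cdot m$ to sample one RR set equals $\tfrac1n\sum_{(u,v)\in E}\sigma(v)\le \tfrac{m}{n}\OPT$), \emph{together with} monotonicity of $\OPT_t$ under incremental updates (so $\AVG\le\OPT_0/n_0\le\OPT_t/n_0$). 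Plugging this in gives
\begin{align*}
\frac{K}{n_0}\OPT_t \;\ge\; (1-\eps)\frac{R}{n_0\,\AVG}\OPT_t \;\ge\; (1-\eps)R \;=\;(1-\eps)\,c\,\eps^{-2}k\log(n_0/\delta),
\end{align*}
which makes the Bernstein exponent $\ge \eps^2(1-\eps)R/\mathrm{const} = \Omega(ck\log(n/\delta))$ as required. Without this specific lemma the argument stalls: your intermediate line ``$p\OPT_t\gtrsim\tfrac{R}{\AVG\cdot\OPT_t}\cdot\OPT_t$'' has a spurious $\OPT_t$ in the denominator and would not yield the needed lower bound. Also a small quibble on your explanation of the tail asymmetry: Bernstein's lower tail still carries a range term; the asymmetry in the paper (and what you should recover) comes from the slightly better constant in the lower-tail exponent ($\tfrac{\delta^2\mu}{2}$ versus $\tfrac{\delta^2\mu}{2+\delta}$ in multiplicative Chernoff, or the analogous constant improvement in one-sided Bernstein), not from the range term vanishing.
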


The following Lemma indicates pointwise approximation carries over the approximation ratio.
\begin{lemma}
\label{lem:func-approx}
Let $c \geq 2$ and let $f: 2^{V}\rightarrow \R^{+}$ be an arbitrary set function. Let $\mathcal{D}$ be a distribution over functions $g$ such that for all $S\subseteq V$, $\Pr_{g\sim D}[|f(S) - g(S)| -\gamma] \leq \frac{\delta}{n^{ck}}$. Let $S_{g}$ be an $\alpha$-approximate solution for function $g$, i.e. $g(S_g) \geq \alpha \max_{S\subseteq V, |S|\leq k}g(S)$, then we have 
\[
\Pr_{g\sim D}\left[f(S_g) \leq \alpha\max_{S\subseteq V, |S|\leq k}f(S) -2\gamma\right] \leq \frac{\delta}{n^{ck/2}}.
\]
\end{lemma}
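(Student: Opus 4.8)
The plan is to reduce everything to one \emph{uniform closeness} event and then run a short deterministic chain. Concretely, let $\mathcal{E}$ be the event that $|f(S)-g(S)|\le \gamma$ holds simultaneously for every $S\subseteq V$ with $|S|\le k$. Since $f$ is a fixed function and only $g\sim\mathcal{D}$ is random, and since the closeness hypothesis gives $\Pr_{g\sim\mathcal{D}}[|f(S)-g(S)|\ge\gamma]\le \delta/n^{ck}$ for each individual $S$, a union bound over the at most $n^k$ subsets of size at most $k$ yields $\Pr[\neg\mathcal{E}]\le n^k\cdot \delta/n^{ck}=\delta/n^{(c-1)k}$. Because $c\ge 2$ we have $(c-1)k\ge ck/2$, hence $\Pr[\neg\mathcal{E}]\le \delta/n^{ck/2}$; any constant slack (from the empty set, or from $\sum_{i\le k}\binom{n}{i}$ being marginally above $n^k$) is absorbed into the constant $c$, which is taken large enough throughout.

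Next I would condition on $\mathcal{E}$ and argue deterministically. Fix $S^\star\in\arg\max_{S\subseteq V,\,|S|\le k} f(S)$; this is well defined and non-random since $f$ is fixed. Both $S^\star$ and the (random) feasible solution $S_g$ have size at most $k$, so on $\mathcal{E}$ we have $g(S^\star)\ge f(S^\star)-\gamma$ and $f(S_g)\ge g(S_g)-\gamma$; moreover $S_g$ being an $\alpha$-approximate maximizer of $g$ gives $g(S_g)\ge \alpha\max_{|S|\le k} g(S)\ge \alpha\, g(S^\star)$. Chaining these inequalities,
\[
f(S_g)\ \ge\ g(S_g)-\gamma\ \ge\ \alpha\, g(S^\star)-\gamma\ \ge\ \alpha\bigl(f(S^\star)-\gamma\bigr)-\gamma\ =\ \alpha f(S^\star)-(\alpha+1)\gamma\ \ge\ \alpha f(S^\star)-2\gamma,
\]
where the final step uses $\alpha\le 1$. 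Thus $\mathcal{E}$ implies $f(S_g)\ge \alpha\max_{S\subseteq V,\,|S|\le k} f(S)-2\gamma$, so the complementary (bad) event in the lemma is contained in $\neg\mathcal{E}$ and has probability at most $\delta/n^{ck/2}$.

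There is no genuine obstacle here: this is the standard ``a pointwise-accurate surrogate transfers the approximation ratio'' argument. The two things to be careful about are (i) the exponent bookkeeping — verifying that the $n^k$-way union bound against $\delta/n^{ck}$ still beats $\delta/n^{ck/2}$, which is exactly what the hypothesis $c\ge 2$ buys — and (ii) using $\alpha\le 1$ to fold $(\alpha+1)\gamma$ into $2\gamma$; I would state that normalization of $\alpha$ explicitly. Everything else is routine.
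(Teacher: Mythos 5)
Your proof is correct and follows essentially the same route as the paper's: a union bound over the at most $n^k$ sets of size at most $k$ against the pointwise failure probability $\delta/n^{ck}$, then a deterministic chain of inequalities under the uniform-closeness event. You are slightly more careful than the paper in spelling out the $(\alpha+1)\gamma \le 2\gamma$ step (which needs $\alpha\le 1$) and in noting that $\sum_{i\le k}\binom{n}{i}$ versus $n^k$ is absorbed by $c\ge 2$, but these are the same observations the paper makes implicitly.
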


{\noindent \bf Bounds for amortized running time \ \ } We next bound the total number of iterations within each phase.
The key observation is that each time the algorithm restarts, the average steps of sampling a random RR set increases at least by a factor of $2$, with high probability.

\begin{lemma}
\label{lem:finite}
With probability at least $1 - \frac{4\delta}{n^{ck/16}}$, there are at most $O(\log n)$ iterations in a phase.
\end{lemma}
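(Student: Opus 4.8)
The plan is to prove one structural fact — that the (normalized) expected cost of sampling a uniformly random RR set at least doubles from each iteration of a phase to the next — and then to finish by a boundedness argument, since this quantity can only range over an interval of width $\poly(n)$.

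\textbf{Setup.} Fix a phase, so that $n_0,m_0$ are frozen and $n\in[n_0,2n_0)$, $m\in[m_0,2m_0)$ throughout it. Index its iterations by $j=1,2,\dots$; let $G^{(j)}$ be the influence graph when iteration $j$ starts, let $\AVG^{(j)} m_0$ be the expected number of steps to sample a uniformly random RR set on $G^{(j)}$, let $K^{(j)}$ be the value returned by \textsc{Estimate} at the start of iteration $j$, and $p^{(j)}=K^{(j)}/n_0$. Since a phase contains at most $m_0+n_0\le 2n$ updates, it contains at most $\poly(n)$ iterations, so I would begin by conditioning on the conclusions of Lemma~\ref{lem:sample_size} and Lemma~\ref{lem:sample-step1} holding simultaneously for all iterations; as each fails with probability $\le \delta/n^{\Omega(k)}$ and $R=\Omega(k\log(n/\delta))$, a union bound over $\poly(n)$ iterations keeps the total failure probability below $\frac{4\delta}{n^{ck/16}}$, the quantity claimed. (Lemma~\ref{lem:init-sample} is also available but is not needed here.)

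\textbf{The doubling step.} On that event I claim $\AVG^{(j+1)}\ge 2\AVG^{(j)}$ whenever iteration $j$ terminates in a restart. Iteration $j$ restarts exactly when the running cost of building $\He$ first exceeds $16Rm_0$. The observation that makes this usable is that, in the incremental model, every edge pointing into a node of an RR set is inspected only $O(1)$ times over the life of that set, so the total cost spent building $\He$ during iteration $j$ equals, up to $O(1)$-per-update bookkeeping, the cost of rebuilding those same RR sets \emph{from scratch} on the terminal graph $G^{(j+1)}$; hence $\sum_{v\in V_{R,\mathsf{est}}}e(R_v,G^{(j+1)})\ge (16-o(1))Rm_0$. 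Next, $V_{R,\mathsf{est}}$ is selected by coin flips that are oblivious to the live-edge randomness and has $|V_{R,\mathsf{est}}|\le 2K^{(j)}$ with high probability, so this sum concentrates around $|V_{R,\mathsf{est}}|\cdot\AVG^{(j+1)}m_0$; getting concentration of this \emph{stopped} sum is precisely what Lemma~\ref{lem:sample-step1} supplies. Therefore $\AVG^{(j+1)}\ge \frac{(16-o(1))R}{2\cdot 2K^{(j)}}\ge \frac{3R}{K^{(j)}}$, whereas Lemma~\ref{lem:sample_size} gives $\AVG^{(j)}\le (1+\eps)R/K^{(j)}$; dividing yields $\AVG^{(j+1)}\ge 2\AVG^{(j)}$. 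The constant $16$ in the threshold is chosen precisely so that this survives the $(1\pm\eps)$ and constant-factor losses.

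\textbf{Conclusion and the hard part.} Exploring any single RR set inspects at most $m<2m_0$ edges and costs at least one step, so $\AVG^{(j)}\in[1/m_0,\,2]$ for every $j$. If the phase contained iterations $1,\dots,T$, then $2\ge\AVG^{(T)}\ge 2^{T-1}\AVG^{(1)}\ge 2^{T-1}/m_0$, so $2^{T-1}\le 2m_0\le 2n^{2}$ and $T=O(\log n)$, as desired. I expect the real obstacle to be the step relating the data-dependent, cumulative $\He$-construction cost to $|V_{R,\mathsf{est}}|\cdot\AVG^{(j+1)}m_0$: the restart time of iteration $j$ is a function of the very RR-set realizations whose aggregate size one then needs to bound, so a naive Chernoff/union argument is unavailable and one must route through the martingale machinery of Lemma~\ref{lem:sample-step1} — which is exactly why the algorithm maintains $\He$ separately from $\Hc$. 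A secondary nuisance is checking that the numerical constants actually deliver a factor $2$ rather than merely a constant exceeding $1$, which is where the slack in the threshold $16Rm_0$ is spent.
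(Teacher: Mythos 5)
Your argument follows the paper's proof almost exactly: condition on Lemma~\ref{lem:sample_size} and Lemma~\ref{lem:sample-step1} holding across all iterations, use them to derive that the normalized average RR-set cost at least doubles across each restart (via the identity relating the total incremental construction cost of $\He$ to the from-scratch cost on the terminal graph), and then conclude $O(\log n)$ iterations from the fact that this quantity lives in a $\poly(n)$-width interval. The minor differences — your extra spare factor of $2$ in the denominator of the doubling estimate, and phrasing the bound on $|V_{R,\mathsf{est}}|$ as a separate high-probability event rather than working in expectation — are cosmetic and do not change the argument.
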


\begin{proof}[Proof Sketch of Lemma~\ref{lem:reduction}]
For correctness, by Lemma~\ref{lem:unbias} and Lemma~\ref{lem:approximation}, the {\em normalized} coverage function $f_{\mathsf{cv}, t}$ (defined on $\Hc$) guarantees
$
|f_{\mathsf{cv}, t}(S) - \E[f_{\mathsf{cv}, t}(S)]| =|f_{\mathsf{cv}, t}(S) - \sigma_t(S)| \leq \eps \OPT_t
$
for every time step $t$. Combining with Lemma~\ref{lem:func-approx}, an $\alpha$-approximate solution to the dynamic MAX-k coverage problem translates to a solution set $S_t$ that satisfies $f(S) \geq (\alpha -2\eps)\OPT_t$.

For amortized running time. There are $O(\log n)$ phases, and by Lemma~\ref{lem:finite}, there are at most $O(\log n)$ iterations in each phase.
Within each iteration, the {\sc Estimate} procedure and the construction of $\He$ takes $O(m_0ck\eps^{-2}\log n)$ steps in total and $O(k\eps^{-2}\log n)$ per update.
By Lemma~\ref{lem:init-sample} and Lemma~\ref{lem:sample-step2}, with high probability, the construction of $\Hc$ takes $16(1+\eps)^2Rm_0$ steps in total and $O(k\eps^{-2}\log n)$ steps per updates.
Hence, there are $O(km_0\eps^{-2}\log n)$ updates to the dynamic MAX-k coverage problem on $\Hc$.
Taking an union bound, the overall amortized running time per update is bounded by
$
\log n \cdot \log n \cdot(k\eps^{-2}\log n + k\eps^{-2}\log n + k\eps^{-2}\log n\cdot \mathcal{T}_{\mathsf{mc}}) \leq  O(\mathcal{T}_{\mathsf{mc}}k\eps^{-2}\log^3 n ).
$
\end{proof}

\subsection{Solving dynamic MAX-k coverage in near linear time}
\label{sec:max-k}

For ease of presentation, we assume an upper bound on the value of $n$ is known and we set $I = \{0,1,\ldots, \lceil \eps^{-1}\log n\rceil\}$.
The algorithm maintains $|I|$ threads and for the $i$-th thread, \textsc{Insert-edge-cov} and \textsc{Revoke} augments the solution set $S_i$ only when the marginal value of a node $u$ exceeds the threshold, i.e. $f_{S_i}(u) \geq \frac{\OPT_i - f(S_i)}{k}$.
In particular, the threshold decreases over time and each time it decreases, {\sc Revoke} scans over all existing nodes in $V_L$.
The algorithm returns the solution set with maximum value at each time step, i.e., return $\arg\max_{S_i}f_t(S_i)$.

\begin{figure*}
   \begin{minipage}{.48\linewidth}
    \begin{algorithm}[H]
    \caption{\textsc{Initialize}}
    \label{algo:init}
    \begin{algorithmic}[1]
    \State $\OPT_i \leftarrow (1+\eps)^{i}$, $S_i =\emptyset$, $\forall i \in I$
    \end{algorithmic}
    \end{algorithm}%
    \vspace{-6mm}
    \begin{algorithm}[H]
    \caption{\textsc{Revoke}($i$)}
    \label{algo:revoke}
    \begin{algorithmic}[1]
    \If{there exists a node $u$ such that $f_{S_i}(u) \geq \frac{\OPT_i - f(S_i)}{k}$ \textbf{and} $|S| < k$}
    \State $S_i \leftarrow S_i \cup \{u\}$.
    \State \textsc{Revoke}($i$).
    \EndIf
    \end{algorithmic}
    \end{algorithm}%
  \end{minipage}%
    \hfill
     \begin{minipage}{.48\linewidth}
    \begin{algorithm}[H]
    \caption{\textsc{Insert-edge-cov}($u, v$)}
    \label{algo:insert-cov}
    \begin{algorithmic}[1]
    \For{$i \in I$}
    \If{$f_{S_i}(u) \geq \frac{\OPT_i - f(S_i)}{k}$ \textbf{and} $|S| < k$} \label{line:update}
    \State $S_i \leftarrow S_i \cup \{u\}$.
    \State \textsc{Revoke}($i$).
    \EndIf
    \EndFor
    \end{algorithmic}
    \end{algorithm}%
  \end{minipage}%
 \end{figure*}

\begin{theorem}
\label{thm:max-k}
In the incremental model, there is an algorithm for dynamic MAX-k coverage that maintains a solution set with $(1-1/e-\eps)$-approximation at every time step and the amortized running time of the algorithm is at most $O(\eps^{-1}\log n)$.
\end{theorem}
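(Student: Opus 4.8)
The plan is to establish the two halves of Theorem~\ref{thm:max-k} separately: the approximation guarantee and the amortized running time, corresponding to the two lemmas (Lemma~\ref{lem:max-k-coverage1} and Lemma~\ref{lem:max-k-coverage2}) referenced in the technical overview.

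\textbf{Approximation ratio.} First I would analyze a single thread $i$ whose guess $\OPT_i$ satisfies $\OPT_i \in [(1-\eps)\OPT, \OPT]$, where $\OPT = \max_{|S|\le k} f_t(S)$ is the optimal coverage at the current time step; such a thread exists because the guesses $(1+\eps)^i$ form a geometric grid and $I$ ranges up to $\lceil \eps^{-1}\log n\rceil$, which covers $[1, n]$. The key structural fact is that after every call chain completes, the invariant $f_{S_i}(u) < \frac{\OPT_i - f(S_i)}{k}$ holds for \emph{every} node $u \in V_L$ with $|S_i| < k$ --- this is exactly what \textsc{Revoke} enforces by rescanning until no violating node remains. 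I would then split into two cases. If the thread terminated with $|S_i| = k$, then each of the $k$ elements was added when its marginal was at least $\frac{\OPT_i - f(S_i)}{k}$ at the time of insertion; a standard telescoping/greedy argument (tracking the residual $\OPT_i - f(S_i)$ and showing it shrinks by a factor $(1-1/k)$ at each step) gives $f(S_i) \ge (1 - (1-1/k)^k)\OPT_i \ge (1-1/e)\OPT_i$. If instead the thread terminated with $|S_i| < k$, then by the invariant every node has marginal $< \frac{\OPT_i - f(S_i)}{k}$; picking the nodes of an optimal solution $S^\star$ and using submodularity, $\OPT \le f_t(S^\star) \le f(S_i) + \sum_{u \in S^\star} f_{S_i}(u) < f(S_i) + k \cdot \frac{\OPT_i - f(S_i)}{k} = \OPT_i \le \OPT$, a contradiction unless $f(S_i)$ is already close to $\OPT_i$; more carefully this yields $f(S_i) > \OPT_i - (\OPT_i - f(S_i))$ rearranged, so in fact $f(S_i) \ge \OPT_i(1 - 1/e)$ holds here too (or even $f(S_i)$ within $\OPT_i$). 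Either way $f(S_i) \ge (1-1/e)\OPT_i \ge (1-1/e)(1-\eps)\OPT \ge (1-1/e-\eps)\OPT$, and since the algorithm returns $\arg\max_{S_i} f_t(S_i)$, the output inherits this bound.

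\textbf{Amortized running time.} The cost of \textsc{Insert-edge-cov} on a single update, within one thread, is $O(1)$ for the threshold check plus the cost of all the \textsc{Revoke} rescans it triggers; each rescan costs $O(|V_L|)$, which is far too much to charge directly, so I would use a potential-function argument. The natural potential for thread $i$ is (a constant times) the number of elements in $S_i$, scaled appropriately --- since $S_i$ only grows and $|S_i| \le k$, the total increase of the potential over the whole run is $O(k)$ per thread. The point is that a \textsc{Revoke} rescan that adds at least one new element can be charged to the potential increase, while a rescan that adds nothing terminates after one pass and is charged to the update that triggered it (there is at most one such "failed" terminal scan per insertion call). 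However, a naive version of this over-counts: each successful insertion triggers a full rescan of all of $V_L$. The fix, which the paper's data-structure remark hints at, is to maintain for each thread a bucketed/sorted structure on nodes by marginal value so that finding a node above the current threshold, or certifying none exists, costs $O(1)$ amortized rather than $O(|V_L|)$ --- marginals only decrease as $S_i$ grows, so elements move monotonically through buckets, and the total bucket-movement work is again bounded by a potential over the run. Summing: total work across all updates and all $|I| = O(\eps^{-1}\log n)$ threads is $O(\eps^{-1}\log n \cdot (\text{number of updates} + k \cdot \text{number of threads} \cdot \dots))$, giving amortized $O(\eps^{-1}\log n)$ per update.

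\textbf{Main obstacle.} I expect the running-time half to be the delicate part: the correctness argument is a textbook thresholded-greedy analysis, but bounding the amortized cost of the cascading \textsc{Revoke} calls requires carefully designing the auxiliary data structure so that each rescan does work proportional only to the number of elements it actually adds (plus $O(1)$), and then setting up the potential function so that every unit of work is charged either to a monotone quantity ($|S_i|$, which is bounded by $k$ and never decreases) or to the triggering update. Getting the per-thread amortized bound down to $O(1)$ --- rather than $O(k)$ or $O(\log n)$ --- is where the bookkeeping must be tight, and is the step I would spend the most care on.
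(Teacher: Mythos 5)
Your approximation argument matches the paper's Lemma~\ref{lem:max-k-coverage1} almost exactly: pick the thread with the right guess, telescope the residual $\OPT_i - f(S_{i,j})$ when $|S_i|=k$, and use monotonicity plus submodularity for a contradiction when $|S_i|<k$. (One small slip: in the $|S_i|<k$ case the contradiction gives $f(S_i) \geq \OPT_i$ outright, not merely $(1-1/e)\OPT_i$; your phrasing here is garbled but the conclusion lands in the right place.)

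The gap is in the running-time half, and it is exactly the piece you flag as delicate. The potential you propose --- a constant times $|S_i|$, giving total increase $O(k)$ per thread --- cannot account for the dominant cost, which is \emph{maintaining the marginal values} of all remaining nodes when a node is added to $S_i$. When $u$ enters $S_i$, every element $v$ that $u$ newly covers must be removed from $V_{u'}^i$ for every other $u'$ adjacent to $v$; that is $\Theta\bigl(\sum_{v \text{ newly covered}} \deg(v)\bigr)$ work per addition, which in aggregate can be $\Theta(m)$, not $O(k)$. Charging this to $|S_i|$ does not work because a single increment of $|S_i|$ can trigger arbitrarily much marginal-maintenance work. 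The paper's potential $\Phi(t) = 2|V_t| + 2|E_t| + 2\sum_{e\in E_t}X_e + \sum_{v\in V_R}Y_v + 2\sum_{u\in V_L}Z_u$ includes the term $2\sum_e X_e$ counting covered edges precisely so that each unit of marginal-update work is charged to an edge transitioning from uncovered to covered --- an event that (in the incremental model) happens at most once per edge, yielding $\Phi(t)\le 5t$. Your sentence ``the total bucket-movement work is again bounded by a potential over the run'' gestures at this but does not name the charge, and the named potential $c|S_i|$ is not it. Also, your claim that ``marginals only decrease as $S_i$ grows, so elements move monotonically through buckets'' is false in the dynamic stream: a newly arriving edge $(u,v)$ with $v$ uncovered increases $u$'s marginal. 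This is harmless ($O(1)$ per edge, charged to $|E_t|$), but it shows the monotone-bucket picture needs the full multi-term potential, not just $|S_i|$.
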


The approximation guarantee and the amortized running time of Algorithm~\ref{algo:init}-\ref{algo:insert-cov} are analysed separately. We defer detailed proof to the full version of this paper and outline a sketch below.

\begin{lemma}
\label{lem:max-k-coverage1}
The solution set is $(1-1/e-\eps)$-approximate to the optimal one.
\end{lemma}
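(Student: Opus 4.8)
The plan is to run the textbook threshold--greedy analysis for monotone submodular maximization, adapted to the incremental setting by exploiting that $f_t(S)$ is nondecreasing in $t$ for every \emph{fixed} $S$. Fix a time step $t$ and write $\OPT_t=\max_{S\subseteq V_L,\,|S|\le k}f_t(S)$; we may assume $\OPT_t\ge 1$, since otherwise every set is optimal. Because the guesses $\OPT_i=(1+\eps)^i$, $i\in I$, cover the range $[1,n]\supseteq[1,\OPT_t]$ by the choice of $|I|=\Theta(\eps^{-1}\log n)$ and $\OPT_t\le n$, there is an index $i^\star$ with $\OPT_{i^\star}\le \OPT_t<(1+\eps)\OPT_{i^\star}$, so in particular $\OPT_{i^\star}\ge(1-\eps)\OPT_t$. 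Since the algorithm outputs the thread of largest coverage, it suffices to prove $f_t(S_{i^\star})\ge(1-1/e-\eps)\OPT_t$.

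First I would establish a per-thread progress recursion for thread $i^\star$. Let $u_1,\dots,u_\ell$ be the elements of $S_{i^\star}$ in the order they were inserted (by \textsc{Insert-edge-cov} or \textsc{Revoke}), put $S^{(j)}=\{u_1,\dots,u_j\}$, and let $\tau_j\le t$ be the time of the $j$-th insertion, so $\tau_1\le\dots\le\tau_\ell$. The threshold test on line~\ref{line:update} at the moment $u_j$ is added reads $f_{\tau_j}(S^{(j)})-f_{\tau_j}(S^{(j-1)})\ge(\OPT_{i^\star}-f_{\tau_j}(S^{(j-1)}))/k$, which gives
\[
\OPT_{i^\star}-f_{\tau_j}(S^{(j)})\ \le\ \Bigl(1-\tfrac1k\Bigr)\bigl(\OPT_{i^\star}-f_{\tau_j}(S^{(j-1)})\bigr).
\]
Using $f_{\tau_j}(S^{(j-1)})\ge f_{\tau_{j-1}}(S^{(j-1)})$ (monotonicity in time for the fixed set $S^{(j-1)}$), the quantities $b_j:=\OPT_{i^\star}-f_{\tau_j}(S^{(j)})$ satisfy $b_j\le(1-\tfrac1k)b_{j-1}$ with $b_0=\OPT_{i^\star}$, so $b_\ell\le(1-\tfrac1k)^\ell\OPT_{i^\star}$; applying monotonicity in time once more, $f_t(S_{i^\star})\ge f_{\tau_\ell}(S^{(\ell)})\ge\bigl(1-(1-\tfrac1k)^\ell\bigr)\OPT_{i^\star}$.

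Then I would split on saturation of thread $i^\star$. If $\ell=k$, the recursion already gives $f_t(S_{i^\star})\ge(1-1/e)\OPT_{i^\star}\ge(1-1/e)(1-\eps)\OPT_t\ge(1-1/e-\eps)\OPT_t$. If $\ell<k$, then after the last update \textsc{Revoke}$(i^\star)$ has been run to completion, so every $u\in V_L$ satisfies $f_{t,S_{i^\star}}(u)<(\OPT_{i^\star}-f_t(S_{i^\star}))/k$ (and this threshold is positive). With $O_t$ an optimal size-$k$ set at time $t$, monotonicity and submodularity of coverage give
\[
\OPT_t=f_t(O_t)\ \le\ f_t(O_t\cup S_{i^\star})\ \le\ f_t(S_{i^\star})+\sum_{u\in O_t}f_{t,S_{i^\star}}(u)\ <\ f_t(S_{i^\star})+\OPT_{i^\star}-f_t(S_{i^\star})=\OPT_{i^\star},
\]
contradicting $\OPT_{i^\star}\le\OPT_t$. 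Hence $\ell<k$ is impossible for $i^\star$ and the bound of the previous case holds.

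The main obstacle is the bookkeeping in the recursion: unlike the static greedy, the coverage function changes between consecutive insertions into $S_{i^\star}$, and the marginal value of a fixed node is \emph{not} monotone in $t$ (inserting an edge incident to $S_{i^\star}$ only shrinks other nodes' marginals). The point is that one should never compare marginals across time --- only the set-values $f_\tau(S)$ for a fixed $S$ are monotone in $\tau$ --- and this is exactly what makes the telescoping $b_j\le(1-\tfrac1k)b_{j-1}$ go through; the same remark covers the degenerate \textsc{Revoke} additions with nonpositive threshold (possible only once $f(S_{i^\star})\ge\OPT_{i^\star}$, where $b_{j-1}\le 0$ and the inequality is immediate). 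A minor check is that the grid $\{(1+\eps)^i:i\in I\}$ covers $[1,\OPT_t]$, which follows from $\OPT_t\le n$ and the choice of $|I|$.
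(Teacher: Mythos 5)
Your proof is correct and follows the same overall strategy as the paper's: pick the thread $i^\star$ whose guess $(1+\eps)^{i^\star}$ sandwiches the current optimum, run the standard threshold-greedy telescoping recursion to handle the saturated case $\ell=k$, and use a submodularity-plus-monotonicity contradiction for the unsaturated case. One thing you do better: the paper's recursion $f(S_{i,j+1})-f(S_{i,j})\ge\frac1k(\OPT_i-f(S_{i,j}))$ is written with the coverage function at the \emph{query} time $t$, but the algorithm's threshold test is actually evaluated at the earlier insertion time $\tau_j$, and in the incremental model marginals are not monotone in $t$, so the paper's inequality does not literally follow from the algorithm. Your $b_j=\OPT_{i^\star}-f_{\tau_j}(S^{(j)})$ bookkeeping, together with the observation that only the set-values $f_\tau(S)$ for fixed $S$ (not marginals) are monotone in $\tau$, is exactly what is needed to make the telescoping rigorous, so your proof quietly repairs this gap. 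Two small caveats: first, you assert parenthetically that the threshold is positive when $\ell<k$; this is true except in the degenerate situation $S_{i^\star}=V_L$ with $|V_L|<k$, where the threshold can be nonpositive while \textsc{Revoke} has no new node to add -- but in that case $f_t(S_{i^\star})=\OPT_t$ trivially, so the bound still holds and you should just say so. Second, your phrasing ``$\ell<k$ is impossible'' is really ``$\ell<k$ forces the degenerate case''; the paper instead concludes $f(S_i)\ge\OPT_i$, which is an equivalent way of dispatching the same case and gives a slightly cleaner exposition. Neither affects correctness.
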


\begin{proof}[Proof Sketch]
Let the current optimum satisfies $(1+\eps)^{i} < \OPT \leq (1+\eps)^{i+1}$ for some $i\in I$.
For the $i$-th thread, if $|S_i| < k$, then one can prove $f(S_i) \geq (1+\eps)^{-1}\OPT$. On the otherside, if $|S_i| = k$. Let $s_{i, j}$ be the $j$-th element added to the set $S_i$, and denote $S_{i, j} = \{s_{i,1}, \ldots, s_{i, j}\}$. Our algorithm guarantees
$
f(S_{i, j+1}) - f(S_{i, j}) \geq \frac{1}{k}(\OPT_i - f(S_{i, j}))
$
for all $j \in [k]$.
Unravel the recursion, one has
$f(S_{i}) \geq \left(1-\frac{1}{e} -\eps\right)\OPT$.
\end{proof}

\begin{lemma}
\label{lem:max-k-coverage2}
The algorithm~\ref{algo:init}-\ref{algo:insert-cov} has amortized running time $O(\eps^{-1}\log k)$.
\end{lemma}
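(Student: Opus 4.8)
The plan is to charge the running time to a carefully chosen potential function, maintained separately within each of the $|I| = O(\eps^{-1}\log n)$ threads. Since the threads are independent, it suffices to show that the $i$-th thread incurs $O(1)$ amortized time per update, and then multiply by $|I|$. Fix a thread $i$ and write $S_i$ for its current solution set, $\OPT_i = (1+\eps)^i$ for its guessed optimum, and $\tau_i = (\OPT_i - f(S_i))/k$ for its current threshold. The work done by the thread on an update falls into two categories: (a) the $O(1)$ work of testing the newly inserted node $u$ against the threshold in \textsc{Insert-edge-cov}, and (b) the work inside \textsc{Revoke}, which — each time an element is actually added to $S_i$ — re-scans all currently present nodes of $V_L$ to look for another element above the (now lowered) threshold. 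Category (a) is trivially $O(1)$ per update; the whole difficulty is category (b).

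The key structural observation is that $|S_i|$ only ever increases and is capped at $k$, so \textsc{Revoke} triggers a full rescan at most $k$ times over the \emph{entire} lifetime of thread $i$. Each rescan costs $O(|V_L|)$, so the total \textsc{Revoke} cost in thread $i$ is $O(k|V_L|)$. But $|V_L|$ is exactly the number of node-insertion updates seen so far (in the incremental model nodes are only inserted), so if there are $N$ updates total we get $O(kN)$ work, which is $O(k)$ amortized — not $O(1)$. To shave the factor of $k$ I would set up the potential argument: let $\Phi_i = c\cdot |V_L| \cdot (k - |S_i|)$ for a suitable constant $c$, so $\Phi_i \geq 0$ always and $\Phi_i$ drops by $c|V_L|$ each time an element is added to $S_i$ (while it rises by only $c(k-|S_i|) \leq ck$ on a node insertion). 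A rescan costing $O(|V_L|)$ is then paid for by the drop in $\Phi_i$; a node insertion contributes $O(k)$ to $\Delta\Phi_i$. Summing over threads gives amortized $O(|I|\cdot k) = O(k\eps^{-1}\log n)$, which still has the $k$. To get the clean $O(\eps^{-1}\log k)$ bound claimed, I expect one additionally exploits that a node with marginal value $0$ (relative to $S_i$) can never pass the threshold, so a rescan only needs to revisit nodes whose marginal is still positive, and these can be kept in a per-thread priority structure keyed by marginal value; one argues that over the thread's lifetime each node is removed from this structure at most once, and that maintaining it costs $O(\log k)$ per operation (the structure holds at most... one carefully bounds its relevant size by $k$ via the threshold), absorbing the rescan cost into insertion/deletion costs rather than full scans.

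Concretely, the steps I would carry out, in order: (1) isolate a single thread and enumerate its operations, reducing the claim to "$O(\log k)$ amortized per thread"; (2) maintain, per thread, the nodes with strictly positive current marginal in a balanced BST / heap keyed by marginal value, and observe that every \textsc{Revoke} can be implemented by repeatedly popping the max and either adding it to $S_i$ (if it clears the threshold $\tau_i$) or discarding it permanently (since its marginal, being maximal among positives and below $\tau_i$, can only decrease as $S_i$ grows, so it will never clear a future — smaller — threshold either; here I use that $\OPT_i - f(S_i)$ is nonincreasing in this thread because $f(S_i)$ only grows); (3) charge each pop either to the addition of an element to $S_i$ (at most $k$ such, total $O(k\log k)$) or to the permanent removal of a node (at most one per node, total $O(N\log k)$), and charge each insertion of a new node into the structure to that node's arrival ($O(\log k)$); (4) conclude $O(\log k)$ amortized per thread and multiply by $|I| = O(\eps^{-1}\log n)$. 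The main obstacle is step (2): making the greedy-threshold data structure correct requires the monotonicity claim that once a node's marginal falls below the \emph{current} threshold it stays below \emph{all future} thresholds of that thread, which hinges both on submodularity (marginals are nonincreasing as $S_i$ grows) and on the threshold itself being nonincreasing — I would state and prove this monotonicity lemma first, as everything else is bookkeeping on top of it.
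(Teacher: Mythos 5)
Your high-level framing is right (decompose per thread, aim for $O(1)$ amortized per thread, multiply by $|I|$), but the core technical step is wrong and the proposal as written does not go through. Your step (2) rests on the monotonicity claim that once a node's marginal falls below the \emph{current} threshold $\tau_i = (\OPT_i - f(S_i))/k$ it will never exceed any \emph{future} threshold of that thread. This is false. Both the node's marginal and the threshold $\tau_i$ are nonincreasing as $S_i$ grows, but they need not decrease at comparable rates, so the ordering between them can flip. Concretely, take coverage with $k=2$, $\OPT_i = 10$, $S_i = \emptyset$, threshold $5$; node $s$ covers $5$ fresh elements and node $u$ covers $3$ elements disjoint from $s$'s. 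Initially $u$'s marginal $3 < 5$, but after adding $s$ the threshold drops to $(10-5)/2 = 2.5$ while $u$'s marginal stays $3$, so $u$ now clears the threshold. Hence you cannot ``discard permanently'' a popped node that fails the current test — \textsc{Revoke} genuinely needs to reconsider such nodes after the threshold drops, which is exactly why the naive implementation is expensive. Your fallback potential $\Phi_i = c|V_L|(k-|S_i|)$ only gives $O(k)$ amortized, which you already noticed is insufficient.

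The paper's proof circumvents this with a different amortization target. It maintains, per thread, the set $V_R^i$ of covered right-nodes, the residual coverage sets $V_u^i = N(u)\setminus V_R^i$ for each left-node $u$, and an ordering on the sizes $|V_u^i|$; the potential is
\begin{align*}
\Phi(t) = 2|V_t| + 2|E_t| + 2\sum_{e\in E_t} X_e + \sum_{v\in V_R} Y_v + 2\sum_{u\in V_L} Z_u,
\end{align*}
where $X_e, Y_v, Z_u$ indicate whether edge $e$ is covered, right-node $v$ is covered, and left-node $u$ lies in $S_i$, respectively. The crucial observation is that when \textsc{Revoke} adds a node $u$ to $S_i$, the work incurred — pushing $V_u^i$ into $V_R^i$, deleting those elements from every other $V_{u'}^i$, and reordering — is bounded by the number of edges that transition from uncovered to covered, and in the incremental/monotone setting each edge makes this transition at most once, so the term $2\sum_e X_e$ absorbs it. Nothing in this argument needs your (false) claim about thresholds; the charging is against edge-state changes, not against node popularity. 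This is the idea you were missing, and it is what makes each thread $O(1)$ amortized (whence the total is $O(|I|) = O(\eps^{-1}\log n)$; the ``$\log k$'' in the lemma statement appears to be a typo for $\log n$, matching Theorem~\ref{thm:max-k}).
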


\begin{proof}[Proof Sketch]
We prove that for each thread $i \in I$, the amortized running time is $O(1)$. 
Let $V_{R}^{i} \subseteq V_R$ be nodes covered by $S_i$.
Let $P(t)$ denote the number of operations performed on the $i$-th thread up to time $t$. 
For each edge $e = (u, v)$, let $X_e$ denote whether the edge $e$ is covered by $V_{R}^{i}$, i.e. $X_e = 1$ if $v \in V_{R}^{i}$ and $X_e = 0$ otherwise.
Similarly, for each node $v \in V_{R}$, let $Y_v$ denote whether node $v$ is included in $V_{R}^{i}$, i.e., $Y_v = 1$ if $v\in V_{R}^{i}$ and $Y_v = 0$ otherwise.
For each node $u \in V_{L}$, let $Z_u$ denote whether node $u$ is included in $S_i$, i.e., $Y_v = 1$ if $v\in S_i$ and $Y_v = 0$ otherwise.
Define the potential function $\Phi:t \rightarrow \R^{+}$:
\begin{align*}
    \Phi(t) := 2|V_t| +2|E_t| + 2\sum_{e\in E_t} X_e + \sum_{v\in V_{R}}Y_{v} + 2\sum_{v\in V_{L}}Z_{u}.
\end{align*}
We prove $P(t) \leq \Phi(t)$ always holds. This suffices for our purpose as one can easily show $\Phi(t) \leq 5t$. 
The claim is executed by an induction showing $P(t) - P(t-1) \leq \Phi(t) - \Phi(t-1)$ holds for $t$.
\end{proof}

Taking $\alpha = 1-1/e$ and $\mathcal{T}_{\mathsf{mc}} = O(\eps^{-1}\log n)$, we wrap up the proof of Theorem~\ref{thm:increment}.

\section{Fully dynamic influence maximization}
\label{sec:hard}

In the fully dynamic model, the social network involves over time and all four types of change exhibit.
Our main delivery is a sharp computational hardness result.
We prove that under the SETH, unless the amortized running time is $n^{1- o(1)}$, the approximation ratio can not even be $2^{-(\log n)^{1-o(1)}}$.
We first provide some background on fine-grain complexity.
We reduce from the Strong Exponential Time Hypothesis (SETH), which is a pessimistic version of $\mathsf{P}\neq \mathsf{NP}$ that postulates that much better algorithms for $k$-SAT do not exist. The SETH is first proposed in the seminal work of~\cite{impagliazzo2001complexity} and it is widely believed in the computational complexity, see \cite{williams2018some, rubinstein2019seth} for detailed survey.

\begin{conjecture}[Strong Exponential Time Hypothesis (SETH), \cite{impagliazzo2001complexity}]
\label{conj:seth}
For any $\eps > 0$, there exists $k \geq 3$ such that $k$-SAT on variables can not be solved in time $O(2^{(1-\eps)n})$.
\end{conjecture}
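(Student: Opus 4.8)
Strictly speaking, Conjecture~\ref{conj:seth} is a hypothesis rather than a theorem, and no unconditional proof is known --- nor expected with present techniques: a proof of SETH would in particular imply $\mathsf{P}\neq\mathsf{NP}$ (indeed $\mathsf{NP}\not\subseteq\mathsf{SUBEXP}$) and would resolve a large portion of fine-grained complexity in one stroke. Accordingly, the plan here is not to \emph{establish} SETH but to situate it: throughout this paper SETH is assumed, and every downstream statement (Theorems~\ref{thm:hard-ic} and~\ref{thm:hard-lt}) is conditional on it. What one can realistically do is (i) record the standard evidence and the equivalent formulations that make the assumption robust, and (ii) pinpoint exactly where an unconditional argument must break down.

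First I would recall the hierarchy of assumptions. The weaker Exponential Time Hypothesis (ETH) asserts $3$-SAT requires time $2^{\Omega(n)}$; the Sparsification Lemma of Impagliazzo--Paturi--Zane shows this is equivalent to the same lower bound for instances with $O(n)$ clauses, which is what makes ETH usable as a reduction source. SETH strengthens ETH by asserting that the best constant in the exponent tends to $1$ as the clause width $k\to\infty$; equivalently, CNF-SAT on $n$ variables with $\poly(n)$ clauses admits no $O(2^{(1-\eps)n})$-time algorithm for any fixed $\eps>0$. A natural sanity check is that the fastest known $k$-SAT algorithms (PPSZ and its descendants) run in time $2^{(1-c/k)n}$ for an absolute constant $c$, so the exponential savings provably achievable by current methods do vanish as $k$ grows --- consistent with, though of course not a proof of, Conjecture~\ref{conj:seth}.

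The core obstacle, and the reason this ``proof'' cannot be completed, is that any unconditional $2^{(1-\eps)n}$ lower bound for $k$-SAT is a separation at least as strong as $\mathsf{P}\neq\mathsf{NP}$, and no known technique evades the relativization, algebrization, and natural-proofs barriers for statements of this strength. One might instead hope to deduce SETH \emph{from} a more plausible assumption, but the fine-grained literature has moved in the opposite direction: it derives conditional hardness for orthogonal vectors, edit distance, many dynamic problems, and --- via the distributed PCP framework used in Section~\ref{sec:hard} --- approximation problems, all while \emph{assuming} SETH. There is no known reduction establishing SETH from a weaker hypothesis, and work on nondeterministic variants (e.g.\ $\mathsf{NSETH}$) suggests that even a nondeterministic reduction of this kind would itself be hard to obtain.

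The honest conclusion, and the one consistent with how the conjecture is used here, is therefore: we do not prove Conjecture~\ref{conj:seth}; we adopt it. The substantive work of Section~\ref{sec:hard} is then to show, via a distributed-PCP reduction, that under this adopted assumption no fully dynamic influence maximization algorithm can achieve a $2^{-(\log n)^{1-o(1)}}$-approximation in amortized time $n^{1-o(1)}$ --- and that reduction, rather than any attempt at SETH itself, is where the genuine effort of this part of the paper goes.
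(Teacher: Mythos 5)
You are right that Conjecture~\ref{conj:seth} is a hypothesis, not a theorem: the paper offers no proof of it and simply adopts it as the hardness assumption underlying Theorems~\ref{thm:hard-ic} and~\ref{thm:hard-lt}, exactly as you describe. Your treatment matches the paper's, so there is nothing to reconcile here.
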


Our starting point is the following SETH-based hardness of approximation result.
The result is proven in \cite{abboud2017distributed, chen2018hardness} using the distributed PCP framework \cite{abboud2017distributed, rubinstein2018hardness} for hardness of approximation results in P.

\begin{theorem}[\cite{abboud2017distributed, chen2018hardness, abboud2019dynamic}]
\label{thm:inner}
Let $\eps > 0$, $m = n^{o(1)}$ and $t = 2^{(\log n)^{1-o(1)}}$. Given two collections of $n$ sets $\mathcal{A}$ and $\mathcal{B}$ over universe $[m]$. Unless SETH is false, no algorithm can distinguish the following two cases in $O(n^{2-\eps})$ :

    {\indent \indent \bf YES instance}. There exists two sets $A  \in \mathcal{A}$, $B\in \mathcal{B}$ such that $B \subseteq A$; 
    
    {\indent \indent \bf NO instance}. For every $A\in \mathcal{A}$, $B \in \mathcal{B}$ we have $|A\cap B| < |B|/t$.
\end{theorem}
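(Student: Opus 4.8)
This statement is the SETH‑hardness of \emph{approximate set containment} coming out of the distributed‑PCP line of work, and it is quoted essentially verbatim from \cite{abboud2017distributed, chen2018hardness}; the route I would take to establish it proceeds in three stages. \emph{Stage~1 (a gapless base problem).} First apply the Impagliazzo–Paturi–Zane sparsification lemma, so that for every fixed $k$ it suffices to treat $k$-SAT instances on $N$ variables with only $M=O_k(N)$ clauses (there is slack here — $M$ up to $(\log n)^{2-o(1)}$ is still fine below). Then use the textbook split‑and‑list trick: partition the variables into two halves and, for each of the $2^{N/2}$ partial assignments $\alpha$ to the first half, record the set $S_\alpha\subseteq[M]$ of clauses it fails to satisfy, and similarly $T_\beta$ for the second half. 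The formula is satisfiable iff $S_\alpha\cap T_\beta=\emptyset$ for some pair; writing $n:=2^{N/2}$ and complementing one side, deciding in time $n^{2-\eps}$ whether the two families contain a pair with $B\subseteq A$ would refute SETH. At this point there is no gap: in the NO case a pair may still overlap in a single coordinate.

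\emph{Stage~2 (an algebraic non‑interactive verification with a tunable gap).} Reinterpret disjointness of $S_\alpha,T_\beta$ over $[M]$ as $\sum_{i\in[M]}\mathbf{1}_{S_\alpha}(i)\,\mathbf{1}_{T_\beta}(i)=0$ over a field $\mathbb{F}$ with $|\mathbb{F}|=2^{s}$, lay $[M]$ out as a grid of side $M^{1/\ell}$ in $\mathbb{F}^\ell$, take the Reed–Muller (low‑degree) extensions $\widehat f_\alpha,\widehat g_\beta:\mathbb{F}^\ell\to\mathbb{F}$, and run a \emph{non‑interactive} sum‑check: the prover supplies the $\ell$ partial‑sum polynomials (each of degree $O(M^{1/\ell})$), and the verifier, using public randomness $\vec r\in\mathbb{F}^\ell$, checks the sum‑check recursion together with the single leaf identity $\widehat f_\alpha(\vec r)\,\widehat g_\beta(\vec r)=(\text{value predicted by the proof at }\vec r)$. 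This has perfect completeness and soundness $O(\ell\,M^{1/\ell}/2^{s})=2^{-\Omega(s)}$ once the field is large enough; crucially, everything party $\alpha$ has to produce is a function of $\alpha$ and the public data alone — namely the single field element $\widehat f_\alpha(\vec r)$ — and likewise for $\beta$.

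\emph{Stage~3 (from the verifier to a low‑dimensional set gadget).} For each $\alpha$ build a set $A_\alpha$ and for each $\beta$ a set $B_\beta$ over the universe indexed by (the prover's proof)$\,\times\,$(the randomness $\vec r$)$\,\times\,$(a one‑hot slot for the alleged value of $\widehat f_\alpha(\vec r)$), arranged — by the usual arithmetization/one‑hot gadgetry — so that $A_\alpha\cap B_\beta$ counts exactly the runs on which the verifier accepts and $|B_\beta|$ counts the consistent (proof,$\vec r$) pairs. Perfect completeness then forces $B_\beta\subseteq A_\alpha$ whenever $S_\alpha\cap T_\beta=\emptyset$, while soundness $2^{-\Omega(s)}$ gives $|A_\alpha\cap B_\beta|<|B_\beta|/t$ with $t=2^{\Omega(s)}$ whenever the two sets intersect. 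The universe size is $2^{\widetilde O(\ell\,M^{1/\ell}\,s)}$; choosing $M=\Theta(\log n)$, $\ell\approx\sqrt{\log\log n}$ (so that $\ell\,M^{1/\ell}=(\log n)^{o(1)}$) and $s=(\log n)^{1-o(1)}$ makes the universe $2^{(\log n)^{1-o(1)}}=n^{o(1)}$ and $t=2^{(\log n)^{1-o(1)}}$, as claimed, and composing with Stage~1 turns an $n^{2-\eps}$ algorithm on the gadget instance into a refutation of SETH.

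\emph{Main obstacle.} Everything delicate lives in Stage~3. The first point is engineering the verifier's accept predicate so that it factors through a \emph{two‑way} set‑intersection count, i.e.\ so the one‑hot and low‑degree‑extension layers line up exactly; the second, and harder, point is guaranteeing that the YES case yields \emph{exact} containment $B\subseteq A$ rather than merely a $(1-o(1))$‑fraction overlap — this is what pins down that the protocol must have perfect completeness and that the gadget must be assembled from the verifier's accepting configurations rather than from an inner‑product threshold. Running against this is the parameter constraint: SETH forces $M=\Omega(\log n)$, so one must simultaneously keep the universe $n^{o(1)}$ and push the gap to $2^{(\log n)^{1-o(1)}}$, which is exactly why the large‑field sum‑check (rather than a constant‑gap protocol amplified by repetition) is the right choice in Stage~2.
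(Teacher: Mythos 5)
The paper does not prove Theorem~\ref{thm:inner}; it is stated as a citation to \cite{abboud2017distributed, chen2018hardness, abboud2019dynamic} and used as a black box, so there is no in-paper argument to compare against. Your reconstruction — sparsification plus split‑and‑list to reach a gapless set‑disjointness base, a large‑field non‑interactive sum‑check to turn equality into a $2^{-\Omega(s)}$‑soundness Merlin–Arthur check with perfect completeness, and then the accepting‑configuration set gadget with a one‑hot value layer — is exactly the distributed‑PCP route used in the cited works, and your parameter balancing ($\ell \approx \sqrt{\log\log n}$, $s = (\log n)^{1-o(1)}$, universe $2^{(\log n)^{1-o(1)}} = n^{o(1)}$) lands on the stated $m$ and $t$.

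One place where you should tighten the argument is the point you yourself flag in Stage~3. As written, you let $B_\beta$ range over ``consistent (proof,$\,\vec r$) pairs''; but if $B_\beta$ contains triples for \emph{every} valid proof transcript $\pi$, then even in the YES case $B_\beta \not\subseteq A_\alpha$, since $\alpha$'s leaf check will fail on the triples coming from dishonest transcripts. The standard fix, which you should make explicit, is to index the $\mathcal{B}$‑family by \emph{pairs} $(\beta,\pi)$ and set $B_{\beta,\pi} = \{(\pi,\vec r,\widehat g_\beta(\vec r)) : \vec r\}$ for that single fixed transcript $\pi$; perfect completeness of the sum‑check then gives $B_{\beta,\pi^*} \subseteq A_\alpha$ in the YES case for the honest transcript $\pi^*$, while in the NO case soundness bounds $|A_\alpha \cap B_{\beta,\pi}|/|B_{\beta,\pi}| \le 2^{-\Omega(s)}$ uniformly over all $\pi$. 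Since the number of transcripts is $2^{\widetilde O(\ell M^{1/\ell} s)} = n^{o(1)}$, this only blows up $|\mathcal{B}|$ to $n^{1+o(1)}$, which is harmless for the $n^{2-\eps}$ lower bound. With that repair the proposal matches the cited proof.
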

\begin{figure}[!htbp]
    \centering
    \includegraphics[width=.95\textwidth]{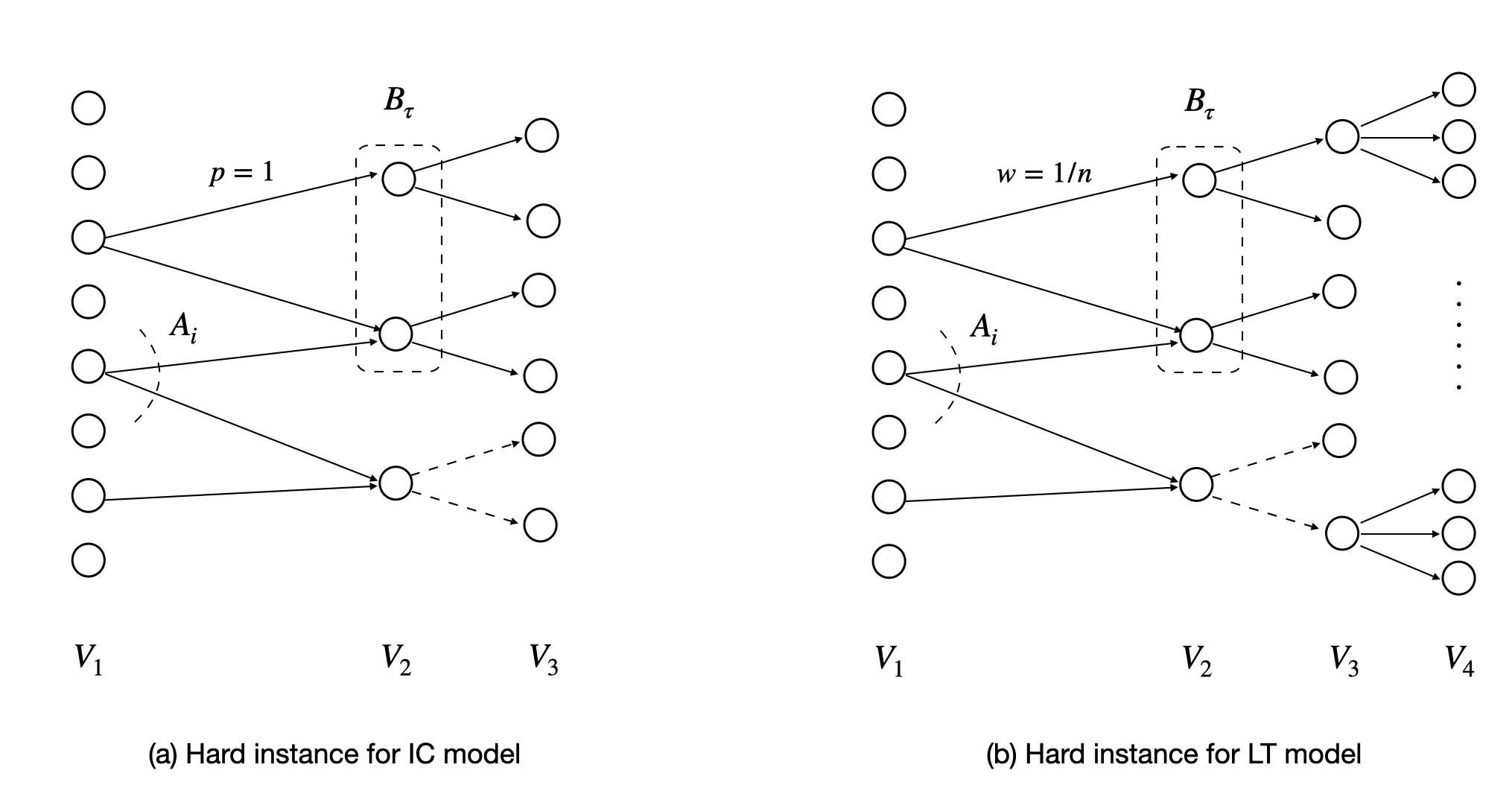}
    \caption{The hard instance for fully dynamic influence maximization.}
    \label{fig:hard}
\end{figure}

\begin{theorem}
\label{thm:hard-ic}
Assuming SETH, in the fully dynamic influence maximization problem under IC model, no algorithm can achieve $2^{-(\log n)^{1-o(1)}}$ approximation unless the amortized running time is $n^{1- o(1)}$.
\end{theorem}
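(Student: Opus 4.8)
The plan is to reduce the gap problem of Theorem~\ref{thm:inner} to fully dynamic influence maximization under the IC model. Given the two set collections $\mathcal{A}=\{A_1,\dots,A_n\}$ and $\mathcal{B}=\{B_1,\dots,B_n\}$ over the universe $[m]$, we build a static ``core'' influence graph that encodes the incidence structure of $\mathcal{B}$ against $[m]$, and then use the dynamic updates to successively ``probe'' each set $A_j\in\mathcal{A}$ in an amortized-cheap way. Concretely, the core graph has a layer of $m$ element-nodes $\{e_1,\dots,e_m\}$, and for each $B_i$ a node $b_i$ with outgoing edges of probability $1$ to exactly the element-nodes $e_\ell$ with $\ell\in B_i$; the $b_i$'s are in turn reachable (deterministically) from a single ``source'' gate so that seeding the source, when it is ``connected'' to a given $b_i$, spreads influence to all of $B_i$'s elements. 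We also attach, below the element layer, a large block of $N=\mathrm{poly}(n)$ fresh sink-nodes so that covering ``many'' elements versus ``few'' elements is amplified into a large multiplicative gap in $\sigma$; this is the role of the blow-up gadget depicted in Figure~\ref{fig:hard}. Then we process $\mathcal{A}$ one set at a time: to test $A_j$, we insert the $O(|A_j|)=O(m)=n^{o(1)}$ edges that connect the element layer selectively so that the reachable elements from the source are exactly $A_j\cap B_i$ for whichever $b_i$ is currently ``switched on'', cycle the switch over all $i$, record the best influence value seen across these steps, and then delete those edges before moving to $A_{j+1}$.

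The key steps, in order: (i) describe the static core gadget and the sink blow-up, and fix $k$ (a constant, e.g. $k=O(1)$ seed budget — in fact $k=1$ suffices, seeding only the source gate) so that the influence-maximization value at any time equals, up to the blow-up factor, the maximum over the currently ``active'' $b_i$ of $|A_{\text{current}}\cap B_i|$ plus lower-order terms; (ii) show the YES/NO dichotomy: in a YES instance some $A\supseteq B$, so at the time step where that $A$ is probed against that $B$ the reachable-element count is $|B|$, giving influence $\approx N|B|$; in a NO instance every intersection $|A\cap B|<|B|/t$, so at every time step the influence is $< N|B|/t + (\text{small})$, yielding a $t=2^{(\log n)^{1-o(1)}}$ multiplicative gap (after checking the additive lower-order terms are dominated, which is where the $\mathrm{poly}(n)$ size of the sink block and the choice of $|B|$ matter); (iii) count updates: the static core costs $O(nm)=n^{1+o(1)}$ one-time operations, and each of the $n$ probes of a set $A_j$ costs $O(m)$ edge insertions, $n$ switch toggles (each $O(1)$ edges, or $O(\log n)$ if a small tree is used to realize the switch), and $O(m)$ deletions, so the total number of updates is $n^{1+o(1)}$; (iv) conclude: a dynamic IM algorithm with amortized time $T$ solves the whole instance in $n^{1+o(1)}\cdot T$, so $T=n^{1-o(1)}$ would give $n^{2-\Omega(1)}$ total time, contradicting Theorem~\ref{thm:inner} under SETH, and this already rules out $2^{-(\log n)^{1-o(1)}}$-approximation because distinguishing influence value $N|B|$ from $<N|B|/t$ is exactly a $t^{-1}$-approximation question.

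The main obstacle I expect is step (ii): making the blow-up gadget interact correctly with the amortized accounting. The tension is that to amplify a gap of $t$ we need the ``signal'' part of the influence (the $N\cdot|A\cap B|$ term) to dominate all parasitic contributions — influence leaking through the switch mechanism, residual reachability from half-inserted edges during a probe, the $b_i$-nodes and source themselves being counted, and cross-talk between different $b_i$'s if the switch is not perfectly exclusive — yet the gadget must be realizable with only $n^{o(1)}$ edges per probe and $O(1)$ (or $O(\mathrm{polylog})$) edges per switch toggle so the update budget stays $n^{1+o(1)}$. I would handle this by (a) using deterministic edges (probability $1$) everywhere except possibly a single controllable edge, so $\sigma$ is just a reachability count and there is no concentration/expectation issue to fight; (b) making the switch a single edge from the source into $b_i$ that is inserted and then deleted, so exactly one $b_i$ is live at a time; (c) choosing $N$ and $|B|$ (WLOG one can assume $|B_i|$ is, say, $\Theta(\sqrt m)$ or pad all $B_i$ to equal size, keeping $m=n^{o(1)}$) so that $N|B|/t \gg m + n^{o(1)}$, which is easy since $N$ can be $\mathrm{poly}(n)$ while everything parasitic is $n^{o(1)}$. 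A secondary subtlety is that the dynamic IM guarantee is ``at every time step'', which is actually in our favour: we only need to read off the maintained solution's value at the $\le n\cdot n$ relevant time steps and take the max, and even an approximately-optimal maintained value certifies the gap. I would also remark that the same construction, with the live-edge semantics of the LT model substituted, gives Theorem~\ref{thm:hard-lt}; the only change is re-deriving the reachability-equals-$\sigma$ observation under LT, which holds verbatim when all weights on in-edges are $0/1$ with in-degree at most one on the relevant nodes.
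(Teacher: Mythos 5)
Your reduction has a fatal arithmetic gap at step~(iii), and it stems from a structural mistake in step~(i). By seeding only a single designated ``source gate'' and enforcing exclusivity via a one-at-a-time switch, you have to enumerate the pairs $(A_j, B_i)$ yourself: for each of the $n$ probes of $A_j$ you cycle the switch over all $n$ choices of $i$, so the update sequence has length $\Theta(n^2)$, not $n^{1+o(1)}$. An amortized-$n^{1-\eps}$ DIM algorithm would then only give total time $n^{3-\eps}$, which does \emph{not} contradict the $n^{2-\eps}$ lower bound of Theorem~\ref{thm:inner}. The conclusion in your step~(iv) silently assumes the $n^{1+o(1)}$ bound you never actually achieve.

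The missing idea is to let the seed-selection in IM perform one of the two maximizations for free. The paper puts a node $v_{1,i}$ for \emph{each} $A_i\in\mathcal{A}$ and keeps $k=1$ but lets the algorithm choose which node to seed, so in any fixed epoch the DIM algorithm's (approximately) optimal seed automatically computes $\max_i |A_i\cap B_\tau|$ with no updates at all. Only $\mathcal{B}$ is swept dynamically: epoch $\tau$ resets the $V_2\to V_3$ edges to encode $B_\tau$, costing $m^2 t = n^{o(1)}$ updates, for $n$ epochs and $n^{1+o(1)}$ total. Your instinct to fix $k=1$ and to use a deterministic blow-up gadget is fine, but wiring everything through one source node discards exactly the power of the algorithm that makes the amortization work. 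Separately, your sink block should have size $n^{o(1)}$ (as the paper's $m^2 t$ is, with the right choice of the $o(1)$ in $t$); taking $N=\poly(n)$ with a genuine polynomial exponent would also blow up the update count when those edges are touched. Finally, the LT extension is not as automatic as your closing remark suggests: a node $v_{2,j}$ can have in-degree up to $n$ from $V_1$, so you cannot put weight $1$ on all in-edges while satisfying the LT constraint $\sum_u w_{u,v}\le 1$; this is why the paper uses weight $1/n$ on the $V_1\to V_2$ edges and adds a fourth layer $V_4$ to amplify.
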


\begin{proof}[Proof Sketch] 
Let $m = n^{o(1)}$, $t=2^{(\log n)^{1-o(1)}}, k=1$.
Given an instance $\mathcal{A}$, $\mathcal{B}$ of the problem in Theorem~\ref{thm:inner}, we reduce it to DIM.
We assume $|B_\tau| \geq t$ for all $\tau \in [n]$ as we can duplicate the ground element for $t$ times.
Consider the following influence graph $G = (V, E, p)$, where the influence probability on all edges are $1$. The node set $V$ is partitioned into $V = V_1\cup V_2\cup V_3$, $|V_1| = n, |V_2| = m$, $|V_3| =m^2t$. Intuitively, the $i$-th node $v_{1, i}$ of $V_1$ corresponds to the set $A_i \in \mathcal{A}$, and the $j$-th node $v_{2,j}$ in $V_1$ correspond to the $j$-th element of ground set $[m]$.
There is a directed edge from $v_{1,i}$ to $v_{2,j}$, iff the $j$-th element is contained in the set $A_1$.
We write nodes in $V_3$ as $\{V_{3,j, \ell}\}_{j \in [m], \ell \in [mt]}$, and there is a directed edge node $v_{2, j}$ to $v_{3, j, \ell}$, $j \in [m], \ell \in [mt]$.

Consider the following update sequence of the DIM problem. The graph $G$ is loaded first and then followed by $n$ consecutive epochs.
In the $\tau$-th epoch, all edges between $V_2$ and $V_3$ are deleted, and for each $j \in B_\tau$, we add back the edge between $v_{2, j}$ and $v_{3, j, \ell}$ for all $\ell \in [mt]$.

One can show the total number of updates is at most $n^{1+o(1)}$. We prove by contrary and suppose there is an algorithm for DIM that achieves $2/t$-approximation in $n^{1-\eps}$ time. Under the above reduction, we output YES, if for some epoch $\tau \in [n]$, the DIM algorithm outputs a solution with influence spread greater than $2m|B_\tau|$. We output NO otherwise. One can prove (1) if there exists $A_i \in \mathcal{A}$, $B_\tau\in \mathcal{B}$ such that $B_\tau \subseteq A_i$, then we output YES, and, (2) if $|A_i\cap B_\tau| < |B_\tau| /t$ for any $i, \tau \in [n]$, the algorithm outputx NO. Hence, we conclude under SETH, there is no $2/t$-approximation algorithm unless the amortized running time is $O(n^{1- \eps})$.\qedhere
\end{proof}

The lower bound can be extended to the LT model, under the additional constraints that the algorithm only selects seeds from a prescribed set $V' \subseteq V$. This a natural assumption that has been made/discussed in previous work~\cite{chen2020adaptive, khanna2014influence, schoenebeck2020influence} for the LT model. The construction is similar to Theorem~\ref{thm:hard-lt}, with the exception that (1) the weight on edges between $V_1$ and $V_2$ are $1/n$ and all other edges have weight $1$, (2) the node set $V$ is partitioned into four parts $V_1\cup V_2\cup V_3\cup V_4$. $|V_4| = nm^2t$ and each node in $V_3$ is connected to $n$ nodes in $V_4$. Detailed proof can be found in the full version of this paper.

\begin{theorem}
\label{thm:hard-lt}
Assuming SETH, for the fully dynamic influence maximization problem under LT model, if the algorithm is only allowed to select seed from a prescribed set, then no algorithm can achieve $2^{-(\log n)^{1-o(1)}}$ approximation unless the amortized running time is $n^{1- o(1)}$.
\end{theorem}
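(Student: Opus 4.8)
The plan is to mimic the IC-model reduction of Theorem~\ref{thm:hard-ic}, but to engineer the LT weights so that selecting a single seed in $V_1$ still roughly ``counts'' how many elements of the current set $B_\tau$ lie inside the corresponding set $A_i$, while the extra layer $V_4$ amplifies the gap between the YES and NO cases above the noise introduced by the fractional weights $1/n$ on the $V_1$--$V_2$ edges. Concretely, I would start from the same instance $\mathcal{A},\mathcal{B}$ of Theorem~\ref{thm:inner}, again assume $|B_\tau|\ge t$ by duplicating ground elements, and build $G=(V,E,w)$ with $V=V_1\cup V_2\cup V_3\cup V_4$, $|V_1|=n$, $|V_2|=m$, $|V_3|=m^2t$, $|V_4|=nm^2t$; edges $v_{1,i}\to v_{2,j}$ of weight $1/n$ whenever $j\in A_i$, edges $v_{2,j}\to v_{3,j,\ell}$ of weight $1$, and edges $v_{3,j,\ell}\to v_{4,\cdot}$ of weight $1$ with each $V_3$-node feeding $n$ distinct $V_4$-nodes. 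The weight constraint $\sum_{u\in N_{\mathsf{in}}(v)}w_{u,v}\le 1$ is satisfied at every node because each $V_2$-node has at most $n$ incoming edges of weight $1/n$, and each $V_3$- or $V_4$-node has a single incoming live edge in the relevant epoch. The update sequence is identical in spirit: load $G$, then for $\tau=1,\dots,n$ delete all $V_2$--$V_3$ edges and re-insert, for each $j\in B_\tau$, the edges $v_{2,j}\to v_{3,j,\ell}$ for all $\ell$; the total number of updates is $O(nm^2t)=n^{1+o(1)}$.

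Next I would compute the influence spread of the seed $\{v_{1,i}\}$ in epoch $\tau$, using the live-edge formulation of the LT model. A node $v_{2,j}$ with $j\in A_i$ becomes active with probability exactly $1/n$ (it picks the edge from $v_{1,i}$ as its live in-edge with probability $1/n$, independently across $j$). Conditioned on $v_{2,j}$ active and $j\in B_\tau$, all $mt$ nodes $v_{3,j,\ell}$ activate deterministically, and each of them activates its $n$ children in $V_4$. Hence the expected spread contributed by the ``deep'' part is $\frac{1}{n}\,|A_i\cap B_\tau|\cdot mt\cdot(1+n)\approx |A_i\cap B_\tau|\cdot m^2 t$, while the contribution from $V_1\cup V_2$ is $O(m)$ and therefore negligible against the threshold. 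So, up to lower-order terms, $\sigma_\tau(\{v_{1,i}\})\approx |A_i\cap B_\tau|\cdot m^2t$. In the YES case there is some epoch $\tau$ with $B_\tau\subseteq A_i$, giving spread $\ge |B_\tau| m^2 t$; in the NO case every seed $v_{1,i}$ has $|A_i\cap B_\tau|<|B_\tau|/t$, so $\sigma_\tau(\{v_{1,i}\})< |B_\tau| m^2$, and seeds placed in $V_2,V_3,V_4$ trivially reach spread $O(m^2t)\le O(|B_\tau|m^2)$ (recall $|B_\tau|\ge t$) — this is exactly where the prescribed-seed-set assumption is used, to forbid the adversary from planting a seed arbitrarily deep; I would restrict $V'=V_1$ (or $V'=V_1\cup V_2$, whichever makes the bookkeeping cleanest). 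Thus a $2/t$-approximation algorithm lets us distinguish the two cases by checking whether any epoch exhibits spread exceeding $2|B_\tau|m^2$, and running it in amortized $n^{1-\eps}$ time over $n^{1+o(1)}$ updates contradicts Theorem~\ref{thm:inner} under SETH; rescaling $t$ and $m$ as $2^{(\log n)^{1-o(1)}}$ and $n^{o(1)}$ gives the stated $2^{-(\log n)^{1-o(1)}}$ inapproximability.

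The main obstacle I anticipate is controlling the error term coming from the $1/n$ weights: unlike the IC construction, a single seed does \emph{not} deterministically saturate its set, so one must verify that the concentration / expectation-level separation survives. Two subtleties need care. First, the algorithm only needs to maintain an \emph{expected}-influence-maximizing seed, and $\sigma_\tau$ is an expectation, so I can argue purely at the level of expected spreads — no high-probability live-edge concentration is required, which removes the need for a Chernoff argument over the diffusion. Second, I must make sure the $1/n$ attenuation does not drop the YES-case spread below the decision threshold: the YES spread is $\Theta(|B_\tau|m^2t)$ and the threshold is $2|B_\tau|m^2$, so the gap is a factor $\Theta(t)\to\infty$, comfortably absorbing the $O(m)$ shallow contribution and any rounding. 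The remaining work is the routine verification that (i) the weight normalization holds at every node throughout all epochs, (ii) the decision rule ``output YES iff some epoch has measured spread $>2|B_\tau|m^2$'' is correct in both directions given a $2/t$-approximate oracle, and (iii) the amortized-time accounting matches the $n^{2-\eps}$ barrier of Theorem~\ref{thm:inner}; all of these parallel the IC proof and I would defer the details to the full version.
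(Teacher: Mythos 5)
Your construction and update sequence are exactly the paper's, and the high-level argument — restrict the prescribed set to $V_1$, use the extra layer $V_4$ to amplify the gap that the $1/n$ weights would otherwise attenuate, then threshold the oracle's measured spread — matches the paper's proof. Two points need fixing, though neither is a structural gap.

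First, an arithmetic slip that propagates through your spread estimates: the deep contribution per unit of intersection is $\tfrac{1}{n}\bigl(1+mt(1+n)\bigr)\approx mt$, not $m^{2}t$, so a YES epoch yields spread $\Theta(|B_\tau|\,mt)$ and the NO bound is $<|B_\tau|\,m$ (up to lower-order $O(m)$). Consequently the decision threshold should be $2m|B_\tau|$, as in the IC proof, not $2|B_\tau|m^{2}$. With your threshold of $2|B_\tau|m^{2}$ and the correct YES spread $|B_\tau|mt$, a $2/t$-approximation returns only $\geq 2m|B_\tau|$, which is below $2|B_\tau|m^{2}$ whenever $m\geq 2$, so the completeness direction would fail; the fix is simply to use $2m|B_\tau|$.

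Second, your parenthetical alternative $V'=V_1\cup V_2$ does not work. Seeding a $V_2$-node $v_{2,j}$ with $j\in B_\tau$ deterministically activates all $mt$ children in $V_3$ and their $n$ children each in $V_4$, giving spread $1+mt(n+1)=\Theta(nmt)$, which exceeds even the YES-case optimum from $V_1$ and is available in both the YES and NO instances; this would erase the gap. The prescribed set must be $V_1$ only, which is what the paper uses and what your main option states. Your observation that arguing at the level of expected spread avoids any live-edge concentration step is correct and is implicit in the paper's treatment.
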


\section{Discussion}

We study the dynamic influence maximization problem and provide sharp computational results on the incremental update model and the fully dynamic model.
In the incremental model, we provide an algorithm that maintains a seed set with $(1-1/e-\eps)$-approximation and has amortized running time $k \cdot \poly(\log n, \eps^{-1})$, which matches the state of art offline IM algorithm up to poly-logarithmic factor. 
For the fully dynamic model, we prove that under SETH, no algorithm can achieve $2^{-(\log n)^{1 - o(1)}}$ approximation unless the amortized running time is $n^{1 - o(1)}$. 
There are a few interesting questions for future investigation: 
(1) Further improve the amortized running time in the incremental model. In particular, is it possible to reduce the amortized running time of the dynamic MAX-k coverage procedure to $O(1)$?
(2) Investigate fully dynamic influence maximization problem under mild assumptions, e.g. what if the graph is bipartite?

\newpage

\section*{Acknowledgement}
Binghui Peng wishes to thank Xi Chen for useful discussions on dynamic submodular maximization, and thank Matthew Fahrbach for useful comments. Binghui Peng is supported in part by Christos Papadimitriou's NSF grants CCF-1763970 AF, CCF-1910700 AF and a softbank grant, and by Xi Chen's NSF grants NSF CCF-1703925. 
\bibliographystyle{plain}
\bibliography{ref}

\newpage
\appendix

\section{Probabilistic tools}
\label{sec:prob}

\begin{lemma}[Chernoff bound, the multiplicative form]\label{lem:chernoff}
Let $X = \sum_{i=1}^n X_i$, where $X_i \in [0,1]$ are independent random variables. Let $\mu = \E[X] = \sum_{i=1}^{n}\E[X_i]$. Then \\
1. $ \Pr[ X \geq (1+\delta) \mu ] \leq \exp ( - \delta^2 \mu / (2+\delta) ) $, $\forall \delta > 0$ ; \\
2. $ \Pr[ X \leq (1-\delta) \mu ] \leq \exp ( - \delta^2 \mu / 2 ) $, $\forall 0 < \delta < 1$. 
\end{lemma}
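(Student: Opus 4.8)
\textbf{Proof proposal for Lemma~\ref{lem:chernoff}.} The plan is the standard exponential moment (Chernoff) method. For the upper tail, fix $t>0$ and apply Markov's inequality to the nonnegative random variable $e^{tX}$: $\Pr[X \geq (1+\delta)\mu] = \Pr[e^{tX} \geq e^{t(1+\delta)\mu}] \leq e^{-t(1+\delta)\mu}\,\E[e^{tX}]$. By independence of the $X_i$, $\E[e^{tX}] = \prod_{i=1}^n \E[e^{tX_i}]$. The key elementary inequality is that for $x \in [0,1]$ one has $e^{tx} \leq 1 + (e^t-1)x$ by convexity of $x \mapsto e^{tx}$ on $[0,1]$; taking expectations gives $\E[e^{tX_i}] \leq 1 + (e^t-1)\E[X_i] \leq \exp\big((e^t-1)\E[X_i]\big)$, using $1+y \leq e^y$. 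Multiplying over $i$ yields $\E[e^{tX}] \leq \exp\big((e^t-1)\mu\big)$, so $\Pr[X \geq (1+\delta)\mu] \leq \exp\big((e^t-1)\mu - t(1+\delta)\mu\big)$.

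I would then optimize over $t>0$ by choosing $t = \ln(1+\delta)$, which is positive for $\delta>0$. Substituting gives the bound $\exp\big(\mu(\delta - (1+\delta)\ln(1+\delta))\big)$. To reach the stated form it remains to verify the purely scalar inequality $\delta - (1+\delta)\ln(1+\delta) \leq -\dfrac{\delta^2}{2+\delta}$ for all $\delta>0$. I would prove this by defining $h(\delta) = (1+\delta)\ln(1+\delta) - \delta - \dfrac{\delta^2}{2+\delta}$, checking $h(0)=0$, and differentiating to show $h'(\delta) \geq 0$ on $\delta > 0$ (after simplification $h'(\delta) = \ln(1+\delta) - \dfrac{\delta(\delta+4)}{(2+\delta)^2}$, whose second derivative analysis or a further derivative check confirms nonnegativity). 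This finishes part 1.

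For the lower tail, the argument is symmetric but uses $e^{-tX}$ with $t>0$: $\Pr[X \leq (1-\delta)\mu] = \Pr[e^{-tX} \geq e^{-t(1-\delta)\mu}] \leq e^{t(1-\delta)\mu}\,\E[e^{-tX}]$. The same convexity bound with $-t$ in place of $t$ gives $\E[e^{-tX_i}] \leq \exp\big((e^{-t}-1)\E[X_i]\big)$ and hence $\E[e^{-tX}] \leq \exp\big((e^{-t}-1)\mu\big)$, so $\Pr[X \leq (1-\delta)\mu] \leq \exp\big((e^{-t}-1)\mu + t(1-\delta)\mu\big)$. Choosing $t = -\ln(1-\delta) > 0$ (valid since $0<\delta<1$) makes $e^{-t} = 1-\delta$ and yields $\exp\big(\mu(-\delta - (1-\delta)\ln(1-\delta))\big)$. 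The remaining scalar fact is $-\delta - (1-\delta)\ln(1-\delta) \leq -\delta^2/2$ for $0<\delta<1$, equivalently $(1-\delta)\ln(1-\delta) \geq -\delta + \delta^2/2$; I would prove this via the series $\ln(1-\delta) = -\sum_{j\geq 1}\delta^j/j$, giving $(1-\delta)\ln(1-\delta) = -\delta + \sum_{j\geq 2}\big(\tfrac1{j-1}-\tfrac1j\big)\delta^j = -\delta + \tfrac{\delta^2}{2} + \sum_{j\geq 3}\tfrac{\delta^j}{j(j-1)} \geq -\delta + \tfrac{\delta^2}{2}$, since all omitted terms are nonnegative.

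The main obstacle here is not conceptual but bookkeeping: justifying the two transcendental inequalities $\delta - (1+\delta)\ln(1+\delta) \leq -\delta^2/(2+\delta)$ and $(1-\delta)\ln(1-\delta) \geq -\delta + \delta^2/2$ cleanly. The lower-tail one falls out immediately from the logarithm series with manifestly nonnegative tail terms; the upper-tail one is slightly more delicate and is handled either by the monotonicity/derivative argument sketched above or by a comparison of series, and this is the step I would be most careful to write out in full.
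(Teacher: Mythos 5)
Your proof is correct and is the standard exponential-moment (Bernstein/Chernoff) argument; the paper states this lemma in its ``Probabilistic tools'' appendix without proof, as a known fact, so there is nothing in the paper to diverge from. Both scalar inequalities you isolate do hold as claimed (the upper-tail one also follows in one line from $\ln(1+\delta)\geq \tfrac{2\delta}{2+\delta}$, and your convexity check $(2+\delta)^3\geq 8(1+\delta)$ closes the derivative argument), so the write-up is complete.
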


\begin{lemma}[Hoeffding bound]\label{lem:hoeffding}
Let $X_1, \cdots, X_n$ denote $n$ independent bounded variables in $[a_i,b_i]$. Let $X= \sum_{i=1}^n X_i$, then we have
\begin{align*}
\Pr[ | X - \E[X] | \geq t ] \leq 2\exp \left( - \frac{2t^2}{ \sum_{i=1}^n (b_i - a_i)^2 } \right).
\end{align*}
\end{lemma}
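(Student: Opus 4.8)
The final statement is the Hoeffding bound, a classical concentration inequality, so the plan is to carry out the standard Chernoff (exponential moment) argument powered by Hoeffding's lemma on bounded mean-zero random variables. First I would center everything: set $Y_i := X_i - \E[X_i]$, so each $Y_i$ is supported on an interval of length $b_i - a_i$, satisfies $\E[Y_i] = 0$, and $S := \sum_{i=1}^{n} Y_i = X - \E[X]$. It then suffices to show $\Pr[S \ge t] \le \exp\!\big(-2t^2 / \sum_{i=1}^{n} (b_i - a_i)^2\big)$; the lower tail $\Pr[S \le -t]$ obeys the same bound by applying the argument to $-Y_i$ (whose ranges have the same lengths), and a union bound over the two tails yields the factor $2$ in the statement.

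For the upper tail, fix $s > 0$ and apply Markov's inequality to $e^{sS}$: $\Pr[S \ge t] = \Pr[e^{sS} \ge e^{st}] \le e^{-st}\,\E[e^{sS}] = e^{-st}\prod_{i=1}^{n}\E[e^{sY_i}]$, using independence of the $Y_i$ in the last step. The heart of the proof is Hoeffding's lemma: for a random variable $Y$ with $\E[Y] = 0$ and $Y \in [c,d]$ one has $\E[e^{sY}] \le \exp\!\big(s^2 (d-c)^2 / 8\big)$. I would establish this from convexity of $x \mapsto e^{sx}$: writing each $y \in [c,d]$ as the convex combination $\tfrac{d-y}{d-c}\,c + \tfrac{y-c}{d-c}\,d$ gives $e^{sy} \le \tfrac{d-y}{d-c}e^{sc} + \tfrac{y-c}{d-c}e^{sd}$, and taking expectations with $\E[Y] = 0$ leaves $\E[e^{sY}] \le h(s) := \tfrac{d}{d-c}e^{sc} - \tfrac{c}{d-c}e^{sd}$. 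Setting $\varphi(s) := \log h(s)$, a direct computation gives $\varphi(0) = 0$, $\varphi'(0) = 0$, and $\varphi''(s) = (d-c)^2\, u(s)\big(1 - u(s)\big) \le (d-c)^2/4$ for a suitable $u(s) \in (0,1)$; Taylor's theorem with Lagrange remainder then yields $\varphi(s) \le s^2 (d-c)^2 / 8$, as claimed.

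Plugging Hoeffding's lemma into the product, with $c = a_i - \E[X_i]$ and $d = b_i - \E[X_i]$ so that $d - c = b_i - a_i$, gives $\Pr[S \ge t] \le \exp\!\big(-st + \tfrac{s^2}{8}\sum_{i=1}^{n}(b_i - a_i)^2\big)$ for every $s > 0$. The exponent is a quadratic in $s$ minimized at $s^{\star} = 4t / \sum_{i=1}^{n}(b_i - a_i)^2$, and substituting $s = s^{\star}$ produces the bound $\exp\!\big(-2t^2 / \sum_{i=1}^{n}(b_i - a_i)^2\big)$; combined with the symmetric lower-tail estimate and the union bound this finishes the proof. The only genuinely delicate point is Hoeffding's lemma — specifically the uniform estimate $\varphi''(s) \le (d-c)^2/4$ — while the remaining steps are routine exponential-moment bookkeeping and a one-variable optimization.
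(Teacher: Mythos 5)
Your proof is correct: it is the standard Chernoff exponential-moment argument combined with Hoeffding's lemma (proved via convexity and the bound $\varphi''(s) \leq (d-c)^2/4$ on the log-moment-generating function), followed by optimizing over $s$, and all the computations check out, including the choice $s^{\star} = 4t/\sum_{i=1}^{n}(b_i-a_i)^2$. The paper states this lemma as a standard probabilistic tool without proof, so there is no authorial argument to compare against; yours is the canonical one.
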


\begin{lemma}[Azuma bound, the multiplicative form]
Let $X_1, \cdots, X_n \in [0,1]$ be real valued random variable. Suppose 
\[
\E[X_i  | X_1, \cdots, X_{n}] = \mu_i
\]
holds for all $i \in [n]$ and let $\mu = \sum_{i=1}^{n}\mu_i$.
Then, we have
\begin{align*}
    \Pr[\sum_{i=1}^{n}X_i \geq &~ (1+\delta)\mu ] \leq \exp\left(-\delta^2\mu/(2+\delta)\right)\\
    \Pr[\sum_{i=1}^{n}X_i \leq &~ (1-\delta)\mu ] \leq \exp\left(-\delta^2\mu/2\right).
\end{align*}
\end{lemma}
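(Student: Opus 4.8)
The plan is to follow the classical exponential-moment (Chernoff) method, adapted to the martingale-difference setting. First I would fix the filtration $\mathcal{F}_i := \sigma(X_1,\dots,X_i)$ and read the hypothesis in the standard way, namely $\E[X_i \mid \mathcal{F}_{i-1}] = \mu_i$ (the conditioning in the statement should be on the prefix $X_1,\dots,X_{i-1}$). The whole argument rests on controlling $\E[\exp(\lambda \sum_{i=1}^n X_i)]$ for a suitable real $\lambda$ and then invoking Markov's inequality.

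The core step is a one-variable bound: for any $\lambda \in \R$ and any $[0,1]$-valued random variable $X$ with $\E[X\mid\mathcal{F}] = \mu$, one has $\E[e^{\lambda X}\mid \mathcal{F}] \le e^{\mu(e^\lambda-1)}$. This follows from the chord (convexity) inequality $e^{\lambda x}\le 1+(e^\lambda-1)x$, valid for $x\in[0,1]$ since $t\mapsto e^{\lambda t}$ is convex; taking conditional expectations gives $\E[e^{\lambda X}\mid\mathcal{F}]\le 1+(e^\lambda-1)\mu$, and then $1+u\le e^u$. I would peel off the factors of $\exp(\lambda\sum X_i)$ one at a time from $i=n$ down to $i=1$ using the tower property: conditioning on $\mathcal{F}_{n-1}$ and applying the one-variable bound to $X_n$ extracts a factor $e^{\mu_n(e^\lambda-1)}$, and iterating yields $\E[e^{\lambda\sum_{i=1}^n X_i}]\le e^{\mu(e^\lambda-1)}$ with $\mu=\sum_i\mu_i$.

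For the upper tail, Markov's inequality gives, for $\lambda>0$, $\Pr[\sum X_i\ge(1+\delta)\mu] \le e^{-\lambda(1+\delta)\mu}\,\E[e^{\lambda\sum X_i}] \le \exp\!\big(\mu(e^\lambda-1-\lambda(1+\delta))\big)$. Choosing $\lambda=\ln(1+\delta)$ makes the exponent $-\mu\big((1+\delta)\ln(1+\delta)-\delta\big)$, and I would finish with the elementary scalar inequality $(1+\delta)\ln(1+\delta)-\delta\ge\delta^2/(2+\delta)$ for $\delta>0$, yielding the stated $\exp(-\delta^2\mu/(2+\delta))$.

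For the lower tail, I would instead bound $\E[e^{-t\sum X_i}]\le e^{\mu(e^{-t}-1)}$ from the same one-variable lemma with $\lambda=-t<0$, so that for $t>0$, $\Pr[\sum X_i\le(1-\delta)\mu]=\Pr[e^{-t\sum X_i}\ge e^{-t(1-\delta)\mu}]\le \exp\!\big(\mu(e^{-t}-1+t(1-\delta))\big)$. Taking $t=\ln\frac{1}{1-\delta}$ (legitimate since $0<\delta<1$) turns the exponent into $-\mu\big(\delta+(1-\delta)\ln(1-\delta)\big)$, and the Taylor estimate $\delta+(1-\delta)\ln(1-\delta)\ge\delta^2/2$ completes the bound. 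The argument has no serious obstacle; the only points needing care are the bookkeeping in the iterated conditioning (always applying the one-variable lemma at the innermost layer, since $\mu_n$ is $\mathcal{F}_{n-1}$-measurable only through the hypothesis) and verifying the two scalar inequalities $(1+\delta)\ln(1+\delta)-\delta\ge\delta^2/(2+\delta)$ and $\delta+(1-\delta)\ln(1-\delta)\ge\delta^2/2$, which are the same elementary facts underlying the multiplicative Chernoff bound (Lemma~\ref{lem:chernoff}).
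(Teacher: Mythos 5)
Your proof is correct: the paper states this lemma in its ``Probabilistic tools'' appendix without proof, and your argument is exactly the standard exponential-moment derivation one would supply --- the chord bound $e^{\lambda x}\le 1+(e^\lambda-1)x$ on $[0,1]$, peeling via the tower property (which works cleanly here since the $\mu_i$ are deterministic), Markov with $\lambda=\ln(1+\delta)$ resp.\ $t=\ln\frac{1}{1-\delta}$, and the two scalar inequalities $(1+\delta)\ln(1+\delta)-\delta\ge \delta^2/(2+\delta)$ and $\delta+(1-\delta)\ln(1-\delta)\ge\delta^2/2$. You also correctly repaired the typo in the hypothesis (the conditioning must be on $X_1,\dots,X_{i-1}$, not on all of $X_1,\dots,X_n$), so there is nothing to fix.
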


\begin{lemma}[Azuma-Hoeffding bound]
Let $X_0, \cdots, X_n$ be a martingale sequence with respect to the filter $F_0 \subseteq F_2 \cdots \subseteq F_n$ such that for $Y_i = X_i - X_{i-1}$, $i \in [n]$, we have that $|Y_i| = |X_i - X_{i-1}| \leq c_i$. Then
\begin{align*}
    \Pr[|X_t - Y_0 | \geq t ] \leq 2\exp\left(-\frac{t^2}{2\sum_{i=1}^{n}c_i^2}\right).
\end{align*}
\end{lemma}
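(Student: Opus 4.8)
The plan is to prove this by the standard exponential-moment (Chernoff-style) method, which reduces the whole statement to a one-dimensional moment generating function estimate for a bounded, conditionally mean-zero increment. Without loss of generality I would first replace $X_i$ by $X_i - X_0$, so that $X_0 = 0$; this leaves every increment $Y_i = X_i - X_{i-1}$ and every bound $|Y_i| \le c_i$ unchanged, and the quantity to bound is $\Pr[|X_n| \ge t]$. Fix a parameter $\lambda > 0$, to be optimized at the end. Since $e^{\lambda X_n}$ is nonnegative, Markov's inequality gives
\[
\Pr[X_n \ge t] \;\le\; e^{-\lambda t}\,\E[e^{\lambda X_n}].
\]

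The key step is to control $\E[e^{\lambda X_n}]$ by peeling off one increment at a time. Writing $X_n = X_{n-1} + Y_n$ and using the tower property of conditional expectation together with the fact that $e^{\lambda X_{n-1}}$ is $F_{n-1}$-measurable,
\[
\E[e^{\lambda X_n}] \;=\; \E\!\big[\, e^{\lambda X_{n-1}}\,\E[e^{\lambda Y_n}\mid F_{n-1}]\,\big].
\]
The heart of the argument — and the step I expect to be the main obstacle — is Hoeffding's lemma: if $Z$ satisfies $\E[Z]=0$ and $|Z|\le c$ almost surely, then $\E[e^{\lambda Z}] \le e^{\lambda^2 c^2/2}$. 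I would prove it by using convexity of $z\mapsto e^{\lambda z}$ on $[-c,c]$ to bound it below the chord, $e^{\lambda z}\le \tfrac{c+z}{2c}e^{\lambda c}+\tfrac{c-z}{2c}e^{-\lambda c}$, taking expectations and invoking $\E[Z]=0$ to obtain $\E[e^{\lambda Z}]\le \cosh(\lambda c)$, and then comparing Taylor series (using $(2k)!\ge 2^k k!$) to get $\cosh(\lambda c)\le e^{\lambda^2 c^2/2}$. Applying this conditionally with $Z = Y_n$ — whose conditional mean is $0$ by the martingale property and which satisfies $|Y_n|\le c_n$ — yields $\E[e^{\lambda Y_n}\mid F_{n-1}] \le e^{\lambda^2 c_n^2/2}$, hence $\E[e^{\lambda X_n}] \le e^{\lambda^2 c_n^2/2}\,\E[e^{\lambda X_{n-1}}]$.

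Iterating this inequality down to $X_0 = 0$ gives $\E[e^{\lambda X_n}] \le \exp\!\big(\tfrac{\lambda^2}{2}\sum_{i=1}^n c_i^2\big)$, so $\Pr[X_n \ge t] \le \exp\!\big(-\lambda t + \tfrac{\lambda^2}{2}\sum_{i=1}^n c_i^2\big)$. Choosing $\lambda = t/\sum_{i=1}^n c_i^2$ to minimize the exponent produces the one-sided bound $\Pr[X_n \ge t] \le \exp\!\big(-t^2/(2\sum_{i=1}^n c_i^2)\big)$. Finally I would note that $-X_0,\dots,-X_n$ is also a martingale with respect to the same filtration, with increments $-Y_i$ satisfying $|-Y_i|\le c_i$, so the same argument bounds $\Pr[X_n \le -t]$ by the identical quantity; a union bound over the two tail events gives $\Pr[|X_n - X_0| \ge t] \le 2\exp\!\big(-t^2/(2\sum_{i=1}^n c_i^2)\big)$, which is the claimed inequality. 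The only genuinely delicate point is the measurability bookkeeping in the conditioning step (ensuring the bound on $\E[e^{\lambda Y_i}\mid F_{i-1}]$ is a deterministic constant so it can be pulled outside the outer expectation) and the elementary but fiddly verification of Hoeffding's lemma; everything else is routine.
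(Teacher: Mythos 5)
Your proof is correct and is the standard exponential-moment argument (conditional Hoeffding lemma, telescoping the MGF, optimizing $\lambda$, and a union bound for the two-sided tail); the paper states this lemma as a standard probabilistic tool without proof, so there is nothing to compare against. The only remark worth making is that the lemma as printed in the paper has typos ($X_t - Y_0$ should read $X_n - X_0$, and the deviation parameter clashes with the time index $t$); your proof establishes the intended statement.
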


\section{Missing proof from Section~\ref{sec:reduce}}
\label{sec:reduce-app}


We first provide a proof of Lemma~\ref{lem:sample_size}, which states the {\sc estimate} procedure gives a good estimate on sampling probability.
\begin{proof}[Proof of Lemma~\ref{lem:sample_size}]
Let $X_t$ be the fraction of edges that are checked in the $t$-th iteration of \textsc{Estimate} (i.e., Line~\ref{line:count1} of \textsc{Estimate}).
By definition, we have that $\E[X_t] = \AVG$ and $X_t \in [0,1]$ for all $t$. 
Note that $K$ is precisely the minimum value $K'$ such that $\sum_{t=1}^{K'}X_1\geq R$.

Let $K_1 = (1-\eps)\cdot\frac{R}{\AVG}$ and $K_2 = (1+\eps)\cdot\frac{R}{\AVG}$.
We want to show that with high probability $K\in [K_1, K_2]$, 
Note that this event is exactly the intersection of the event $\sum_{t=1}^{K_1}X_t < R$ and the $\sum_{t=1}^{K_2} X_t > R$.
For the first one, notice that
\[
\E\left[\sum_{t=1}^{K_1}X_t\right] = K_1 \cdot \AVG = (1-\eps)R.
\]
By the multiplicative form Chernoff bound, we have
\begin{align*}
\Pr\left[\sum_{t=1}^{K_1}X_t \geq R\right] \leq &~ \Pr\left[\sum_{t=1}^{K_1}X_t \geq (1-\eps)\sum_{t=1}^{K_1}\E[X_t]\right] \leq \exp\left(-\eps^{2}\sum_{t=1}^{K_1}\E[X_t]/3\right)\\
\leq &~\exp(-\eps^2 (1-\eps)ck\eps^{-2}\log (n/\delta)/3) \leq \delta/n^{ck/6}.
\end{align*}

For the second event, notice that
\[
\E\left[\sum_{t=1}^{K_2}X_t\right] = K_2 \cdot \AVG = (1+\eps)R,
\]
and by the multiplicative form Chernoff bound, we have
\begin{align*}
\Pr\left[\sum_{t=1}^{K_2}X_t \leq R\right] < &~ \Pr\left[\sum_{t=1}^{K_2}X_t \leq (1-\eps/2)\sum_{t=1}^{K_2}\E[X_t]\right] \leq \exp\left(-\eps^{2}\sum_{t=1}^{K_2}\E[X_t]/8\right)\\
\leq &~\exp(-\eps^2 (1+\eps)ck\eps^{-2}\log (n/\delta)/8) \leq \delta/n^{ck/8}.
\end{align*}
Taking an union bound, we conclude that with probability at least $1 - \frac{2\delta}{n^{ck/8}}$, $K\in [K_1, K_2]$.\qedhere
\end{proof}

We next prove Lemma~\ref{lem:init-sample}, which states the initial steps for building $\He$ and $\Hc$ is not large.
\begin{proof}[Proof of Lemma~\ref{lem:init-sample}]
We prove that with $1 - \frac{2\delta}{n^{ck/8}}$, the total number of steps for building $\He$ are within $[(1-\eps)^2R, (1+\eps)^2R]$, the Lemma then follows by an union bound.
Let $Y_i$ ($i \in [n_0]$) be the fraction of edges checked when sampling the RR set $R_i$ of the $i$-th node. Notice that if with probability $1 - p$, the algorithm won't sample $R_i$, and we take $Y_i = 0$ for this case.
We have that $Y_i \in [0,1]$ and
\begin{align}
\label{eq:init-sampl1}
\E\left[\sum_{i=1}^{n_0}Y_t\right] = \frac{K}{n_0} \cdot n_0  \AVG   = K\cdot \AVG \in [(1 -\eps)R, (1+\eps)R] . 
\end{align}

The last step holds due to Lemma~\ref{lem:sample_size}.

Consequently, by the multiplicative form Chernoff bound, we have
\begin{align*}
\Pr\left[\sum_{i=1}^{n_0} Y_i \geq (1+\eps)^2R\right] \leq &~ \Pr\left[\sum_{i=1}^{n_0} Y_i \geq (1+\eps)\sum_{i=1}^{n_0}\E[Y_i]\right] \leq \exp\left(-\eps^{2}\sum_{i=1}^{n_0}\E[Y_i]/3\right)\\ 
\leq &~ e^{-\eps^2(1-\eps) c\eps^{-2}k\log (n/\delta)/3} \leq \delta/n^{ck/6}.
\end{align*}
The first step and third follows from Eq.~\eqref{eq:init-sampl1}. We assume $\eps < 1/2$ in the last step.

Similarly, one has
\begin{align*}
\Pr\left[\sum_{i=1}^{n_0} Y_i \leq (1-\eps)^2R\right] \leq &~ \Pr\left[\sum_{i=1}^{n_0} Y_i \leq (1-\eps)\sum_{i=1}^{n_0}\E[Y_i]\right] \leq \exp\left(-\eps^{2}\sum_{i=1}^{n_0}E[Y_i]/2\right)\\ 
\leq &~ e^{-\eps^2(1-\eps) c\eps^{-2}k\log (n/\delta)/2} \leq \delta/n^{ck/8}.
\end{align*}
We conclude the proof here.
\end{proof}

We next provide a proof for Lemma~\ref{lem:sample-step1}, which states that the total number of steps of constructing $\He$ is around its expectation.
\begin{proof}[Proof of Lemma~\ref{lem:sample-step1}]
We know that the random variable $Z_j \in [0, 1]$ since the total number edge checked for one RR set is at most $m \leq 2m_0$. Moreover, $\{Z_j\}_{j\geq 1}$ forms a martingale.
Since $J$ is the first number such that $\sum_{j=1}^{J}2m_0 Z_j \geq 16Rm_0$, i.e, $\sum_{j=1}^{J} Z_j \geq 8Rm_0$, the event $J \in [J_1, J_2]$ is equivalent to the intersection of event $\sum_{j=1}^{J_1}Z_j > 8R$ and $\sum_{j=1}^{J_2}Z_j < 8R$. We bound the probability separately.

For the first event, we have 
\begin{align*}
    (1-\eps)8R \leq \sum_{j=1}^{J_1}\E[Z_j] \leq (1-\eps)8R + 1.
\end{align*}
By the multiplicative form Azuma bound, we have
\begin{align}
    \Pr\left[\sum_{j=1}^{J_1}Z_j > 8R\right] \leq &~ \Pr\left[\sum_{j=1}^{J_1}Z_j \geq (1+\eps/2)\sum_{j=1}^{J_1}\E[Z_j]\right]\leq \exp\left(-\eps^{2}\sum_{j=1}^{J_1}\E[Z_j]/12\right)\notag\\ 
    \leq &~ \exp(-\eps^2 (1-\eps)8c\eps^{-2}k\log (n/\delta)/12) \leq \frac{\delta}{n^{ck/3}}.\label{eq:sample-step1}
\end{align}
Similarly, for the second event, we have that
\begin{align*}
    (1+\eps)8R - 1 \leq \sum_{j=1}^{J_2}\E[Z_j] \leq (1+\eps)8R.
\end{align*}
By the multiplicative form Azuma bound, we have
\begin{align}
    \Pr\left[\sum_{j=1}^{J_2}Z_j < 8R\right] \leq &~ \Pr\left[\sum_{j=1}^{J_2}Z_j \leq (1-\eps/2)\sum_{j=1}^{J_2}\E[Z_j]\right]\leq \exp\left(-\eps^{2}\sum_{j=1}^{J_2}\E[Z_j]/8\right)\notag\\ 
    \leq &~ \exp(-\eps^2 8c\eps^{-2}k\log (n/\delta)/8) \leq \frac{\delta}{n^{ck}}\label{eq:sample-step2}
\end{align}

Combine the Eq.~\eqref{eq:sample-step1} and Eq.~\eqref{eq:sample-step2} with union bound, we conclude that probability at least $1 - \frac{2\delta}{n^{ck/3}}$, the total number step for $\He$ satisfies
$J_1 \leq J \leq J_2$.\qedhere

\end{proof}


The total number of edge checked for $\Hc$ is bounded by $16(1+\eps)^2Rm_0$ with high probability, as stated in Lemma~\ref{lem:sample-step2}

\begin{proof}[Proof of Lemma~\ref{lem:sample-step2}]
Similar as Lemma~\ref{lem:sample-step1}, consider the $t$-th edge to arrive, let $W_{t, i}\cdot 2m_0$ be the number of edges checked by {\sc Insert-edge} when it augments the RR set of the $i$-th node on $\Hc$.
Again, if {\sc Insert-node} and {\sc Buildgraph} did not sample the $i$-th node and its RR set, we simply have $W_{t, i} = 0$.
We also write $W_j$ to denote the $j$-th random variable in the sequence $W_{0,1}, \ldots, W_{0, n_0}, W_{1, 1}, \ldots$, and it also forms a martingale.
Recall $J$ is the total number of steps executed by {\sc Insert edge} on constructing $\He$, and it is also the time where we stop augmenting $\Hc$.

We know $W_{j} \in [0,1]$ since $m\leq 2m_0$ and $\{W_j\}_{j \geq 1}$ forms a martingale.
Since we sample $\He$ and $\Hc$ separately, the stopping time is irrelevant to the realization of $W_j$.
Condition on the event of Lemma~\ref{lem:sample-step1}, i.e. $J \in [J_1, J_2]$, we know that 
\begin{align*}
(1-\eps)8R \leq \sum_{j=1}^{J_1}\E[W_j]\leq \sum_{j=1}^{J}\E[W_j] \leq \sum_{j=1}^{J_2}\E[W_j] \leq (1+\eps)8R.
\end{align*}
Hence, by Azuma bound, we have
\begin{align*}
    \Pr\left[\sum_{j=1}^{J}W_j\cdot 2m_0  \geq 16 (1+\eps)^2Rm_0\right] =&~ \Pr\left[\sum_{j=1}^{J}W_j \geq (1+\eps)^2 8R\right]\\
    =&~ \Pr\left[\sum_{i=1}^{J}W_j \geq (1+\eps)\sum_{i=1}^{J}\E[Z_j]\right]\\
    \leq &~ \exp\left(-\eps^{2}\sum_{j=1}^{J}\E[W_j]/3\right)\\
    \leq &~ \exp(-\eps^2 8(1-\eps)c\eps^{-2}k\log (\delta/n)/3) \\
    \leq &~ \frac{\delta}{n^{ck}}.
\end{align*}
This concludes the proof.
\end{proof}


We provide the proof of Lemma~\ref{lem:unbias}, which asserts that the normalized coverage function $f_{\mathsf{cv}, t}$ is an unbiased estimator on the influence spread function. 
\begin{proof}[Proof of Lemma~\ref{lem:unbias}]
The expected influence spread of a set $S$ at time step $t$ satisfies
\begin{align*}
\sigma_t(S) = \sum_{v\in V_t}\E[|S\cap R_{v, t}|] = \sum_{v\in V_t} \frac{n_0}{K}\E[x_{t, v, S}] = \E[f_{\mathsf{cv}, t}(S)] .
\end{align*}
The first step follows from the definition of influence spread function $\sigma_t$, the second step follows from our sampling process, the definition of $x_{t, v, S}$ and the fact that we sample the RR set of node $v$ with probability $p = \frac{K}{n_0}$. 
\end{proof}

Next, we prove Lemma~\ref{lem:approximation}. 

\begin{proof}[Proof of Lemma~\ref{lem:approximation}]
Fix a fixed time step $t$.
For any node set $S\subseteq V_t$, $|S|\leq k$, by Lemma~\ref{lem:unbias}, we have
$
\E[f_{\mathsf{cv}, t}(S)] = \sigma_t(S) \leq \OPT_t.
$
For convenience, we write $\E[f_{\mathsf{cv}, t}(S)] =\lambda \OPT_t$, and $\lambda \in [0,1]$. 
Hence, we have
\begin{align}
    \Pr\left[ f_{\mathsf{cv}, t}(S) > \E[f_{\mathsf{cv}, t}(S)] + \eps \OPT_t  \right] =&~ \Pr\left[ f_{\mathsf{cv}, t}(S) > (1 + \frac{\eps}{\lambda}) \E[f_{\mathsf{cv}, t}(S)]   \right]\notag\\
    =&~ \Pr\left[ \sum_{v\in V_t}x_{t, v, S} > (1 + \frac{\eps}{\lambda})\sum_{v\in V_t}\E[x_{t, v, S}]   \right].\label{eq:approximation1}
\end{align}

We divide into two cases. First, suppose $\lambda > \eps$. Since $x_{t, v, S} \in \{0,1\}$ and they are indepedent, by the multiplicative form Chernoff bound, we have
\begin{align*}
\Pr\left[ \sum_{v\in V_t}x_{t, v, S}  > (1 + \frac{\eps}{\lambda})\sum_{v\in V_t}\E[x_{t, v, S} ]   \right] \leq \exp\left(-\eps^2 \E[\sum_{v\in V_t}x_{t, v, S} ] /3\lambda^2\right).
\end{align*}
The exponent obeys
\begin{align}
\eps^2 \sum_{v\in V_t}\E[x_{t, v, S} ] /3\lambda^2 = &~ \eps^2\frac{K}{n_0}\lambda\OPT_t/3\lambda^2 \geq  \eps^2\frac{K}{3n_0}\OPT_t \geq \eps^2(1-\eps)\frac{R}{3n_0\AVG}\OPT_t\notag\\
\geq &~\eps^2 (1-\eps)R/3 \geq ck\log(n/\delta)/6.\label{eq:approximation7}
\end{align}
where the first step follows from 
\begin{align}
\label{eq:approximation2}
\frac{n_0}{K} \sum_{v\in V_t}\E[x_{t, v, S} ] = \E[f_{\mathsf{cv}, t}(S)] = \lambda \OPT_t,
\end{align}
the second step follows from $\lambda < 1$, the third comes from the condition of Lemma~\ref{lem:sample_size}, i.e., $K \in [(1-\eps)\cdot \frac{R}{\AVG}, (1+\eps)\cdot\frac{R}{\AVG}]$. The fourth step follows from Lemma~\ref{lem:avg-app} and the monotonicity of $\OPT$, i.e.,
\begin{align}
\AVG \leq \frac{\OPT_0}{n_0} \leq \frac{\OPT_t}{n_0}.\label{eq:approximation3}
\end{align}
Hence, when $\eps < \lambda$, one has
\[
\Pr\left[ f_{\mathsf{cv}, t}(S) > \E[f_{\mathsf{cv}, t}(S)] + \eps \OPT_t  \right]  \leq \frac{\delta}{n^{ck/6}}.
\]

Next, suppose $\lambda \leq \eps$, then $\eps /\lambda \geq 1$. The multiplicative form Chernoff bound 
\begin{align*}
\Pr\left[ \sum_{v\in V_t} x_{t, v, S} > (1 + \frac{\eps}{\lambda})\sum_{v\in V_t}\E[x_{t, v, S} ]]   \right] \leq \exp\left(-\eps \sum_{v\in V_t}\E[x_{t, v, S}]  /3\lambda\right)
\end{align*}
The exponent satisfies
\[
\eps \sum_{v\in V_t}\E[x_{t, v, S}]/3\lambda =\eps\cdot\frac{K}{3n_0}\OPT_t \geq \eps(1-\eps)\frac{R}{3n_0\AVG}\OPT_t \geq \eps (1-\eps)R/3 \geq ck \log (n/\delta)/6.
\]
where the first step follows from Eq.~\eqref{eq:approximation2}, the second step follows from the condition of Lemma~\ref{lem:sample_size}, i.e., $K \in [(1-\eps)\cdot \frac{R}{\AVG}, (1+\eps)\cdot\frac{R}{\AVG}]$.
The third step follows Eq.~\eqref{eq:approximation3}.
Hence, when $\lambda \geq \eps$, one also has
\[
\Pr\left[ f_{\mathsf{cv}, t}(S) > \E[f_{\mathsf{cv}, t}(S)] + \eps \OPT_t  \right]  \leq \frac{\delta}{n^{ck/6}}.
\]

This proves the Eq.~\eqref{eq:approximation4}. The proof of Eq.~\eqref{eq:approximation5} is similar. In particular, we have
\begin{align*}
    \Pr\left[ f_{\mathsf{cv}, t}(S) < \E[f_{\mathsf{cv}, t}(S)] - \eps \OPT_t  \right] =&~ \Pr\left[ f_{\mathsf{cv}, t}(S) < (1 - \frac{\eps}{\lambda}) \E[f_{\mathsf{cv}, t}(S)]   \right]\notag\\
    =&~ \Pr\left[ \sum_{v\in V_t}x_{t, v, S} < (1 - \frac{\eps}{\lambda})\sum_{v\in V_t}\E[x_{t, v, S}]   \right].
\end{align*}
It suffices to consider the case $\eps < \lambda$. The multiplicative form Chernoff bound implies
\begin{align*}
\Pr\left[ \sum_{v\in V_t} x_{t, v, S} < (1 - \frac{\eps}{\lambda})\sum_{v\in V_t}\E[x_{t, v, S} ]]   \right] \leq &~ \exp\left(-\eps^2 \sum_{v\in V_t}\E[x_{t, v, S}]  /2\lambda^2\right)\\
\leq &~ \exp(\eps^2(1-\eps)R/2) \leq \frac{\delta}{n^{ck/4}}.
\end{align*}
The second step follows from Eq.~\eqref{eq:approximation7}. We conclude the proof here.
\end{proof}

Lemma~\ref{lem:func-approx} indicates pointwise approximation is sufficient to carry over the approximation ratio between two problems. We provide a proof here.
\begin{proof}[Proof of Lemma~\ref{lem:func-approx}]
By an union bound over all sets of size at most $k$, we know that 
\[
\Pr_{g\sim D}[\exists S, |S|\leq k, |f(S) - g(S)| -\gamma] \leq n^{k}\frac{\delta}{n^{ck}} \leq \frac{\delta}{n^{ck/2}}.
\]
Under the above event, we know that
\begin{align*}
    f(S_g) \geq g(S_g) -\delta \geq \alpha \max_{S\subseteq V, |S|\leq k}g(S) -\gamma \geq \alpha \max_{S\subseteq V, |S|\leq k}f(S) -2\gamma.
\end{align*}
This concludes the proof.
\end{proof}

Given an influence graph, let $\AVG\cdot m$ be the expected number of steps taken by random sampling a RR set, and let $\OPT$ be the maximum (expected) influence spread of a seed set of size at most $k$, then one has
\begin{lemma}[Claim 3.3 in \cite{borgs2014maximizing}]
\label{lem:avg-app}
$\AVG \leq \frac{\OPT}{n}$ .
\end{lemma}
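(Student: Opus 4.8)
\textbf{Proof proposal for Lemma~\ref{lem:avg-app}.}

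The plan is to relate both quantities to a single combinatorial expectation over random RR sets. First I recall the standard interpretation of RR sets: for a uniformly random root $v \in V$ and a random live-edge graph $L$, the RR set $R_v$ is the set of nodes that can reach $v$ in $L$. The key fact (the ``reverse reachability'' identity of Borgs et al.) is that for any seed set $S$, $\sigma(S) = n \cdot \Pr[S \cap R_v \neq \emptyset]$, where the probability is over the random choice of $v$ and $L$; in particular, taking $S = \{u\}$ for a single node $u$, we get $\sigma(\{u\}) = n \cdot \Pr[u \in R_v]$.

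Next I would quantify $\AVG$. The number of steps taken to sample a single RR set is, up to constant factors, the number of edges examined during the reverse BFS/DFS from $v$ in $L$, which is bounded by (a constant times) the sum over nodes $w \in R_v$ of the in-degree of $w$ in $G$ — equivalently, the number of edges $(w', w) \in E$ with $w \in R_v$. So the expected number of steps is at most $\sum_{(w',w) \in E} \Pr[w \in R_v]$. Since each term $\Pr[w \in R_v] \le \Pr[w \in R_v] $ and, crucially, $w \in R_v$ implies $w$ reaches $v$, summing $\Pr[w \in R_v]$ over all $w \in V$ with multiplicity equal to $\mathrm{indeg}(w)$ and then over all $w$ should be comparable to $m$ times the average of $\Pr[w \in R_v]$. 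Writing $\AVG \cdot m$ for the expected step count, we get $\AVG \cdot m \le \sum_{(w',w)\in E}\Pr[w \in R_v] \le \sum_{w \in V}\mathrm{indeg}(w)\,\Pr[w\in R_v]$, and I want to bound this by $\frac{m}{n}\,\OPT$.

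The main obstacle — and the crux of the argument — is to bound $\sum_{w \in V}\mathrm{indeg}(w)\,\Pr[w \in R_v]$ by $\frac{m}{n}\OPT$. The clean way is to observe that $\sum_{w\in V}\Pr[w\in R_v] = \E[|R_v|] = \frac{1}{n}\sum_{u\in V}\sigma(\{u\})$ by the single-node identity above, together with linearity; this quantity is at most $\frac{1}{n}\cdot n \cdot \max_u \sigma(\{u\}) $... but that overcounts. Instead, the right route (following Borgs et al.\ Claim 3.3) is to identify the expected step count directly with $\frac{m}{n}\sum_{u}\sigma(\{u\})/n$-type quantity: more precisely, each edge $(w',w)$ contributes $\Pr[w \in R_v]$, and summing the edge contributions groups as $\sum_w \mathrm{indeg}(w)\Pr[w\in R_v]$, which one rewrites by swapping the roles of root and reached node. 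Because the live-edge distribution is symmetric in the appropriate bookkeeping sense, $\sum_w \mathrm{indeg}(w)\Pr[w \in R_v] = \frac{1}{n}\sum_w \mathrm{indeg}(w)\,\sigma^{-}(\{w\})$ where $\sigma^-$ is the ``backward'' spread, and $\sum_w \mathrm{indeg}(w)\sigma^{-}(\{w\}) \le \sum_{(w',w)\in E} \sigma^{-}(\{w\}) \le m \cdot \OPT / $ ... I would finish by invoking that for $k \ge 1$, $\max_w \sigma^-(\{w\}) \le \OPT$ and $\E_v[|R_v|] \le \OPT/k \le \OPT$, so that $\AVG \cdot m \le m \cdot \E_v[|R_v|]/n \cdot$ (constant) collapses to $\AVG \le \OPT/n$. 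Since this is quoted verbatim as Claim 3.3 of \cite{borgs2014maximizing}, I would in fact just cite it, and the proposal above is the reconstruction one would give if forced to reprove it from scratch; the delicate point throughout is the precise correspondence between ``steps of the sampling procedure'' and ``edges incident to the RR set,'' and the symmetry argument that converts a sum over in-degrees times reach-probabilities into $\frac{m}{n}$ times a spread.
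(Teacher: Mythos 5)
Your opening is the right decomposition: the number of edges examined while sampling $R_v$ equals the number of edges $(w',w)\in E$ whose head $w$ lies in $R_v$, so $\AVG\cdot m = \sum_{(w',w)\in E}\Pr[w\in R_v]$ (with $v$ uniform and the live-edge graph random). That is exactly the paper's first line. But the closing synthesis does not work. The needed identity is
\[
\Pr_{v,L}\bigl[w\in R_v\bigr] \;=\; \frac{1}{n}\,\sigma(\{w\}),
\]
with the ordinary \emph{forward} spread $\sigma$: indeed $w\in R_v$ iff $v\in\Gamma(\{w\},L)$, so for fixed $L$, $\sum_v \mathbf{1}[w\in R_v]=|\Gamma(\{w\},L)|$, and taking expectation over $L$ and dividing by $n$ gives the claim. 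Plugging this in and using $\sigma(\{w\})\le\OPT$ for $k\ge 1$ yields $\AVG\cdot m\le \frac{1}{n}\sum_{(w',w)\in E}\OPT=\frac{m}{n}\OPT$ and you are done. Instead, you introduce a ``backward spread'' $\sigma^-$ and assert $\Pr[w\in R_v]=\sigma^-(\{w\})/n$ and $\max_w\sigma^-(\{w\})\le\OPT$. The first is the wrong direction (you correctly note earlier that $w\in R_v$ means $w$ reaches $v$, which is precisely the forward spread of $w$), and the second is unjustified: $\OPT$ is defined via the forward spread, and a bound on some backward single-node spread would not follow. The final sentence invoking $\E_v[|R_v|]\le\OPT/k$ and $\AVG\cdot m\le m\cdot\E_v[|R_v|]/n$ is also unsupported and does not recover the factor you need. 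So the proposal has the right structure but the key step is mis-stated; replacing $\sigma^-$ with $\sigma$ via the forward identity above closes the gap and reproduces the paper's proof.
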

  We provide a proof for completeness.
\begin{proof}
   Given a node $v$ and an edge $(u, w)$, let $x_{v, e} = 1$ if the edge $e$ is checked when one samples the RR set of node $v$. Then we have
\begin{align*}
    \AVG\cdot m = \frac{1}{n}\sum_{v \in V}\sum_{e\in E}\E[x_{v, e}] = \frac{1}{n}\sum_{e = (u, w)\in E}\E[|v: w\in R_{v}|].
\end{align*}
Hence, we have
\begin{align*}
    \AVG\cdot m = \frac{1}{n}\sum_{e = (u, w)\in E}\E[|v: w\in R_{v}|] = \frac{1}{n}\sum_{e=(u, v)\in E}\sigma(v) \leq \frac{1}{n}\sum_{e=(u, v)\in E}\OPT = \frac{m}{n}\OPT.
\end{align*}
We conclude the proof.
\end{proof}


We provide proof for Lemma~\ref{lem:finite}, which asserts with high probability, there are at most $O(\log n)$ iterations within a phase.
\begin{proof}[Proof of Lemma~\ref{lem:finite}]
For any $t\geq 0$, let $\sn_t$, $\sm_t$ be the number of nodes and edges at the beginning the $t$-th iteration, and let $\AVG_t \cdot \sm_t$  be the average steps of sampling a random RR set.
We can assume there is at least one edge in the graph, and therefore, $\AVG_0 \geq \frac{1}{\sn_0 \sm_0}$.
Inside the phase, we must have $\sn_0 \leq \sn_t \leq 2\sn_0$ and $\sm_0\leq \sm_t \leq 2\sm_0$ holds for any $t$.
We prove that $\AVG_{t+1} \geq 2\AVG_{t}$ holds with high probability. Notice that $\AVG_t \leq 1$, this means the algorithm restarts for at most $O(\log n)$ times. 
By Lemma~\ref{lem:sample_size}, the sample size $K_t$ at the beginning of $t$-th iteration obeys 
\begin{align}
(1-\eps)\cdot \frac{R}{\AVG_t} \leq K_t \leq (1+\eps) \cdot \frac{R}{\AVG_t}. \label{eq:finite1}
\end{align}
with probability at least $1 - \frac{2\delta}{n^{ck/8}}$.

On the other side, for the $t$-th iteration, define $J_t$, $J_{t, 1}$, $J_{t,2}$ similarly as Lemma~\ref{lem:sample-step1}. With probability at least $1 - \frac{2\delta}{n^{ck/8}}$, we have $J_t \in [J_{t, 1}, J_{t, 2}]$, and this indicates
\begin{align}
8(1-\eps)R \leq \sum_{j=1}^{J_{t,1}}\E[Z_j] \leq \sum_{j=1}^{J_t}\E[Z_j] \leq  \sum_{j=1}^{J_{t,2}}\E[Z_j] \leq 8(1+\eps)R.\label{eq:finite2}
\end{align}
The first and last step follow from the definition of $J_{t, 1}$ and $J_{t,2}$, the second and third step follow from $J_t \in [J_{t, 1}, J_{t, 2}]$. Moreover, we also know that
\begin{align}
\sum_{j=1}^{J_t}\E[Z_j] = \frac{K_t}{\sn_{t}}\sn_{t+1} \AVG_{t+1}.\label{eq:finite3}
\end{align}
as we include the RR set of each $n_t$ node with probability $\frac{K_t}{\sn_t}$.
Therefore, we have
\begin{align*}
     2K_t \AVG_{t+1} \geq \frac{K_t}{\sn_{t}}\sn_{t+1} \AVG_{t+1} \geq 8(1-\eps)R \geq \frac{8(1-\eps)}{1+\eps}K_t \AVG_t \geq 4K_t\AVG_{t}
\end{align*}
The first step follows from $\sn_{t+1}\leq 2\sn_0 \leq 2\sn_t$. The second step follows from Eq.~\eqref{eq:finite2} and Eq.~\eqref{eq:finite3}. The third step comes from Eq.~\eqref{eq:finite1}, and we assume $\eps < 1/3$ in the last step.
Hence, we have proved with probability at least $1 - \frac{4\delta}{n^{ck/8}}$, $\AVG_{t+1} \geq 2\AVG_{t}$. Taking an union bound over $t$ and combining the fact that $\AVG_0\geq \frac{1}{\sn_0\sm_0}$ and $\AVG_t\leq 1$, we conclude with probability $1 - \frac{4\delta}{\sn^{ck/16}}$, the algorithm restarts at most $O(\log n)$ times within each phase.
\end{proof}


We wrap up the proof of Lemma~\ref{lem:reduction}
\begin{proof}[Proof of Lemma~\ref{lem:reduction}]
We first prove the correctness of the algorithm. 
By Lemma~\ref{lem:unbias} and Lemma~\ref{lem:approximation}, we know that at any time step $t$, the {\em normalized} coverage function $f_{\mathsf{cv}, t}$ defined on $\Hc$ gives a good approximation on the influence spread function. In particular, we have that with probability at least $1 - \frac{4\delta}{n^{ck/8}}$, one has
\[
|f_{\mathsf{cv}, t}(S) - \E[f_{\mathsf{cv}, t}(S)]| =|f_{\mathsf{cv}, t}(S) - \sigma_t(S)| \leq \eps \OPT_t
\]

Combining with Lemma~\ref{lem:func-approx}, suppose we can solve the dynamic MAX-k coverage with approximation $\alpha$, then our algorithm maintains a solution set $S_t$ that satisfies
\[
f(S) \geq (\alpha -2\eps)\OPT_t.
\]


We next focus on the amortized running time. Since the number of edges and nodes can only doubled for most $O(\log n)$ times, there are $O(\log n)$ phases.
While within one phase, by Lemma~\ref{lem:finite}, with probability $1 - \frac{4\delta}{n^{ck/16}}$, our algorithm restarts for at most $O(\log n)$ times.
Each time our algorithm restarts, it invokes the {\sc Estimate} procedure for once. This steps takes $Rm_0 =m_0ck\eps^{-2}\log n$ steps in total and has $O(k\eps^{-2}\log n)$ amortized time per update.
The algorithm constructs $\He$ and $\Hc$, we calculate their cost separately.
For constructing $\He$, our algorithm takes at most $16R_0 = 16m_0ck\eps^{-2}\log n$ steps in total and has $O(k\eps^{-2}\log n)$ amortized time per update.
For the construction of $\Hc$, by Lemma~\ref{lem:init-sample} and Lemma~\ref{lem:sample-step2}, with probability at least $1 - \frac{9\delta}{n^{ck/8}}$, it takes less than $16(1+\eps)^2Rm_0 \leq 64 m_0ck\eps^{-2}\log n$ steps in total and $O(k\eps^{-2}\log n)$ amortized time per updates.
Note that our algorithm not only needs to construct $\Hc$, but also needs to maintains a set that has the (approximately) maximum coverage on $\Hc$. This reduces to a dynamic MAX-k coverage problem, which by our assumption, can be solved in amortized running time of $\mathcal{T}_{\mathsf{mk}}$.
Taking an union bound over all steps and fix the constant $c$ to be greater than $24$, we conclude with probability at least $1 - \delta$, the overall amortized running time per update is bounded by
\[
\log n \cdot \log n \cdot(k\eps^{-2}\log n + k\eps^{-2}\log n + k\eps^{-2}\log n\cdot \mathcal{T}_{\mathsf{mk}}) \leq  O(\mathcal{T}_{\mathsf{mk}}k\eps^{-2}(\log n)^3 ).
\]
This concludes the proof.\qedhere

\end{proof}

\section{Missing proof from Section~\ref{sec:max-k}}
\label{sec:max-k-app}


Lemma~\ref{lem:max-k-coverage1} ensures the approximation guarantee of the algorithm, we provide a detailed proof here.
\begin{proof}[Proof of Lemma~\ref{lem:max-k-coverage1}]
Fix a time step $t$, let $\OPT$ denote the value of the optimal solution, i.e. $\OPT = \max_{S, |S|\leq k}f_t(S)$.
For ease of notation, we drop the subscript $t$ in the rest of the proof
There exists an index $i \in I$ such that 
\[
(1+\eps)^{i} = \OPT_i  \leq \OPT < \OPT_{i+1} = (1+\eps)^{i+1}.
\]
We prove the $i$-th thread outputs a good solution set $S_i$. 

First, suppose $|S_i| = k$. Let $s_{i, j}$ be the $j$-th element added to the set $S_i$, and denote $S_{i, j} = \{s_{i,1}, \ldots, s_{i, j}\}$, $j \in [k]$. Our algorithm guarantees that
\[
f(S_{i, j+1}) - f(S_{i, j}) \geq \frac{1}{k}(\OPT_i - f(S_{i, j}))
\]
Then, we have that
\begin{align*}
    \OPT_{i} - f(S_{i, k}) = &~ \OPT_{i} - f(S_{i, k-1}) + f(S_{i, k-1}) - f(S_{i, k})\\
    \leq &~ \OPT_{i} - f(S_{i, k-1}) - \frac{1}{k}(\OPT_i - f(S_{i, k-1}))\\
    \leq &~ \left(1 - \frac{1}{k}\right)(\OPT_i - f(S_{i, k-1}))\\
    \vdots &~ \\
    \leq&~\left(1-\frac{1}{k}\right)^k (\OPT_i - f(\emptyset)) \\
    =&~ \left(1-\frac{1}{k}\right)^k \OPT_i ,
\end{align*}
and therefore,
\begin{align*}
f(S_{i}) \geq &~ \left(1 -\left(1-\frac{1}{k}\right)^k\right)\OPT_i \geq \left(1-\frac{1}{e}\right)\OPT_i\\
\geq &~ \left(1-\frac{1}{e}\right)(1+\eps)^{-1}\OPT \geq \left(1-\frac{1}{e} -\eps\right)\OPT.
\end{align*}

On the otherside, if $|S_i| < k$, then we prove $f(S_i) \geq \OPT_i \geq (1+\eps)^{-1}\OPT$. We prove by contradiction and assume $f(S_i) < \OPT_i$ for now. Let the optimal solution be $O =\{o_1, \ldots, o_k\}$. Then we claim that
\begin{align}
    \label{eq:max-k}
    f_{S_i}(o) <\frac{1}{k}(\OPT_i - f(S_{i}))
\end{align}

holds for all $o \in O$. The reason is that (i) if $o \in S_i$, then $f_{S_i}(o) = 0 < \frac{1}{k}(\OPT_i - f(S_{i}))$. If $o \notin O$, since $|S| < k$, the above is guaranteed by our algorithm.
Hence, we have
\[
\OPT = f(O) \leq f(S_i) + f_{S_i}(O) \leq  f(S_i) + \sum_{j=1}^{k}f_{S_i}(o_j) <  f(S_i) + k \cdot \frac{1}{k}(\OPT_i - f(S_{i})) = \OPT_i.
\]
The second step holds by monotonicity, the third step holds by submodularity and the fourth step holds by Eq.~\eqref{eq:max-k}.
This comes to a contradiction.  Hence, we proved $f(S_{i}) \geq (1-1/e-\eps)\OPT$ in both cases.
\end{proof}

We next prove Lemma~\ref{lem:max-k-coverage2}, which analyses the amortized running time
\begin{proof}[Proof of Lemma~\ref{lem:max-k-coverage2}]
It suffices to prove that for each thread $i \in I$, the amortized running time is $O(1)$. 
We specify some implementation details.
For any set $S$, let $N(S)$ denote the all neighbors of $S$.
We maintain a set $V_{R}^{i}$ that includes all nodes covered by the current set $S_i$, i.e. $V_{R}^{i} = N(S_i)$
We also maintain a set $V_{u}^{i}$ for each node $u$, which contains all element covered by node $u$ in $V_{R} \backslash V_{R}^{i}$, i.e., $V_{u}^{i} = |N(u)\backslash V_{R}^{i}|$.
Finally, we also maintain an order (on cardinality) over the set $V_{u}^{i}$. This is used in the {\sc Revoke} procedure, where we retrieve the node $u$ with the maximum $|V_{u}^{i}|$ and compare it with $\frac{\OPT_i - |V_{R}^{i}|}{k}$,
We are going to prove that we can maintain these data structures and perform all necessary operations in $O(1)$ amortized time.

Let $P(t)$ denote the number of operations performed on the $i$-th thread up to time $t$. 
For each edge $e = (u, v)$, let $X_e$ denote whether the edge $e$ is covered by $V_{R}^{i}$, i.e. $X_e = 1$ if $v \in V_{R}^{i}$ and $X_e = 0$ otherwise.
Similarly, for each node $v \in V_{R}$, let $Y_v$ denote whether node $v$ is included in $V_{R}^{i}$, i.e., $Y_v = 1$ if $v\in V_{R}^{i}$ and $Y_v = 0$ otherwise.
For each node $u \in V_{L}$, let $Z_u$ denote whether node $u$ is included in $S_i$, i.e., $Y_v = 1$ if $v\in S_i$ and $Y_v = 0$ otherwise.
Define the potential function $\Phi:t \rightarrow \R^{+}$:
\begin{align*}
    \Phi(t) = 2|V_t| +2|E_t| + 2\sum_{e\in E_t} X_e + \sum_{v\in V_{R}}Y_{v} + 2\sum_{v\in V_{L}}Z_{u}.
\end{align*}
Our goal is to show $P(t) \leq \Phi(t)$. This is sufficient for our purpose as one can easily show $\Phi(t) \leq 5t$. 
We prove the claim by induction.
The claim holds trivially for the base case $t=0$.
We next assume $t > 0$ and consider the time step $t$. 
If a new node arrives, then we have that $P(t) = P(t-1) + 1$. Since $|V_t| = |V_{t-1}|+1$ and other terms of $\Phi$ won't decrease, we have $\Phi(t) \leq \Phi(t-1) +1$.
 Suppose a new edge $e = (u, v)$ arrives.
(1) If $v\in V_{R}^{i}$, that is, the node $v$ has already been covered. Then $P(t) = P(t-1)+ 2$ since we don't perform any additional operations. We also have $\Phi(t) = \Phi(t-1) +2$ as $|E_t| = |E_{t-1}| + 2$ and the other term remains unchanged.
(2.1) If $v \notin V_{R}^{i}$ and $u \in S_i$, then we need to expand the set $V_t \leftarrow V_{t-1}\cup \{v\}$ (one unit operation), delete node $v$ from $V_{u}^{i}$ if $v \in V_{u}^{i}$ ($\sum_{v\in V_{R}}|V_{u}^{i}\cap \{u\}|$ operations) and maintains the order of $\{V_{v'}^{i}\}_{v'\in V_R}$ (at most $\sum_{v\in V_{R}}|V_{v}^{i}\cap \{u\}|$ operations). Meanwhile, we have $|E_t| = |E_{t-1}|+1$ and the term $2\sum_{e\in E_t} (1 -X_e)$ would increase for 2$\sum_{v\in V_{R}}|V_{v}^{i}\cap \{u\}|$, as these edges change from uncovered to covered.
Hence, we still have $P(t) - P(t-1) \leq \Phi(t) - \Phi(t-1)$.
(2.2) If $v \notin V_{R}^{i}$ and $u \notin S_i$. 
This may only cause two unit operations if there is no node $u$ satisfies $|V_{u}^{i}| \geq \frac{\OPT_i - |V_{R}^{i}|}{k}$. 
This time, we have $P_t = P_t +2$ $\Phi(t) = \Phi(t-1) + 2$ as $|E_t| = |E_{t-1}|+1$.
On the other side, if there exists some node $u$ with large marginal.
We need to add $u$ to $V_{R}^{i}$ (1 unit operation), add nodes in $V_{u}^{i}$ to $V_{R}^{i}$ ($|V_{u}^{i}|$ operations) and removes nodes in $V_{u}^{i}$ from all other set $V_{u'}^{i}$ ( $\sum_{v\in V_r}|V_{u}^{i}\cap V_{v}^{i}|$ operations in total).
We also want to maintain an order on $\{V_{v'}^{i}\}_{v'\in V_R}$, and this takes less than $\sum_{v\in V}|V_{u}^{i}\cap V_{v}^{i}|$ operations in total.
Meanwhile, for the potential function, the term $2\sum_{e\in E_t}X_t$ increases for $2\sum_{v\in V}|V_{u}^{i}\cap V_{v}^{i}|$, as this the number of edges change from uncovered to covered. 
The term $\sum_{v\in V_{R}}Y_{v}$ increases for ($|V_{u}^{i}|$ and term $2\sum_{u\in V_{L}}Z_{u}$ will also increase by $2$, as we augment the set $S_i$ by $1$.
Hence, we still have $P(t) - P(t-1) \leq \Phi(t)- \Phi(t-1)$ in this case.
Finally, we note the that algorithm may call {\sc revoke} multiple times upon the arrival of a new edge, and for each call, we call do perform similary analysis as (2.2).
Hence, we conclude that $P(t + 1) - P(t) \leq \Phi(t) - \Phi(t-1)$ holds for all $t$. We conclude the proof here.\qedhere

\end{proof}

\section{Missing proof from Section~\ref{sec:hard}}
\label{sec:hard-app}

\begin{proof}[Proof of Theorem~\ref{thm:hard-ic}] 

We assume $k = 1$ in our reduction.
Let $m = n^{o(1)}$, $t=2^{(\log n)^{1-o(1)}}$.
Given an instance $\mathcal{A}$, $\mathcal{B}$ of the problem in Theorem~\ref{thm:inner}, we reduce it to the dynamic influence maximization problem.
We assume $|B_\tau| \geq t$ for all $\tau \in [n]$ as we can duplicate the ground element for $t$ times.
Consider the following influence graph $G = (V, E, p)$, where the node set $V$ are partitioned into $V = V_1\cup V_2\cup V_3$.
There are $n$ nodes in $V_1$, denoted as $v_{1, 1}, \ldots, v_{1,n}$. Intuitively, the $i$-th node corresponds to the set $A_i \in \mathcal{A}$. 
The set $V_2$ contains  $m$ nodes, denoted as $v_{2,1}, \ldots, v_{2,m}$. Intuitively, they correspond to the ground set $[m]$.
For any node $v_{1, i} \in V_1$ and node $v_{2, j} \in V_2$, there is a directed edge from $v_{1,i}$ to $v_{2,j}$, iff the $j$-th element is contained in the set $A_1$. We associate the influence probability $1$ to every edge between $V_1$ and $V_2$.
The set $V_3$ contains $m^2t$ nodes, denoted as $\{V_{3,j, \ell}\}_{j \in [m], \ell \in [mt]}$. There is a directed edge with influence probability $1$ from node $v_{2, j}$ to $v_{3, j, \ell}$, for each $j \in [m], \ell \in [mt]$.

Consider the following update sequence of the DIM problem. The graph $G$ is loaded first and then followed by $n$ consecutive epochs.
In the $\tau$-th epoch, all edges between $V_2$ and $V_3$ are deleted, and for each $j \in B_\tau$, we add back the edge between $v_{2, j}$ and $v_{3, j, \ell}$ for all $\ell \in [mt]$.

We first calculate the total number of updates. 
It takes $n + m + mt = n^{1+o(1)}$ steps to insert all nodes in $V$ and takes at most $mn + mt = n^{1+o(1)}$ to insert all edges in $E$.
We delete/insert at most $m^2 t$ edges in each epoch, and since there are $n$ epochs, the total number operations are bounded by $nm^2t = n^{1-o(1)}$.
Hence, the total number of updates is at most $n^{1+o(1)}$.


Suppose on the contrary, there exists an algorithm for DIM problem that achieves $2/t$-approximation in $n^{1-\eps}$ time, we then derive a contradiction to SETH.
Under the above reduction, we output YES, if for some epoch $\tau \in [n]$, the DIM algorithm outputs a solution with influence spread greater than $2m|B_\tau|$. We output NO otherwise. Note the influence of a node can be computed in $m = n^{o(1)}$ times. 

{\bf Completeness}. Suppose there exists $A_i \in \mathcal{A}$, $B_\tau\in \mathcal{B}$ such that $B_\tau \subseteq A_i$. Then in the $\tau$-th epoch, by taking node $v_{1, i}$ in the seed set, the influence spread at least $(mt+1)|B_{\tau}| + 1$. Since the DIM algorithm gives $2/t$-approximation, the influence is greater than $2m|B_\tau|$ in this case. Hence, we indeed output YES.

{\bf Soundness}. Suppose $|A_i\cap B_\tau| < |B_\tau| /t$ for any $i, \tau \in [n]$, then we prove the influence spread is no more than $2m|B_\tau|$ for any epoch. 
This is clearly true for nodes in $V_2$ and $V_3$, as their influence is no more $mt + 1 < 2|B_\tau|m$. Here, we use the fact that $|B_\tau| \geq t$. 
For nodes in $V_1$, since the intersection of $A_i$ and $B_\tau$ is less than $|B_\tau|/t$, and a node $v_{2, j} \in V_2$ has influence $1 + mt$ if $j\in B_{\tau}$ and it has influence $1$ otherwise. We conclude for any node $v_{1, i}$, its influence is at most 
\[
1 + m + \frac{1}{t}|B_\tau|mt = 1 + m + |B_\tau|m\leq 2|B_\tau|m.
\]
Hence, we output NO in this case.

In summary, the reduced DIM requires $n^{1+o(1)}$ updates and queries and it gives an answer for the problem in Theorem~\ref{thm:inner}. Hence, we conclude under SETH, there is no $2/t$-approximation algorithm unless the amortized running time is $n^{1- \eps}$.
\end{proof}

\begin{proof}[Proof of Theorem~\ref{thm:hard-lt}] 
The reduction is similar to the one in Theorem~\ref{thm:hard-lt}.
Let $m = n^{o(1)}$, $t=2^{(\log n)^{1-o(1)}}, k=1$. 
Given an instance $\mathcal{A}$, $\mathcal{B}$ of the problem in Theorem~\ref{thm:inner}, we assume $|B_\tau| \geq t$ for all $\tau \in [n]$
Consider the following influence graph $G = (V, E, w)$, where the node set $V$ are partitioned into $V = V_1\cup V_2\cup V_3 \cup V_4$.
There are $n$ nodes in $V_1$, denoted as $v_{1, 1}, \ldots, v_{1,n}$ and there are $m$ nodes in $V_2$, denoted as $v_{2,1}, \ldots, v_{2,m}$. 
For any node $v_{1, i} \in V_1$ and node $v_{2, j} \in V_2$, there is a directed edge from $v_{1,i}$ to $v_{2,j}$, iff the $j$-th element is contained in the set $A_1$. We associate the weight to be $1/n$ to every edge between $V_1$ and $V_2$.
The set $V_3$ contains $m^2t$ nodes, denoted as $\{V_{3,j, \ell}\}_{j \in [m], \ell \in [mt]}$. There is a directed edge with weight $1$ from node $v_{2, j}$ to $v_{3, j, \ell}$, for each $j \in [m], \ell \in [mt]$.
The set $V_4$ contains $nm^2t$ nodes, denoted as $\{V_{4,j, \ell, b}\}_{j \in [m], \ell \in [mt], b\in [n]}$. There is a directed edge with weight $1$ from node $v_{3, j, \ell}$ to $v_{3, j, \ell, b}$, for each $j \in [m], \ell \in [mt], b\in [n]$.
We assume the prescribed set is $V_1$, that is, we are only allowed to select seeds from $V_1$.

We use the same update sequence of the DIM problem. The graph $G$ is loaded first and then followed by $n$ consecutive epochs.
In the $\tau$-th epoch, all edges between $V_2$ and $V_3$ are deleted, and for each $j \in B_\tau$, we add back the edge between $v_{2, j}$ and $v_{3, j, \ell}$ for all $\ell \in [mt]$.

The total number of updates is still at most $n^{1+o(1)}$, as the total number of edges between $V_3$ and $V_4$ is at most $nmt^2 = n^{1+o(1)}$ and we only insert them once.
Suppose on the contrary, there exists an algorithm for DIM problem that achieves $2/t$-approximation in $n^{1-\eps}$ time, we then derive a contradiction to SETH.
Under the above reduction, we output YES, if for some epoch $\tau \in [n]$, the DIM algorithm outputs a solution with influence spread greater than $2m|B_\tau|$. We output NO otherwise. Again, the influence of a node can be computed in $m = n^{o(1)}$ times. 

{\bf Completeness}. Suppose there exists $A_i \in \mathcal{A}$, $B_\tau\in \mathcal{B}$ such that $B_\tau \subseteq A_i$. Then in the $\tau$-th epoch, by taking node $v_{1, i}$ in the seed set, the influence spread at least $\frac{1}{n}|B_{\tau}|\cdot nmt + 1 = |B_{\tau}|mt +1$. Since the DIM algorithm gives $2/t$-approximation, the influence is greater than $2m|B_\tau|$ in this case. Hence, we indeed output YES.

{\bf Soundness}. Suppose $|A_i\cap B_\tau| < |B_\tau| /t$ for any $i, \tau \in [n]$, then we prove that no node in $V_1$ has influence spread more than $2m|B_\tau|$, in any epoch. 
Since the intersection of $A_i$ and $B_\tau$ is less than $|B_\tau|/t$, and a node $v_{2, j} \in V_2$ has influence $1 + mt + mtn$ if $j\in B_{\tau}$ and it has influence $1$ otherwise. We conclude for any node $v_{1, i}$, its influence is at most 
\[
1 + m + \frac{1}{t}|B_\tau| \cdot  \frac{1}{n}(1+mt + nmt) < 2|B_\tau|m.
\]
Hence, we output NO in this case.

In summary, the reduced DIM requires $n^{1+o(1)}$ updates and queries, and it gives an answer for the problem in Theorem~\ref{thm:inner}. Hence, we conclude under SETH, there is no $2/t$-approximation algorithm unless the amortized running time is $n^{1- \eps}$.
\end{proof}

\end{document}